\documentclass[aps,prx,amsmath,twocolumn,nofootinbib,longbibliography]{revtex4-2}
\pdfoutput=1
\usepackage[utf8]{inputenc}

\usepackage{times}
\usepackage{microtype}
\usepackage{graphicx}
\usepackage[dvipsnames]{xcolor}
\usepackage{amssymb,amsmath,bm,mathtools}
\usepackage{physics}
\usepackage{tensor}
\usepackage{adjustbox}
\usepackage{booktabs}
\usepackage{tabularx}
\usepackage{braket}
\usepackage{multirow,makecell}
\usepackage{enumitem}
\usepackage{etoolbox}
\usepackage{mathtools}
\usepackage{titlesec}
\usepackage{euscript} 
\usepackage{fontawesome}
\usepackage{dsfont}
\usepackage{pifont}
\usepackage{braket}
\usepackage{theoremref}
\usepackage{amsthm}

\newcommand{\R}{{\rho}}
\renewcommand{\S}{{\sigma}}
\newcommand{\E}{{\mathcal{E}}}
\newcommand{\GX}{{G_{\!X}}}
\renewcommand{\[}{\begin{equation}}
	\renewcommand{\]}{\end{equation}}

\newtheorem{theorem}{Theorem}
\newtheorem{lemma}[theorem]{Lemma}

\newtheorem{conjecture}[theorem]{Conjecture}

\makeatletter
\newcommand*{\defeq}{\mathrel{\rlap{%
			\raisebox{0.3ex}{$\m@th\cdot$}}%
		\raisebox{-0.3ex}{$\m@th\cdot$}}%
	=}
\newcommand*{\eqdef}{=\mathrel{\rlap{%
			\raisebox{0.3ex}{$\m@th\cdot$}}%
		\raisebox{-0.3ex}{$\m@th\cdot$}}%
}
\makeatother

\makeatletter
\g@addto@macro\bfseries{\boldmath}
\makeatother

\usepackage{etoolbox}

\makeatletter
\def\thesubsection{\thesection.\arabic{subsection}}
\def\p@subsection{}
\makeatother

\titleformat{\section}[hang]
{\normalfont\fontsize{10.5pt}{12pt}\selectfont\bfseries}
{\thesection.}{0.75em}
{\raggedright}

\titleformat{\subsection}[hang]
{\normalfont\normalsize\bfseries}
{\hspace{1em}\thesubsection.}{0.5em}
{\hspace{0pt}\raggedright}

\titleformat{\subsubsection}[hang]
{\normalfont\normalsize\bfseries}
{\thesubsubsection.}{1em}{\raggedright}

\titlespacing*{\section}{0pt}{4mm}{1mm}
\titlespacing*{\subsection}{0pt}{3mm}{0.5mm}
\titlespacing*{\subsubsection}{0pt}{2.25mm}{0.25mm}

\apptocmd{\appendix}{%
	\makeatletter
	\renewcommand{\thesubsection}{\thesection.\arabic{subsection}}%
	\makeatother
	\titleformat{\section}[display]%
	{\normalfont\normalsize\bfseries}%
	{Appendix \thesection}%
	{0.3em}%
	{\raggedright}%
	\titlespacing*{\section}{0pt}{3mm}{2.2mm}%
}{}{}

\makeatletter
\def\l@f@section{%
	\addpenalty{\@secpenalty}%
	\addvspace{0.60em plus 0.10em}%
}
\def\l@@sections#1#2#3#4{%
	\begingroup
	\everypar{}%
	\set@tocdim@pagenum\@tempboxa{#4}%
	\global\@tempdima\csname tocdim@#2\endcsname
	\@nameuse{l@f@#2}%
	\setbox\z@\vbox{%
		\hsize\linewidth
		\leftskip\csname tocleft@#2\endcsname\relax
		\dimen@\csname tocleft@#1\endcsname\relax
		\parindent-\leftskip\advance\parindent\dimen@
		\rightskip\tocleft@pagenum plus 1fil\relax
		\skip@\parfillskip\parfillskip\z@
		\let\numberline\numberline@@sections
		\ignorespaces#3\unskip\par
	}%
	\setbox\tw@\vbox to \ht\z@{%
		\vfil
		\hb@xt@\tocleft@pagenum{\hfil\unhbox\@tempboxa}%
		\vfil
	}%
	\noindent
	\rlap{\hb@xt@\linewidth{\hfil\box\tw@}}%
	\box\z@
	\par
	\expandafter\aftergroup\csname tocdim@#2%
	\expandafter\endcsname
	\expandafter\endgroup
	\the\@tempdima\relax
}
\makeatother

\makeatletter
\renewcommand{\footnoterule}{%
	\kern -2pt
	\hrule width .10\columnwidth height .4pt
	\kern 2pt
}
\makeatother

\makeatletter
\newif\ifappendixtocentries
\appendixtocentriesfalse

\newcommand{\appendixtoc}{%
	\begingroup
	\setcounter{tocdepth}{2}%
	\par\medskip
	\noindent\textbf{Contents}\par\nobreak
	\@starttoc{atoc}%
	\par\medskip
	\endgroup
}

\newcommand*\l@appsubsection{\@dottedtocline{2}{1.5em}{3.4em}}
\let\KM@orig@subsection\subsection
\renewcommand{\subsection}{%
	\@ifstar{\KM@orig@subsection*}{\@dblarg\KM@appsubsection}%
}

\def\KM@appsubsection[#1]#2{%
	\KM@orig@subsection[#1]{#2}%
	\ifappendixtocentries
	\addcontentsline{atoc}{appsubsection}{\protect\numberline{\thesubsection}#1}%
	\fi
}
\makeatother

\usepackage{scalerel}
\usepackage{tikz}
\usetikzlibrary{arrows.meta,positioning,fit,calc}
\usetikzlibrary{svg.path}
\usetikzlibrary{decorations.text}
\definecolor{orcidlogocol}{HTML}{A6CE39}
\tikzset{
	orcidlogo/.pic={
		\fill[orcidlogocol] svg{M256,128c0,70.7-57.3,128-128,128C57.3,256,0,198.7,0,128C0,57.3,57.3,0,128,0C198.7,0,256,57.3,256,128z};
		\fill[white] svg{M86.3,186.2H70.9V79.1h15.4v48.4V186.2z}
		svg{M108.9,79.1h41.6c39.6,0,57,28.3,57,53.6c0,27.5-21.5,53.6-56.8,53.6h-41.8V79.1z M124.3,172.4h24.5c34.9,0,42.9-26.5,42.9-39.7c0-21.5-13.7-39.7-43.7-39.7h-23.7V172.4z}
		svg{M88.7,56.8c0,5.5-4.5,10.1-10.1,10.1c-5.6,0-10.1-4.6-10.1-10.1c0-5.6,4.5-10.1,10.1-10.1C84.2,46.7,88.7,51.3,88.7,56.8z};
	}
}
\newcommand\orcidlink[1]{\href{https://orcid.org/#1}{\mbox{\scalerel*{
				\begin{tikzpicture}[yscale=-1,transform shape]
					\pic{orcidlogo};
				\end{tikzpicture}
			}{X}}}}

\usepackage{fancyhdr}
\usepackage{soul}
\usepackage[normalem]{ulem}

\definecolor{arxivlink}{RGB}{0,140,220}
\definecolor{doilink}{RGB}{10,30,160}

\usepackage{url,hyperref}
\hypersetup{colorlinks,linkcolor={blue!55!black},citecolor={red!50!black},urlcolor={doilink},breaklinks=true}

\begin{document}
	
	\title{Connecting Quantum Tomography and Quantum Retrodiction}
	
	\author{Sebastian Murk\orcidlink{0000-0001-7296-0420}}
	\email{sebastian.murk@matfyz.cuni.cz}
	\affiliation{Faculty of Mathematics and Physics, Charles University, Ke Karlovu 3, 121 16 Praha 2, Czech Republic}
	\author{Ian Tan\orcidlink{0009-0002-0208-4407}}
	\email{y-e.tan@matfyz.cuni.cz}
	\affiliation{Faculty of Mathematics and Physics, Charles University, Ke Karlovu 3, 121 16 Praha 2, Czech Republic}
	\author{Fabian M\"uller\orcidlink{0009-0007-8159-5710}}
	\email{fabian.muller@matfyz.cuni.cz}
	\affiliation{Faculty of Mathematics and Physics, Charles University, Ke Karlovu 3, 121 16 Praha 2, Czech Republic}
	\author{Dominik \v{S}afr\'anek\orcidlink{0000-0002-6861-395X}}
	\email{dominik.safranek@matfyz.cuni.cz}
	\affiliation{Faculty of Mathematics and Physics, Charles University, Ke Karlovu 3, 121 16 Praha 2, Czech Republic}
	
	\begin{abstract}
		Quantum tomography and quantum retrodiction are traditionally viewed as separate inference tasks: tomography reconstructs quantum states from measurement data, whereas retrodiction infers past quantum states from observed outcomes. We show that the two are manifestations of the same underlying principle. We prove that the Petz recovery map associated with a measurement channel is precisely the gradient update of the log-likelihood used in maximum-likelihood tomography. Consequently, repeated applications of the Petz map monotonically increase the likelihood. Extending beyond measurement channels, we derive a noncommutative generalization of the Petz map from the gradient of a generalized likelihood for arbitrary quantum channels. The resulting iterative procedure maximizes the likelihood and provides a general framework for quantum tomography, establishing a direct bridge between retrodiction, recovery maps, and statistical inference.\\
		
		\vspace{0.2em}
		\noindent\textbf{Keywords:}
		Quantum tomography, quantum retrodiction, maximum likelihood estimation, Petz recovery map, Kubo--Mori geometry.
	\end{abstract}
	
	\maketitle
	\thispagestyle{fancy}
	
	Quantum tomography and quantum retrodiction are very similar in spirit but differ in their mathematical formulation. Quantum tomography seeks to identify the quantum state that most likely generated a set of measurement outcomes and to design measurements that enable this reconstruction as efficiently as possible. Quantum retrodiction, by contrast, seeks to invert a generally non-invertible quantum channel by inferring the input state that would have produced a given observed output state.
	
	In quantum tomography, the task is performed by making measurements and observing either individual outcomes, probability distributions, or expectation values of observables to probe the space of density matrices. Since no single projective measurement can uniquely determine a quantum state, this typically requires measurements in multiple bases.
	There are numerous methods for reconstructing a quantum state from measurement data. These may differ in practice but should converge to the same solution for complete and noise-free data. The two prominent methods are linear inversion~\cite{Vogel.Risken:PhysRevA:1989,Leonhardt:Book:1997,James.etal:PhysRevA:2001,DAriano.Paris.Sacchi:AdvImagingElectronPhys:2003}, which seeks to solve for the quantum state as a solution to a set of linear equations, and maximum-likelihood estimation~\cite{Hradil:PhysRevA:1997,Banaszek.etal:PhysRevA:1999,Jezek.Fiurasek.Hradil:PhysRevA:2003,Paris.Rehacek:Book:2004,Rehacek.etal:PhysRevA:2007,BlumeKohout:PhysRevLett:2010,Bolduc.etal:NPJQuantumInf:2017,Aditi.Becker:PhysRevA:2025,Gaikwad.etal:QuantumSciTechnol:2025}, which aims to find the state that most-likely generated the data. Other methods include Bayesian state~\cite{Blume-Kohout:NewJPhys:2010, HuszarHoulsby:PhysRevA:2012, Kueng.Ferrie:NewJPhys:2015, Granade.Combes.Cory:NewJPhys:2016, Straupe:JetpLett:2016, Lukens.etal:NewJPhys:2020, Chapman.etal:OptExpress:2022}, shadow~\cite{Aaronson:1711.01053,Huang.Kueng.Preskill:NatPhys:2020,Stricker.etal:PRXQuantum:2022,Qin.etal:NPJQuantumInf:2025}, least squares~\cite{Guta.etal:JPhysA:2020}, gentle or weak measurements~\cite{Bennett.etal:PhysRevA:2006,Wu:SciRep:2013}, compressed-sensing~\cite{Gross.etal:PhysRevLett:2010,Flammia.etal:NewJPhys:2012}, matrix product states~\cite{Cramer.etal:NatCommun:2010,Lanyon.etal:NatPhys:2017}, self-guided~\cite{Ferrie:PhysRevLett:2014,Chapman.Ferrie.Peruzzo:PhysRevLett:2016}, and neural network tomography~\cite{Lohani.etal:MachLearnSciTechnol:2020,Torlai.etal:NatPhys:2018,Xin.etal:NPJQuantumInf:2019,Carrasquilla.etal:NatMachIntell:2019,Palmieri.etal:NPJQuantumInf:2020}.
	
	Quantum retrodiction, on the other hand, assumes knowledge of the full output quantum state, rather than only measurement outcomes or a probability distribution, together with the channel that produced it. Its task is then to infer a possible input state that could have led to the observed output. For a unitary evolution, retrodiction is trivial and amounts to evolving the observed state backward in time. For non-invertible channels, however, some information about the input is inevitably lost, so the retrodicted state cannot be uniquely determined. How to assign this missing information remains an open question, with the Petz and the rotated Petz recovery maps representing two prominent mathematically motivated approaches. Their appeal stems from the properties of exact~\cite{Petz:CommunMathPhys:1986,Petz:QuartJMathOxfordSer:1988,Petz:RevMathPhys:2003} and approximate~\cite{Hayden.etal:CommunMathPhys:2004,Fawzi.Renner:CommunMathPhys:2015,Wilde:ProcRSocA:2015,Sutter.Fawzi.Renner:ProcRSocA:2016,Sutter.Tomamichel.Harrow:IEEETransInfTheory:2016,Sutter.Berta.Tomamichel:CommunMathPhys:2017,Junge.etal:AnnHenriPoincare:2018} recoverability. Whenever no information has been lost, as characterized by equality in the data-processing inequality, these maps recover the original state exactly. This is the content of Petz's theorem. When only a small amount of information is lost, the recovered state remains close to the original one.
	
	Moreover, in the commuting case, the Petz map reduces to classical Bayesian updating, and more generally to Jeffrey's rule in the presence of incomplete evidence, which motivates its interpretation as a quantum analogue of Bayes' theorem~\cite{Leifer.Spekkens:PhysRevA:2013,Tsang:PhysRevA:2022,Surace.Scandi:Quantum:2023,Parzygnat.Buscemi:Quantum:2023,Parzygnat.Fullwood:PRXQuantum:2023,Buscemi.Schindler.Safranek:NewJPhys:2023,Aw.etal:PRXQuantum:2023,Scandi.etal:RepProgPhys:2025,Bai.Buscemi.Scarani:PhysRevLett:2025,Liu.etal:PhysRevA:2025,Liu.Bai.Scarani:2510.08447,Liu.Scarani.Bai:PhysRevLett:2026}. Like Bayes' rule, Petz-based retrodiction depends on a prior state that supplies the information lost by the channel, although different recovery maps prescribe different ways of incorporating this prior information.
	
	Both quantum tomography and quantum retrodiction have critical applications in quantum technologies: quantum tomography remains the go-to method for full quantum diagnostics, enabling the characterization and verification of quantum states and processes in quantum computers, sensors, and communication devices~\cite{Chuang.Nielsen:JModOpt:1997,Poyatos.Cirac.Zoller:PhysRevLett:1997,Lvovsky.Raymer:RevModPhys:2009,Degen.Reinhard.Cappellaro:RevModPhys:2017}. This characterization provides information about state-preparation fidelity, gate errors, noise sources, and the security of quantum communication protocols~\cite{James.etal:PhysRevA:2001,Scarani.etal:RevModPhys:2009}. Quantum retrodiction, and the Petz recovery map in particular, plays a central role in quantum error correction and channel reversal. The Petz map is optimal for perfectly correctable codes and remains near-optimal when the code is only approximately correctable~\cite{Barnum.Knill:JMathPhys:2002,Ng.Mandayam:PhysRevA:2010,Gilyen.etal:PhysRevLett:2022,Zheng.etal:PhysRevLett:2024,Chen.etal:2511.05941}, with recent experimental demonstrations~\cite{Li.etal:2606.12020,Png.Scarani:PhysRevA:2025,Song.Kwon.Scarani:2510.26895,Singh.etal:PhysRevA:2026}. More broadly, it has found applications across quantum information theory~\cite{Beigi.Datta.Leditzky:JMathPhys:2016}, quantum thermodynamics~\cite{Aberg:PhysRevX:2018,Kwon.Kim:PhysRevX:2019,Aw.Buscemi.Scarani:AVSQuantumSci:2021,Buscemi.Scarani:PhysRevE:2021,Buscemi.Schindler.Safranek:NewJPhys:2023,Aw.etal:PRXQuantum:2024}, and quantum gravity~\cite{Cotler.etal:PhysRevX:2019,Chen.Penington.Salton:JHEP:2020}.
	
	In this paper, we mathematically connect these two frameworks by extending quantum retrodiction to the multi-channel setting required for tomography and extending quantum tomography beyond quantum-to-classical channels, which represent measurements, to arbitrary quantum channels. We show that the gradient of the log-likelihood is represented by a quantum channel, which we call the Kubo--Mori (KM) update. The KM update coincides with the Petz recovery map when the predicted output commutes with the observed evidence, but is generally distinct otherwise. As a special case, we show that the Petz map itself coincides with the gradient of the standard log-likelihood. Like the Petz recovery map, the KM update generalizes quantum Bayes' theorem, but unlike it, arises directly from the inference problem. We establish local monotonicity properties and prove global monotonicity and convergence in the commuting-output case, while providing evidence for the general case. Finally, we use these results to develop and implement tomographic protocols based on iterating the Petz recovery map and the KM update for standard and general-channel quantum tomography, respectively.
	
	\section{Log-Likelihood gradient as Kubo--Mori update} \label{sec:KM.update}	
	\paragraph*{Mathematical preliminaries.}
	We consider the setting of quantum tomography. The goal is to infer an unknown input state $\R$ from measurement data obtained through a known quantum channel $\mathcal{E}$. Since $\R$ is not directly accessible, only its image under $\mathcal{E}$ can be estimated from data. We therefore write
	\begin{align}\label{eq:X_and_Y_defs}
		X \geqslant 0,
		\qquad
		Y \defeq \mathcal{E}(\sigma) > 0,
	\end{align}
	where $X$ denotes the evidence, represented by an estimator of the true output state $\mathcal{E}(\rho)$, while $Y$ denotes the prediction corresponding to a candidate input state $\sigma$. We assume that conditions~\eqref{eq:X_and_Y_defs} hold throughout the paper.
	
	We consider the problem of finding the state $\sigma$ that most likely generated the data $X$, formulated as maximization of the log-likelihood functional
	\begin{equation}
		\mathcal{L}_X(\sigma)
		\defeq
		\Tr\!\big[X \ln \mathcal{E}(\sigma)\big].
		\label{def:log-likelihood}
	\end{equation}
	Equivalently, this amounts to minimizing the Umegaki quantum relative entropy $D(X||\mathcal{E}(\sigma))$, which quantifies the distinguishability between quantum states and has an operational interpretation in quantum hypothesis testing through quantum Stein’s lemma~\cite{Umegaki:KodaiMathSemRep:1962,Hiai.Petz:CommunMathPhys:1991,Ogawa.Nagaoka:IEEETransInfTheory:2000,Nielsen.Chuang:Book:2010}. For the quantum-to-classical (i.e., measurement) channel, 
	\begin{equation}
		\label{eq:qc.channel}
		\E_{\text{qc}}(\sigma) = \sum\nolimits_i \Tr[\Pi_i \sigma] \ketbra{i}{i},
	\end{equation}
	which maps a quantum state to measurement outcomes encoded in a classical register, the log-likelihood functional \eqref{def:log-likelihood} reduces to the standard form
	\begin{equation}
		\mathcal{L}_{\text{qc}}(\sigma) = \sum\nolimits_i \hat{p}_i \ln \Tr[\Pi_i \sigma]
		\label{eq:log-likelihood.qc.channel}
	\end{equation}
	used in maximum-likelihood quantum tomography, where $\hat{p}_i=N_i/N$ are relative measured frequencies, $\{\Pi_i\}$ is a set of POVM elements, and $\{\ket{i}\}$ is an orthonormal basis of the classical register.
	
	\paragraph*{Log-likelihood gradient.}
	The KM superoperator is defined as~\cite{Kubo.Tomita:JPhysSocJpn:1954,Mori:ProgTheorPhys:1965,Petz.Toth:LettMathPhys:1993,Petz:JMathPhys:1994,Petz:LinearAlgebraAppl:1996}
	\begin{align}
		\Omega_Y(X) \defeq \int_0^1 Y^t X\, Y^{1-t}\,dt.
		\label{def:KM.superoperator}
	\end{align}
	To understand how small changes in $\sigma$ affect the log-likelihood functional \eqref{def:log-likelihood}, we consider its Fréchet derivative
	\begin{equation}
		D \mathcal{L}_X(\sigma) [\Delta] = \Tr\!\big[ X\, \Omega^{-1}_Y\!\big(\mathcal{E}(\Delta)\big) \big].
		\label{eq:Fréchet.derivative}
	\end{equation}
	It can be interpreted as the directional derivative of $\mathcal{L}_X$ at $\sigma$ in the direction $\Delta$. The Fréchet derivative of the matrix logarithm at $Y$ is the inverse KM action $\Omega_Y^{-1}$.\footnote{Equivalently, if $f(A) \defeq \ln A$, then the Fréchet derivative of $f$ at $Y$ is the linear map $D f(Y) \eqdef \Omega^{-1}_Y$, i.e., $Df(Y)[X]=\Omega^{-1}_Y\!(X)$ \cite{Higham:Book:2008}.}\
	
	The gradient is the Riesz representation of the Fréchet derivative with respect to a chosen inner product~\cite{Conway:Book:2007}. We consider two inner products: the Hilbert--Schmidt inner product and, for a full-rank state $\sigma$, the inverse-square-root metric
	\begin{equation}
		\textsl{g}_\sigma(A,B) \defeq \Tr\!\big[\sigma^{-1/2} A^\dagger \sigma^{-1/2} B\big],
		\label{def:inverse.square.root.metric}
	\end{equation}
	which is a monotone metric corresponding to $f=\sqrt{\cdot}$ \cite{Petz:LinearAlgebraAppl:1996}. In the commuting case, the latter reduces to the classical Fisher metric $\textsl{g}_p(a,b) \defeq \sum_i \frac{a_i b_i}{p_i}$. Since $\Omega^{-1}_Y$ is self-adjoint with respect to the Hilbert--Schmidt inner product, and $\mathcal{E}^\dagger$ is the Hilbert--Schmidt adjoint of $\mathcal{E}$, we have
	\begin{equation}
		D \mathcal{L}_X(\sigma)[\Delta] = \Tr\left[\GX(\sigma)\, \Delta\right] = \textsl{g}_\sigma\!\,\big(G_{\!X}^\sigma(\sigma),\Delta\big),
		\label{eq:Fréchet.derivative.gradient}
	\end{equation}
	where the two gradients with respect to $\sigma$ for a fixed output operator $X$ are defined as
	\begin{equation}
		\GX(\sigma) \defeq \mathcal{E}^{\dagger}\!\big[\Omega^{-1}_{\mathcal{E}(\sigma)}\!(X)\big],
		\quad 
		G_{\!X}^\sigma(\sigma) \defeq \sqrt{\sigma}\, \GX(\sigma) \sqrt{\sigma}. 
		\label{def:gradient}
	\end{equation}
	
	A full-rank stationary point of $\mathcal{L}_X$ on the trace-one manifold, such as a maximizer of the log-likelihood, satisfies
	\begin{equation}
		D\mathcal{L}_X(\S)[\Delta]=0,\quad\text{ for all }\;\; \Delta=\Delta^\dagger,\; \Tr[\Delta]=0.
	\end{equation}
	Using \eqref{eq:Fréchet.derivative}, this is equivalent to
	\begin{equation}
		\label{eq:stationary.condition}
		\GX(\S)=\lambda \mathds{1},
	\end{equation}
	for some Lagrange multiplier $\lambda\in\mathbb{R}$. Thus \eqref{eq:stationary.condition} is the Euler--Lagrange equation for the trace-one constrained optimization problem. Taking the trace while using that both $X$ and $\sigma$ are density matrices implies $\lambda=1$. If $\sigma$ is full rank, then Eq.~\eqref{eq:stationary.condition} is equivalent to $G_{\!X}^\sigma(\sigma)=\sigma$. This can be proven by multiplying the equation by $\sqrt{\sigma}$ and $\sigma^{-1/2}$ from both sides to get the right and left implications, respectively. 
	
	\paragraph*{KM update as the direction of maximal log-likelihood increase.}
	Let us define the line-search update function
	\begin{equation}
		\sigma_\eta = \sigma + \eta\Delta,
		\qquad 
		f(\eta) \defeq \mathcal{L}_X(\sigma_\eta).
	\end{equation}
	From the definition of the Fréchet derivative, we have
	\begin{equation}
		f'(0) = D\mathcal{L}_X(\sigma)[\Delta] = \textsl{g}_\sigma\big(G_{\!X}^\sigma(\sigma) - \sigma,\Delta\big),
		\label{eq:directional.derivative.gradient}
	\end{equation}
	where $\Tr[\Delta]=0$ allows us to make the first argument tangent by adding a $-\sigma$ term. 
	In the inverse-square-root metric \eqref{def:inverse.square.root.metric}, the steepest-ascent direction on the trace-one manifold is given by the Riemannian gradient,
	\begin{equation}
		\Delta = G_{\!X}^\sigma(\sigma)-\sigma.
		\label{eq:delta.gradient}
	\end{equation}
	Substituting this choice into \eqref{eq:directional.derivative.gradient} gives
	\begin{equation}
		f'(0) = \textsl{g}_\sigma(\Delta,\Delta) = \|\Delta\|_{\textsl{g}_\sigma}^2 \geqslant 0.
	\end{equation}
	For full-rank $\sigma$, equality holds iff $\Delta=0$, i.e., iff the stationary condition \eqref{eq:stationary.condition} is satisfied with $\lambda=1$. 
	
	Using \eqref{eq:delta.gradient}, the line search is given by
	\begin{equation}
		\sigma_\eta = (1-\eta)\,\sigma + \eta\,\sigma_+,
		\label{eq:update_dm}
	\end{equation}
	where $\sigma_+=G_{\!X}^\sigma(\sigma)$ denotes the updated density matrix corresponding to $\eta=1$, in which case the update becomes purely multiplicative.
	
	The prescription $X\to\sigma_+$ defines a quantum channel,
	\begin{align}
		\mathcal{R}^{\text{KM}}_{\sigma,\mathcal{E}}(X)
		\defeq
		G_{\!X}^\sigma(\sigma)
		=
		\sqrt{\sigma}\,
		\mathcal{E}^{\dagger}\!\big[\Omega_Y^{-1}\!(X)\big]
		\sqrt{\sigma}.
		\label{def:KM.update}
	\end{align}
	We call this map the KM update.
	
	\paragraph*{Relation between the KM update and the Petz recovery map.}
	The Petz recovery map is defined as~\cite{Petz:QuartJMathOxfordSer:1988},
	\begin{align}
		\mathcal{R}_{\sigma,\mathcal{E}}(X)
		\defeq
		\sqrt{\sigma}\,
		\mathcal{E}^{\dagger}\!\big[
		Y^{-1/2}XY^{-1/2}
		\big]
		\sqrt{\sigma},
		\label{def:Petz.update}
	\end{align}
	with $X$ and $Y$ as in Eq.~\eqref{eq:X_and_Y_defs}.
	We have the following relation.
	\begin{theorem} \label{thm:KM.update}
		The KM update can be expressed as,
		\begin{equation}
			\mathcal{R}^{\text{KM}}_{\sigma,\mathcal{E}}(X) = 
			\sqrt{\sigma}\, \mathcal{E}^{\dagger}\!\Big[ k(K_Y) \big[Y^{-1/2} X Y^{-1/2}\big]\Big] \sqrt{\sigma},
		\end{equation}
		where $K_Y \defeq [\ln Y,\cdot]$ encodes the noncommutativity between the evidence $X$ and the prediction $Y$, 
		and
		\begin{equation}
			k(\omega) = \frac{\omega}{2\sinh(\omega/2)} = 1 - \frac{\omega^2}{24} + \mathcal{O}(\omega^4).
			\label{eq:KM.scalar.filter}
		\end{equation}
	\end{theorem}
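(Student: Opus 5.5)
The plan is to show that the operator identity
\begin{equation}
\Omega_Y^{-1}(X) = k(K_Y)\big[Y^{-1/2}XY^{-1/2}\big]
\end{equation}
holds for every $X$. The theorem then follows immediately by applying $\mathcal{E}^\dagger$ and conjugating with $\sqrt{\sigma}$ as in the definition \eqref{def:KM.update}. All three superoperators involved ($\Omega_Y$, the sandwich $X\mapsto Y^{-1/2}XY^{-1/2}$, and $K_Y$) are functions of the left and right multiplication operators $L_Y$ and $R_Y$. These commute and are simultaneously diagonalizable, since $Y>0$. So I will verify the identity in the eigenbasis of $Y$.

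Write $Y=\sum_j y_j\ket{j}\bra{j}$ with $y_j>0$, and set $\ell_j=\ln y_j$. On the matrix unit $\ket{i}\bra{j}$ the operators act as follows:
\begin{itemize}
\item $\Omega_Y$ multiplies by $\int_0^1 y_i^t y_j^{1-t}\,dt = \dfrac{y_i-y_j}{\ln y_i-\ln y_j}$, the logarithmic mean (interpreted as $y_i$ when $i=j$);
\item the sandwich multiplies by $(y_iy_j)^{-1/2}$;
\item $K_Y=[\ln Y,\cdot]$ multiplies by $\omega_{ij}\defeq \ell_i-\ell_j$.
\end{itemize}
Because the three operators are simultaneously diagonal, $k(K_Y)$ is defined by the functional calculus and multiplies by $k(\omega_{ij})$.

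The identity therefore reduces to the scalar equation
\begin{equation}
\frac{\ell_i-\ell_j}{y_i-y_j}=\frac{k(\omega_{ij})}{\sqrt{y_iy_j}} .
\end{equation}
To check it, write $y_i=e^{\ell_i}$ and $y_j=e^{\ell_j}$. Then
\begin{equation}
y_i-y_j=e^{(\ell_i+\ell_j)/2}\big(e^{\omega/2}-e^{-\omega/2}\big)=2\sqrt{y_iy_j}\,\sinh(\omega/2),
\end{equation}
which gives
\begin{equation}
\frac{\omega}{y_i-y_j}=\frac{1}{\sqrt{y_iy_j}}\cdot\frac{\omega}{2\sinh(\omega/2)}.
\end{equation}
This is exactly the scalar equation above. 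The diagonal case $\omega=0$ is covered by continuity, since $k(0)=1$ and $\Omega_Y$ acts there by $y_i$. Summing over all matrix units then gives the operator identity.

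Finally, the series in \eqref{eq:KM.scalar.filter} follows from $\sinh(x)=x+x^3/6+\dots$. This gives $2\sinh(\omega/2)=\omega\big(1+\omega^2/24+\mathcal{O}(\omega^4)\big)$, and inverting yields $k(\omega)=1-\omega^2/24+\mathcal{O}(\omega^4)$.

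The only real obstacle is bookkeeping rather than analysis. One must justify treating $K_Y$, $\Omega_Y$ and the sandwich as commuting functions of $(L_Y,R_Y)$, so that composing them reduces to multiplying their symbols. One must also check that $k$ is entire in $\omega$ near the real axis: its only singularities are at $\omega\in 2\pi i\mathbb{Z}\setminus\{0\}$, while the spectrum of $K_Y$ is real. This makes $k(K_Y)$ well defined.
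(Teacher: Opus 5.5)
Your proposal is correct and follows essentially the same route as the paper's proof in App.~\ref{app:subsec:KM.modular.form}. Both diagonalize $Y$ and note that the matrix units $\ketbra{i}{j}$ are simultaneous eigenoperators of $K_Y$, $\Omega_Y$ and the sandwich $Y^{-1/2}(\cdot)Y^{-1/2}$, then reduce the identity to $y_i-y_j=2\sqrt{y_iy_j}\,\sinh(\omega_{ij}/2)$, treating the degenerate case by continuity. One small wording point: $k$ is not entire but analytic on the strip $|\mathrm{Im}\,\omega|<2\pi$. That is all you need, since $K_Y$ is Hilbert--Schmidt self-adjoint with real spectrum.
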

	Thus, the Petz map is the zeroth-order approximation to the KM update. Since $k$ is an even function, the leading noncommutative correction is quadratic in $K_Y$. This is why, for weak noncommutativity, the Petz map and the KM update nearly coincide. Additional properties of the KM update are provided in App.~\ref{app:subsec:KM.modular.form} and \ref{app:subsec:KM.update.properties}.
	The underlying geometric intuition is illustrated in Fig.~\ref{fig:KM.update.schematic}. All proofs are provided in the \hyperref[app:supplementary.material]{Appendix}.
	
	\paragraph*{Monotonicity.}
	The update in the density matrix~\eqref{eq:update_dm} is in the direction of the maximum log-likelihood increase, which guaranties that the log-likelihood will increase locally. We derive an exact quantitative bound showing this.
	
	\begin{theorem}\label{minimum_increase}
		The minimal and maximal guaranteed increase in the log-likelihood is given by,
		\begin{equation}
			-\ln\!\big(1\!-\! \eta \textsl{g}_{\sigma}\!(\Delta_{\sigma_\eta}^\sigma,\Delta)\big) \leqslant \mathcal{L}_X(\sigma_\eta) \!-\! \mathcal{L}_X(\sigma) \leqslant \eta \textsl{g}_{\sigma}(\Delta,\Delta),
		\end{equation}
		where $\!\Delta_{\sigma_\eta}^\sigma \!\!\!\defeq\!\!\! G_{\!X}^\sigma(\sigma_\eta) \!\!-\!\! \sigma$, $\!\Delta \!\!\!=\!\!\! \Delta_{\sigma}^\sigma$, and $\! G_{\!X}^\sigma(\sigma_\eta) \!\!=\!\! \sqrt{\sigma} \mathcal{E}^\dagger[\Omega_{\mathcal{E}(\sigma_\eta)}^{-1}\!(X)] \sqrt{\sigma}$. For non-invertible $\sigma$, the expressions evaluate by replacing $\sigma^{1/2}\sigma^{-1/2} \!\to\! \mathds{1}$, which leads to $-\ln \Tr\!\big[X\Omega_{\mathcal{E}(\sigma_\eta)}^{-1}\!(Y)\big]$ on the left-hand side.
	\end{theorem}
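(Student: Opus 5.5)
The plan is to study the one-parameter function $f(\eta)=\mathcal{L}_X(\sigma_\eta)$ along the segment \eqref{eq:update_dm}. The upper bound follows from concavity of $f$. The lower bound follows from a rescaled tangent-line inequality for the matrix logarithm. Throughout, write $Y=\mathcal{E}(\sigma)$ and $Y_\eta=\mathcal{E}(\sigma_\eta)=Y+\eta\,\mathcal{E}(\Delta)$. We restrict to those $\eta$ for which $Y_\eta>0$, so that $\Omega^{-1}_{Y_\eta}$ and $\ln Y_\eta$ are defined.

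\emph{Upper bound.} The map $A\mapsto\ln A$ is operator concave on positive definite matrices. The map $\eta\mapsto Y_\eta$ is affine, and $X\geqslant 0$. Hence $f(\eta)=\Tr[X\ln Y_\eta]$ is concave in $\eta$, and
\begin{equation*}
f(\eta)\leqslant f(0)+\eta f'(0).
\end{equation*}
By \eqref{eq:directional.derivative.gradient} and \eqref{eq:delta.gradient}, $f'(0)=\textsl{g}_\sigma(\Delta,\Delta)$, which gives the right-hand inequality directly.

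\emph{Rewriting the left-hand side.} I will first show that $1-\eta\,\textsl{g}_\sigma(\Delta^\sigma_{\sigma_\eta},\Delta)=\Tr[X\,\Omega^{-1}_{Y_\eta}(Y)]$. Unfolding $\textsl{g}_\sigma$ and using $\Tr\Delta=0$ gives
\begin{equation*}
\textsl{g}_\sigma\big(G^\sigma_{\!X}(\sigma_\eta)-\sigma,\Delta\big)=\Tr\big[\mathcal{E}^\dagger[\Omega^{-1}_{Y_\eta}(X)]\,\Delta\big]=\Tr\big[X\,\Omega^{-1}_{Y_\eta}(\mathcal{E}(\Delta))\big].
\end{equation*}
Here we used self-adjointness of $\Omega^{-1}_{Y_\eta}$ with respect to the Hilbert--Schmidt inner product. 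Now substitute $\eta\,\mathcal{E}(\Delta)=Y_\eta-Y$. Since $\Omega^{-1}_{Y_\eta}(Y_\eta)=\mathds{1}$ and $\Tr X=1$, we get $\eta\,\textsl{g}_\sigma(\Delta^\sigma_{\sigma_\eta},\Delta)=1-\Tr[X\,\Omega^{-1}_{Y_\eta}(Y)]$.

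In this form the factors $\sigma^{\pm1/2}$ have disappeared entirely. This is exactly the stated expression for non-invertible $\sigma$, once $\sigma^{1/2}\sigma^{-1/2}\to\mathds{1}$ is used. It therefore suffices to prove
\begin{equation*}
\Tr\big[X(\ln Y-\ln Y_\eta)\big]\leqslant \ln T,\qquad T\defeq\Tr\big[X\,\Omega^{-1}_{Y_\eta}(Y)\big].
\end{equation*}

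\emph{Lower bound (main step).} Operator concavity of $\ln$ gives the operator tangent-line inequality at $B=Y_\eta$. For any $A>0$,
\begin{equation*}
\ln A\leqslant \ln B+\Omega^{-1}_B(A-B),
\end{equation*}
because $\Omega_B^{-1}$ is the Fréchet derivative of $\ln$ at $B$. The naive choice $A=Y$ only yields the bound $T-1$, which is weaker than $\ln T$. So I will exploit scale freedom: apply the inequality to $A=cY$ for an arbitrary $c>0$ and use $\ln(cY)=\ln Y+\ln c$. Tracing against $X\geqslant 0$ and using $\Tr[X\,\Omega^{-1}_B(B)]=1$ gives
\begin{equation*}
\Tr\big[X(\ln Y-\ln Y_\eta)\big]\leqslant cT-1-\ln c\qquad\text{for all } c>0.
\end{equation*}

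The left side is finite for all $c$, so letting $c\to\infty$ forces $T>0$. The right side is then minimized at $c=1/T$, where it equals $\ln T$. This yields the left-hand inequality, with the argument of the logarithm automatically positive.

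The points needing care are the following.
\begin{itemize}
\item Justifying the operator (not merely trace) tangent inequality. This is standard for operator concave functions, via the chord inequality $\ln(B+s(A-B))\geqslant(1-s)\ln B+s\ln A$ and the limit $s\to0$.
\item Ensuring $Y_\eta>0$ on the range of $\eta$ considered.
\end{itemize}
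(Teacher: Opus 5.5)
Your proof is correct. The upper bound and the identity $1-\eta\,\textsl{g}_\sigma(\Delta^\sigma_{\sigma_\eta},\Delta)=\Tr[X\,\Omega^{-1}_{Y_\eta}(Y)]$ are handled exactly as in the paper. For the lower bound, however, you take a genuinely different route.

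The paper writes $\mathcal{L}_X(\sigma_\eta)-\mathcal{L}_X(\sigma)=D(X\Vert Y)-D(X\Vert Y_\eta)$ and inserts the Gibbs variational formula $D(X\Vert Y)=\sup_H\{\Tr[XH]-\ln\Tr[e^{H+\ln Y}]\}$ with the trial operator $H=\ln X-\ln Y_\eta$. This gives the bound $-\ln\Tr\big[e^{\ln X-\ln Y_\eta+\ln Y}\big]$. The paper then invokes Lieb's triple-matrix inequality $\Tr\big[e^{\ln X-\ln Y_\eta+\ln Y}\big]\leqslant\Tr[X\,\Omega^{-1}_{Y_\eta}(Y)]$ to reach the stated form.

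You instead use only operator concavity of $\ln$, which is already needed for the upper bound, in its operator tangent-line form at $Y_\eta$. You then recover the logarithm by optimizing over the scale $c$ in $\ln(cY)=\ln Y+\ln c$.

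What your route buys:
\begin{itemize}
\item It is more elementary, since it needs no Lieb concavity or triple-matrix inequality.
\item It yields positivity of the logarithm's argument for free.
\item It works verbatim for singular $X\geqslant 0$. The paper's argument needs $\ln X$ and defers the semidefinite case to an approximation.
\item It shows that the paper's remark, that concavity alone only gives the looser linear bound $1-\Tr[X\,\Omega^{-1}_{Y_\eta}(Y)]$, understates what concavity delivers. Your choice $c=1$ is exactly that linear bound, and rescaling upgrades it to the sharp logarithmic one.
\end{itemize}

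What the paper's route buys: it passes through the intermediate bound $-\ln\Tr\big[e^{\ln X-\ln Y_\eta+\ln Y}\big]$. This is strictly tighter than the stated one in general and ties the result to relative-entropy duality.
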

	This theorem shows that, unless the Riemannian gradient $\Delta=0$ (i.e., $\sigma$ is a fixed point of the KM update; a stationary point of the log-likelihood for an invertible $\sigma$), there always exists a sufficiently small $\eta>0$ such that the argument of the logarithm is less than one. Consequently, the log-likelihood increases strictly. Inspired by this, we formulate the following conjecture: the KM update increases the log-likelihood globally, i.e., for the full multiplicative update $\eta=1$.
	\begin{conjecture}\label{con:globally.monotone.KM.update}
		The KM update can only increase the log-likelihood,
		\begin{equation}
			\mathcal{L}_X(\sigma_+) \geqslant \mathcal{L}_X(\sigma),
		\end{equation}
		where $\sigma_+ = \mathcal{R}^{\text{KM}}_{\sigma,\mathcal{E}}(X)$.
		Equality holds iff $\sigma$ is the fixed point of the KM update, $\sigma_+=\sigma$. 
	\end{conjecture}
	Numerical tests on $10^8$ random instances in dimensions $2$--$5$, further refined by simulated annealing, did not reveal any violation of this inequality. We prove a weaker version of the conjecture for quantum-to-classical channels. For such channels, $X$ and $Y$ commute, and the KM update reduces to the Petz map.
	
	\begin{figure}[t!]
		\centering
		\includegraphics[width=\columnwidth]{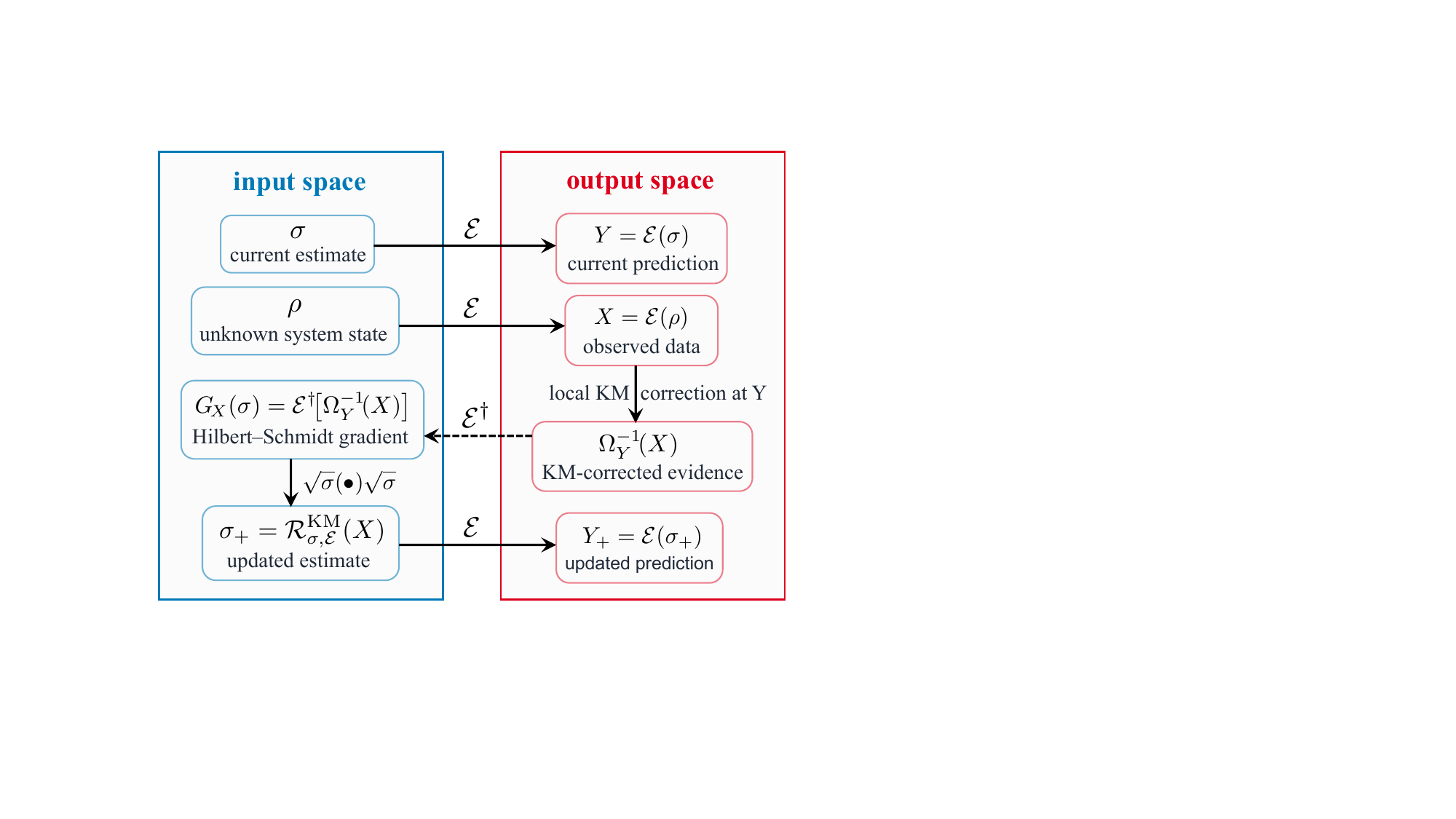}
		\begingroup
		\setlength{\abovecaptionskip}{-3mm}
		\setlength{\belowcaptionskip}{-4mm}
		\caption{Schematic representation of the KM update. The current estimate $\sigma$ predicts $Y=\mathcal{E}(\sigma)$, while the data define an output-space evidence operator $X$; ideally, $X=\mathcal{E}(\rho)$. The inverse KM action at $Y$ gives the corrected evidence $\Omega_Y^{-1}\!(X)$, which is pulled back through $\mathcal{E}^{\dagger}$ to the Hilbert--Schmidt gradient $\GX(\sigma)$. Conjugation by $\sqrt{\sigma}$ produces the positive state update $\sigma_+=\mathcal{R}^{\text{KM}}_{\sigma,\mathcal{E}}(X)$, with updated prediction $Y_+=\mathcal{E}(\sigma_+)$.}
		\label{fig:KM.update.schematic}
		\endgroup
	\end{figure}
	
	\begin{theorem}[Petz monotonicity for quantum-to-classical likelihood]\label{thm:Petzqc}
		Consider the quantum-to-classical channel \eqref{eq:qc.channel}. For any state $\S$, 
		the Petz update is given by,
		\begin{align}
			\sigma_+
			\defeq
			\mathcal{R}_{\sigma,\mathcal{E}_{\text{qc}}}(X)
			=
			\sum\nolimits_i\,
			\frac{\hat{p}_i}{\Tr[\Pi_i\sigma]}\,
			\sqrt{\sigma}\;\Pi_i\,\sqrt{\sigma}.
		\end{align}
		Then the standard tomographic log-likelihood satisfies
		\begin{align}
			\mathcal{L}_{\text{qc}}(\sigma_+)
			\geqslant
			\mathcal{L}_{\text{qc}}(\sigma).
		\end{align}
		Equality holds iff $\sigma$ is a fixed point of the Petz map, $\sigma_+=\sigma$.
	\end{theorem}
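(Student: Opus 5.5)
The plan is to obtain the statement as the $\eta=1$ case of Theorem~\ref{minimum_increase}, specialised to $\E_{\text{qc}}$, where all operators in the output space are diagonal in the register basis $\{\ket{i}\}$.

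\textbf{Form of the update.} With $X=\sum_i \hat p_i\ketbra{i}{i}$ and $Y=\E_{\text{qc}}(\sigma)=\sum_i q_i\ketbra{i}{i}$, where $q_i\defeq\Tr[\Pi_i\sigma]$, the operators $X$ and $Y$ commute. Hence $Y^{-1/2}XY^{-1/2}=\Omega_Y^{-1}(X)=\sum_i(\hat p_i/q_i)\ketbra{i}{i}$, which is consistent with Theorem~\ref{thm:KM.update} since $k(0)=1$. The adjoint is $\E_{\text{qc}}^\dagger(\ketbra{i}{i})=\Pi_i$. Together these give the stated form of $\sigma_+$, namely $\sigma_+=\sqrt\sigma R\sqrt\sigma$ with $R\defeq\sum_i(\hat p_i/q_i)\Pi_i$. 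The trace is $\Tr\sigma_+=\Tr[\sigma R]=\sum_i\hat p_i=1$. For non-full-rank $\sigma$ I will restrict to $\operatorname{supp}\sigma$, following the convention of Theorem~\ref{minimum_increase}.

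\textbf{Reduction to a scalar inequality.} Write $q_i^+\defeq\Tr[\Pi_i\sigma_+]$. Because everything in the output space commutes, the left-hand bound of Theorem~\ref{minimum_increase} at $\eta=1$ becomes
\begin{equation*}
\mathcal{L}_{\text{qc}}(\sigma_+)-\mathcal{L}_{\text{qc}}(\sigma)\;\geqslant\;-\ln\sum\nolimits_i \hat p_i\,\frac{q_i}{q_i^+}.
\end{equation*}
This can also be obtained directly: apply concavity of $\ln$ to $\sum_i\hat p_i\ln(q_i^+/q_i)=-\sum_i\hat p_i\ln(q_i/q_i^+)$. It therefore suffices to prove
\begin{equation*}
\sum\nolimits_i \hat p_i\, q_i/q_i^+\leqslant 1 .
\end{equation*}
As a fallback, I will also track a second route. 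Define $h(t)\defeq\sum_i\hat p_i\ln\Tr[\Pi_i\sqrt\sigma R^t\sqrt\sigma]$. Each trace has the form $\sum_k w_{ik}e^{t\ln r_k}$ with $w_{ik}\geqslant0$, where $r_k$ are the eigenvalues of $R$, so $h$ is a sum of log-sum-exp functions and is convex in $t$. Since $h(0)=\mathcal{L}_{\text{qc}}(\sigma)$ and $h(1)=\mathcal{L}_{\text{qc}}(\sigma_+)$, convexity reduces the claim to $h'(0)\geqslant0$, that is, to $\Tr[\sigma_+\ln R]\geqslant0$.

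\textbf{Main obstacle.} The crux is either of the two scalar inequalities above. Both must use the constraint $\Tr[\sigma R]=1$ together with the specific structure of $R$ built from the POVM and the data. In the eigenbasis $\{\ket{k}\}$ of $R$, both reduce to statements about the symmetric nonnegative matrix $M_{kl}=|\bra{k}\sqrt\sigma\ket{l}|^2$. Its row sums are the diagonal entries $s_k=\bra{k}\sigma\ket{k}$, and the normalisation reads $\sum_k s_kr_k=1$. I will first try Cauchy--Schwarz in the Hilbert--Schmidt inner product on $q_i$, splitting $q_i=\Tr[\sqrt{\Pi_i}\sqrt\sigma R^{1/2}R^{-1/2}\sqrt\sigma\sqrt{\Pi_i}]$. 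I already expect the naive splittings to yield bounds in the wrong direction, for example $\Tr[R\sqrt\sigma R^{-1}\sqrt\sigma]\geqslant1$. So I anticipate needing an interpolation that exploits $\sum_i\hat p_i=1$ more cleverly, such as an operator-concavity (Jensen) argument for $t\mapsto\ln$ applied to $\sum_i(\hat p_i/q_i)\sqrt\sigma\Pi_i\sqrt\sigma$.

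\textbf{Equality case.} Equality in the concavity step forces $q_i^+=q_i$ for every $i$ with $\hat p_i>0$. Combining this with equality in the Cauchy--Schwarz step should force $R\sqrt\sigma\propto\sqrt\sigma$ on the support of $\sigma$. That gives $\sigma_+=\sigma$, the fixed-point condition, which by the discussion after Eq.~\eqref{eq:stationary.condition} is the stationarity condition.
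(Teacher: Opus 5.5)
Your reduction is sound and matches the paper's. By Jensen (or by $f(t)=t^{-1}-1+\ln t\geqslant0$, as in Lemma~\ref{lem:control}), monotonicity follows from the Jacobs-type inequality $\sum_i\hat p_i q_i/q_i^+\leqslant1$, which is Lemma~\ref{lem:Jacobs}. But that inequality is the whole content of the theorem, and you do not prove it. You list Cauchy--Schwarz splittings that you already expect to go the wrong way, plus an unspecified operator-Jensen argument.

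The missing idea is that the inequality is a global spectral statement coupling all outcomes, not a termwise bound on each $q_i$. Set $C_{ij}\defeq\Tr[\Pi_i\sqrt\sigma\,\Pi_j\sqrt\sigma]/q_j\geqslant0$. Then $q^+=C\hat p$ and $Cq=q$. Moreover $C=QB$, where $Q=\mathrm{diag}(q)$ and $B_{ij}=\Tr[\Pi_i\sqrt\sigma\,\Pi_j\sqrt\sigma]/(q_iq_j)$ is a positive semidefinite Gram matrix for the form $\Tr[A^\dagger\sqrt\sigma B\sqrt\sigma]$, with $Bq=\mathbf 1$.

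With $D=\mathrm{diag}(\hat p_i/(C\hat p)_i)$ you get $DC\hat p=\hat p$ with $\hat p>0$ (after discarding indices with $\hat p_i=0$). Perron--Frobenius (Collatz--Wielandt) then gives $\rho(DC)=1$. On the other hand, $DC=(DQ\sqrt B)\sqrt B$ has the same nonzero spectrum as the positive semidefinite matrix $H=\sqrt B\,DQ\sqrt B$. The Rayleigh quotient of $H$ at $z=\sqrt B q$ is $\langle Dq,Bq\rangle/\langle q,Bq\rangle=\sum_i d_iq_i=\sum_i\hat p_iq_i/q_i^+$. Hence $\sum_i\hat p_iq_i/q_i^+\leqslant\rho(DC)=1$.

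Your fallback route cannot be repaired, because its target $\Tr[\sigma_+\ln R]\geqslant0$ is false. Take a qubit with $\Pi_i=\ketbra{i}{i}$, the full-rank state $\sigma=(1-\epsilon)\ketbra{+}{+}+\epsilon\ketbra{-}{-}$, and $\hat p_0\in(0,1)\setminus\{\tfrac12\}$. Then $q_0=q_1=\tfrac12$ and $R=2\,\mathrm{diag}(\hat p_0,\hat p_1)$. As $\epsilon\to0$ one finds $\Tr[\sigma_+\ln R]\to\tfrac12\ln(4\hat p_0\hat p_1)<0$. So $h'(0)<0$ for small $\epsilon$, even though $h(1)\geqslant h(0)$: your convex $h$ simply dips before rising.

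For the equality case, there is no Cauchy--Schwarz step whose equality condition you could invoke. Once you know $q_i^+=q_i$ whenever $\hat p_i>0$, the conclusion is immediate. First, $\Tr[R\sigma_+]=\sum_i\hat p_i q_i^+/q_i=1$. Then $\|\sigma_+-\sigma\|_2^2\leqslant\|\sigma^{1/4}(R-\mathds{1})\sigma^{1/4}\|_2^2=\Tr[(R-\mathds{1})(\sigma_+-\sigma)]=\Tr[R\sigma_+]-1=0$. This is the content of Lemmas~\ref{lem:norm} and~\ref{lem:strong}.
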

	Along with the Petz recovery map~\cite{Bai.Buscemi.Scarani:PhysRevLett:2025}, the KM update is a possible generalization of Jeffrey's rule to quantum physics. The above theorem and the previous conjecture imply that the application of this version of Jeffrey's rule always increases the log-likelihood. This generalizes the classical result of Jacobs~\cite{Jacobs:ElectronProcTheorComputSci.2021}, see App.~\ref{app:subsec:quantum.Bayes.theorem}.
	
	Note that replacing the KM update with the Petz recovery map in Conjecture \ref{con:globally.monotone.KM.update} does not satisfy the desired inequality: since the Petz map is not the exact gradient of the log-likelihood \eqref{def:log-likelihood}, applying it at a maximizer for noncommuting outputs with overcomplete basis and imperfect data can decrease it. A concrete example exhibiting a large violation is provided in  App.~\ref{app:subsec:quantum.Bayes.theorem}.
	
	\paragraph*{KM update and Petz recovery as quantum tomography methods.}
	The following observation, which is a rephrasing of the discussion below Eq.~\eqref{eq:stationary.condition} combined with concavity of the log-likelihood (see App.~\ref{app:subsec:log.likelihood.concavity}), provides the basis for the KM update as a tomography method.
	\begin{theorem}\label{thm:KM.fixed.point.MLE}
		Every full-rank fixed point $\sigma$ of the KM update is a maximizer of the log-likelihood.
	\end{theorem}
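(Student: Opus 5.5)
The plan combines the fixed-point characterization derived below Eq.~\eqref{eq:stationary.condition} with concavity of $\mathcal{L}_X$ on the convex set of density matrices.

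\emph{Step 1: the fixed point is a stationary point.} Let $\sigma$ be full rank with $\mathcal{R}^{\text{KM}}_{\sigma,\mathcal{E}}(X)=\sigma$. By \eqref{def:KM.update} this means $\sqrt{\sigma}\,\GX(\sigma)\sqrt{\sigma}=\sigma$. Conjugating both sides by $\sigma^{-1/2}$ gives $\GX(\sigma)=\mathds{1}$, so \eqref{eq:stationary.condition} holds with $\lambda=1$. By \eqref{eq:Fréchet.derivative.gradient}, for every Hermitian traceless $\Delta$ we then have $D\mathcal{L}_X(\sigma)[\Delta]=\Tr[\Delta]=0$.

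\emph{Step 2: concavity.} I will show that $\sigma\mapsto \Tr[X\ln\mathcal{E}(\sigma)]$ is concave on the domain where $\mathcal{E}(\sigma)>0$. The matrix logarithm is operator concave. Composing it with the linear positive map $\mathcal{E}$ preserves operator concavity. Finally, pairing with $X\geqslant 0$ through the trace turns operator concavity into scalar concavity.

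For states $\tau$ with $\mathcal{E}(\tau)$ singular on the support of $X$, we have $\mathcal{L}_X(\tau)=-\infty$, so such $\tau$ cannot beat $\sigma$. For other $\tau$, the function $g(s)=\mathcal{L}_X((1-s)\sigma+s\tau)$ is concave on $[0,1)$ and finite there, because $\mathcal{E}(\sigma)>0$ by \eqref{eq:X_and_Y_defs}. This requires care when $\mathcal{E}(\tau)$ is merely positive semidefinite, but $\mathcal{E}((1-s)\sigma+s\tau)\geqslant(1-s)\mathcal{E}(\sigma)>0$ for $s<1$, so the argument still goes through.

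\emph{Step 3: conclude.} Take any density matrix $\tau$ and set $\Delta=\tau-\sigma$, which is Hermitian and traceless. Concavity gives
\begin{equation}
g(1)\leqslant g(0)+g'(0)=\mathcal{L}_X(\sigma)+D\mathcal{L}_X(\sigma)[\tau-\sigma]=\mathcal{L}_X(\sigma).
\end{equation}
Here $g(1)$ is read as the limit $s\to1^-$ when needed; upper semicontinuity of $\ln$ on singular $Y$ handles this. Hence $\sigma$ is a global maximizer.

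Full rank is used only in Step 1, to invert $\sqrt{\sigma}$. It also guarantees that every direction $\tau-\sigma$ is admissible, so no boundary terms arise.

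\emph{Main obstacle.} The delicate point is the concavity claim in Step 2, namely justifying operator concavity of $\ln$ together with positivity of $\mathcal{E}$ when $\mathcal{E}$ is only positive and $X$ is not full rank. If operator concavity is awkward to invoke directly, I would instead use the integral representation
\begin{equation}
\ln Y=\int_0^\infty\left[(1+u)^{-1}-(Y+u)^{-1}\right]du .
\end{equation}
Convexity of $Y\mapsto\Tr[X(Y+u)^{-1}]$ follows from operator convexity of the inverse, which reduces the claim to a standard fact.
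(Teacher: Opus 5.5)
Your proposal is correct and follows the paper's proof. Full rank turns the fixed-point equation into $\GX(\sigma)=\mathds{1}$, so $D\mathcal{L}_X(\sigma)[\Delta]=\Tr[\Delta]=0$ for every traceless $\Delta$, and concavity of $\mathcal{L}_X$ (operator concavity of $\ln$ composed with the linear map $\mathcal{E}$, paired with $X\geqslant 0$) then gives $\mathcal{L}_X(\tau)\leqslant\mathcal{L}_X(\sigma)+D\mathcal{L}_X(\sigma)[\tau-\sigma]=\mathcal{L}_X(\sigma)$.

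Your extra treatment of states with singular $\mathcal{E}(\tau)$ goes beyond the paper, which only compares against states with $\mathcal{E}(\cdot)>0$, but it invokes the wrong semicontinuity direction. Upper semicontinuity only gives $\limsup_{s\to1^-}g(s)\leqslant g(1)$. What you actually need is $g(1)\leqslant\liminf_{s\to1^-}g(s)$. This follows from the concavity inequality $g(s)\geqslant(1-s)g(0)+s\,g(1)$ on the closed segment, which you obtain by replacing $\mathcal{E}(\tau)$ with $\mathcal{E}(\tau)+\delta\mathds{1}$ and letting $\delta\to0$.
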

	
	Theorem~\ref{minimum_increase} guarantees that a line-search implementation can choose a step that increases the log-likelihood whenever the current state is not stationary. 	By Theorem~\ref{thm:KM.fixed.point.MLE}, any full-rank fixed point reached by the iteration is a maximizer.
	
	For the full step $\eta = 1$ in the Petz case, we prove monotonicity and subsequential fixed-point convergence; under uniqueness of the relevant fixed point, the whole sequence converges. We conjecture that the analogous full-step statement extends to the KM update.
	
	\begin{theorem}[Convergence of the Petz iteration]\label{thm:Petz.converg}
		Suppose that $\{\Pi_i\}$ is a tomographically complete  measurement with all observed probabilities $\hat{p}_i$ positive. We denote the unique maximizer of the log-likelihood $\mathcal{L}_{\text{qc}}$ as $\hat{\sigma}_{\text{MLE}}$, and the iterated Petz update as $\sigma_k=(\sigma_{k-1})_+$, for some initial prior $\sigma_0$. Then $\sigma_k \to \hat{\sigma}_{\text{MLE}}$ or $\det(\sigma_k)\to 0$. If the upper level set $\{\sigma:\mathcal{L}(\sigma)\geqslant \mathcal{L}(\sigma_0)\}$ contains only invertible states, then $\sigma_k\to\hat{\sigma}_{\text{MLE}}.$
	\end{theorem}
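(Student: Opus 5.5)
The proof combines three ingredients: monotonicity of $\mathcal{L}_{\text{qc}}(\sigma_k)$ from Theorem~\ref{thm:Petzqc}, compactness of the state space, and the fact (Theorem~\ref{thm:KM.fixed.point.MLE}, together with strict concavity under tomographic completeness) that the only full-rank fixed point of the Petz map is $\hat{\sigma}_{\text{MLE}}$.

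\emph{Monotonicity and boundedness.} By Theorem~\ref{thm:Petzqc}, the sequence $\ell_k \defeq \mathcal{L}_{\text{qc}}(\sigma_k)$ is nondecreasing. It is bounded above by $0$, since $\Tr[\Pi_i\sigma]\leqslant 1$. Hence $\ell_k\to\ell^*$. I first need $\ell_0>-\infty$, i.e.\ $\Tr[\Pi_i\sigma_0]>0$ for all $i$; this holds for a full-rank prior. It is also preserved along the iteration, because $\ell_k\geqslant\ell_0$ forces every $\Tr[\Pi_i\sigma_k]$ to stay bounded away from zero.

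\emph{Strict concavity.} Tomographic completeness makes $\sigma\mapsto(\Tr[\Pi_i\sigma])_i$ injective on Hermitian matrices. With all $\hat p_i>0$, $\mathcal{L}_{\text{qc}}$ is then strictly concave, so $\hat{\sigma}_{\text{MLE}}$ is unique.

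\emph{Limit points are fixed points.} By compactness, every subsequence of $\sigma_k$ has a convergent sub-subsequence $\sigma_{k_j}\to\sigma^*$. Suppose $\det\sigma_k\not\to 0$. Then some subsequence has $\det\sigma_{k_j}\geqslant\delta>0$, so its limit $\sigma^*$ is full rank. The Petz map $\sigma\mapsto\sigma_+$ is continuous wherever all $\Tr[\Pi_i\sigma]>0$, which holds at $\sigma^*$ because $\mathcal{L}_{\text{qc}}(\sigma^*)=\ell^*>-\infty$. Therefore $(\sigma_{k_j})_+\to(\sigma^*)_+$. Since $\mathcal{L}_{\text{qc}}((\sigma_{k_j})_+)=\ell_{k_j+1}\to\ell^*$, continuity gives $\mathcal{L}_{\text{qc}}((\sigma^*)_+)=\ell^*=\mathcal{L}_{\text{qc}}(\sigma^*)$. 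The equality case of Theorem~\ref{thm:Petzqc} then forces $(\sigma^*)_+=\sigma^*$. By Theorem~\ref{thm:KM.fixed.point.MLE} and uniqueness, $\sigma^*=\hat{\sigma}_{\text{MLE}}$, so $\ell^*=\mathcal{L}_{\text{qc}}(\hat{\sigma}_{\text{MLE}})$.

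\emph{Upgrading to convergence of the whole sequence.} I use the value rather than the subsequence. Any other accumulation point $\tau$ satisfies $\mathcal{L}_{\text{qc}}(\tau)=\ell^*=\max\mathcal{L}_{\text{qc}}$, by upper semicontinuity of $\mathcal{L}_{\text{qc}}$ (it may equal $-\infty$ on the boundary). By uniqueness of the maximizer over all states, including boundary ones, $\tau=\hat{\sigma}_{\text{MLE}}$. Hence every accumulation point is $\hat{\sigma}_{\text{MLE}}$, and compactness gives $\sigma_k\to\hat{\sigma}_{\text{MLE}}$. In the other case, every subsequence has $\det\to 0$, which means $\det\sigma_k\to 0$. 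This yields the dichotomy.

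\emph{The final claim.} Every $\sigma_k$ lies in the upper level set $\{\sigma:\mathcal{L}(\sigma)\geqslant\mathcal{L}(\sigma_0)\}$. This set is closed by upper semicontinuity and compact. If it contains only invertible states, then $\det$ attains a positive minimum on it, so $\det\sigma_k\to 0$ is impossible. The first alternative therefore holds.

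The main obstacle is the equality case of Theorem~\ref{thm:Petzqc} at the limit point: I must check that it applies at full-rank $\sigma^*$ without extra hypotheses. I must also justify that $\hat{\sigma}_{\text{MLE}}$ is full rank, or at least that a full-rank fixed point coincides with it; Theorem~\ref{thm:KM.fixed.point.MLE} covers this.
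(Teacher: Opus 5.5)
Your proof is correct. It shares the paper's skeleton: a monotone, bounded likelihood; compactness; full-rank fixed point $\Rightarrow$ maximizer by concavity; uniqueness of $\hat{\sigma}_{\text{MLE}}$; and a positive minimum of $\det$ on the compact upper level set. Where it differs is in how you show that accumulation points are fixed points.

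The paper uses the quantitative Lemma~\ref{lem:strong}: a small likelihood increment forces a small step $\|\sigma_+-\sigma\|$, uniformly in $\sigma$. This gives $F(\sigma_k)\to0$ for $F(\sigma)=\|\sigma_+-\sigma\|+|\mathcal{L}_{\text{qc}}(\sigma)-\ell_\infty|$. The paper then analyses the zero set $\mathcal{A}$ of $F$ on the orbit closure, splitting on whether $\mathcal{A}$ contains an invertible state.

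You instead run the classical Zangwill-type argument:
\begin{itemize}
\item Continuity of $\sigma\mapsto\sigma_+$ and of $\mathcal{L}_{\text{qc}}$ where all $\Tr[\Pi_i\sigma]>0$ gives $\mathcal{L}_{\text{qc}}((\sigma^*)_+)=\ell^*=\mathcal{L}_{\text{qc}}(\sigma^*)$. So you need only the qualitative equality clause of Theorem~\ref{thm:Petzqc}.
\item You split directly on whether $\det\sigma_k\to0$.
\item You get convergence of the whole sequence by noting that, once $\ell^*=\max\mathcal{L}_{\text{qc}}$, every accumulation point is a maximizer. This replaces the paper's second compactness argument with $F$.
\end{itemize}

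Your route is shorter and more modular: it needs only a strict Lyapunov function plus continuity of the update. That makes it the natural template for other iterations, e.g.\ the KM update if Conjecture~\ref{con:globally.monotone.KM.update} were proved. Domain control and uniqueness of the maximizer would still have to be supplied there. The paper's route additionally yields an explicit, state-independent modulus linking likelihood increments to step sizes, which can serve as a stopping criterion. Keep in mind that the equality clause you invoke is itself proved in the paper via Lemma~\ref{lem:strong}, so the quantitative estimate is still used one level down.

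One small tidy-up. You justify $\Tr[\Pi_i\sigma^*]>0$ by $\mathcal{L}_{\text{qc}}(\sigma^*)=\ell^*$ before you have established that equality. You also need continuity of $\mathcal{L}_{\text{qc}}$ at $(\sigma^*)_+=\lim_j\sigma_{k_j+1}$. Both follow from the uniform bound $\Tr[\Pi_i\sigma_k]\geqslant e^{\ell_0/\hat{p}_i}$, which comes from $\ell_k\geqslant\ell_0$ and $\Tr[\Pi_j\sigma_k]\leqslant1$ and passes to limits. For $\sigma^*$, full rank together with $\Pi_i\neq0$ also suffices.
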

	
	Using these results, we design a quantum tomography method as follows. Consider a set $\{\mathcal{E}_j\}$ of channels with corresponding estimated outputs $X_j$. Define the weighted log-likelihood as
	\begin{equation}
		\sum\nolimits_j w_j \mathcal{L}_{X_j}(\sigma),
		\quad
		\mathcal{L}_{X_j}(\sigma) = \Tr\bigl[X_j \ln \mathcal{E}_j(\sigma)\bigr],
	\end{equation}
	which generalizes the standard maximum-likelihood formulation. The weights $\{w_j\}$ represent the relative importance of each dataset and, in analogy with the usual log-likelihood, are typically chosen as empirical frequencies $w_j = N_j / N$ of channel executions yielding data $X_j$. 
	
	Due to the linearity of the Fréchet derivative, we can define the weighted KM update as
	\begin{equation}
		\mathcal{R}^{\text{KM}}_{\sigma,\{\mathcal{E}_j\}}(\{X_j\}) 
		\defeq 
		\sum\nolimits_j w_j \mathcal{R}^{\text{KM}}_{\sigma,\mathcal{E}_j}(X_j),
		\label{def.weighted.KM.update}
	\end{equation}
	which corresponds to the gradient of the weighted log-likelihood in the inverse-square-root metric \eqref{def:inverse.square.root.metric}. Equivalently, this is the ordinary KM update for the block channel $\mathcal{E}_\oplus = \bigoplus_j w_j \mathcal{E}_j$ with data $X_\oplus=\bigoplus_j w_j X_j$, up to an irrelevant additive constant in the likelihood, see App.~\ref{app:subsec:block.channel}. This construction translates a tomography problem with several measurements into one with a single measurement and thus generalizes the theorems above.
	
	The channel set may be undercomplete, overcomplete, or informationally complete. Quantum tomography is performed by iterating the KM update from an initial prior $\S_0$. Together with the prior, the weights determine the point in the null space (set of log-likelihood maximizers) to which the iteration converges or, for imperfect data, the boundary point it approaches, see Fig.~\ref{fig:convergence}. In the overcomplete case, the weights also encode the relative importance of different channels, prioritizing those with more reliable data.
	
	\begin{figure}[t]
		\centering
		\includegraphics[width=0.60\columnwidth]{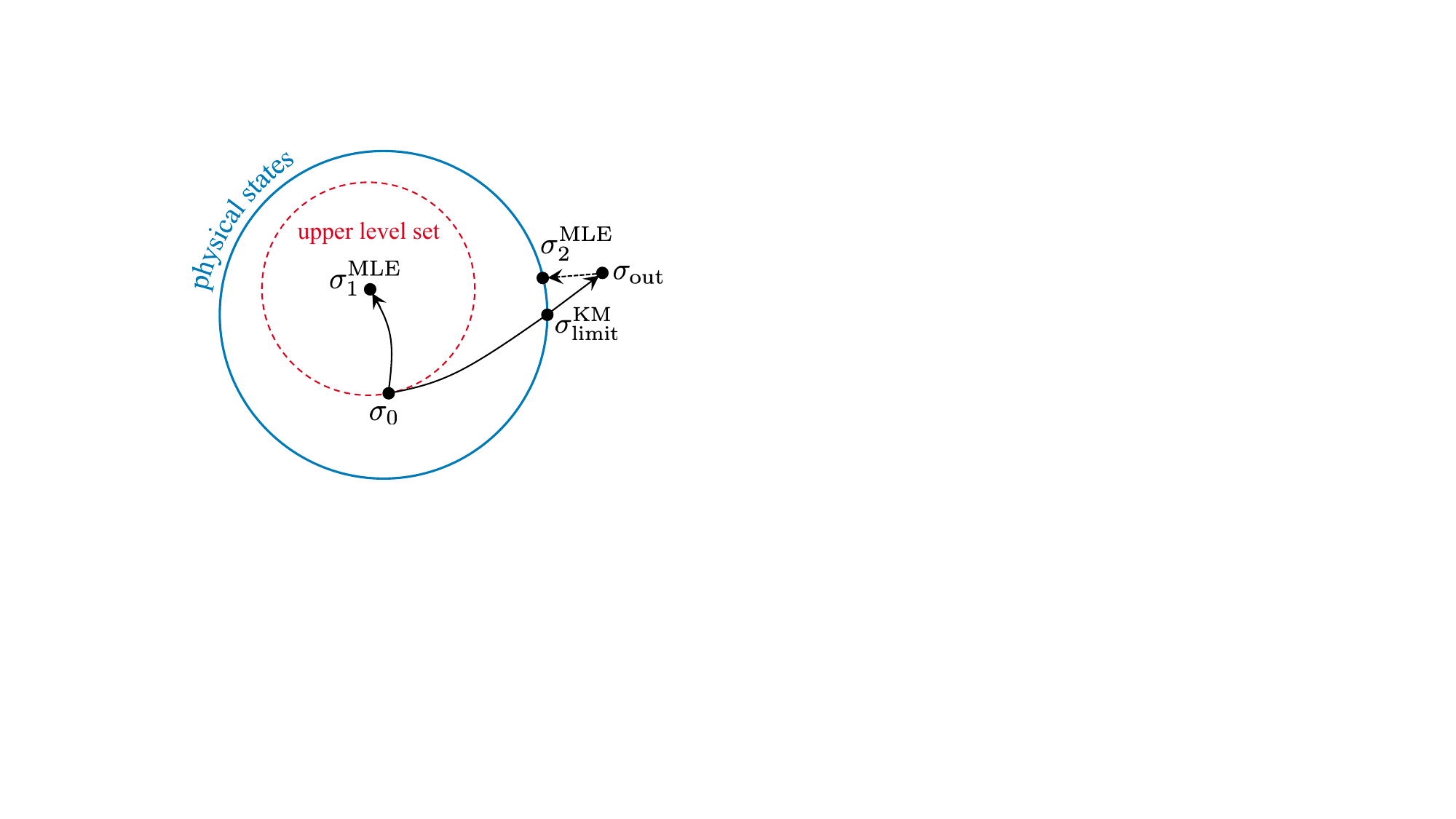}
		\begingroup
		\setlength{\abovecaptionskip}{0.5mm}
		\setlength{\belowcaptionskip}{-3mm}
		\caption{The KM iteration converges locally to the maximum-likelihood estimator (MLE), provided the initial prior is sufficiently close. Otherwise, it may instead follow the log-likelihood gradient to a boundary point rather than the MLE, for example, when finite data make the observations incompatible with any physical state.
		}
		\label{fig:convergence}
		\endgroup
	\end{figure}
	
	We illustrate the method in Fig.~\ref{fig:petz50} on the standard tomography setup using Pauli measurements on six qubits and on the following noncommutative KM example: Defining a Stinespring channel, $\mathcal{S}\defeq U(\bullet \otimes \ketbra{0}{0} )U^\dag$, we denote
	system and environment marginals, and quantum-to-classical correlation channels,
	\begin{align}
		\E_S&=\Tr_E\mathcal{S},
		\quad 
		\E_E=\Tr_S\mathcal{S},\\
		\mathcal{E}_C
		&=
		\bigoplus_{(a,b)}
		\tfrac{1}{9}
		\sum_{i,j}
		\Tr\!\left[
		P_i^{(a)} \!\otimes\! P_j^{(b)}\mathcal{S}(\bullet)
		\right]
		\ketbra{ab,ij}{ab,ij}, \nonumber
	\end{align}
	where $a$ and $b$ represent different settings of Pauli measurements. We perform tomography by iterating the weighted KM update \eqref{def.weighted.KM.update} using these highly overcomplete and incompatible channels. For the Petz map, we also consider a stabilized iteration, $\S_{k+1}=(1-\delta)(\S_k)_+ +\delta I/d$, which improves convergence by preventing the state from approaching the boundary. 
	For initial priors sufficiently close to the limit point, we observe that the tomography method converges exponentially fast.
	
	\begin{figure*}[htbp!]
		\centering
		\includegraphics[width=.49\linewidth]{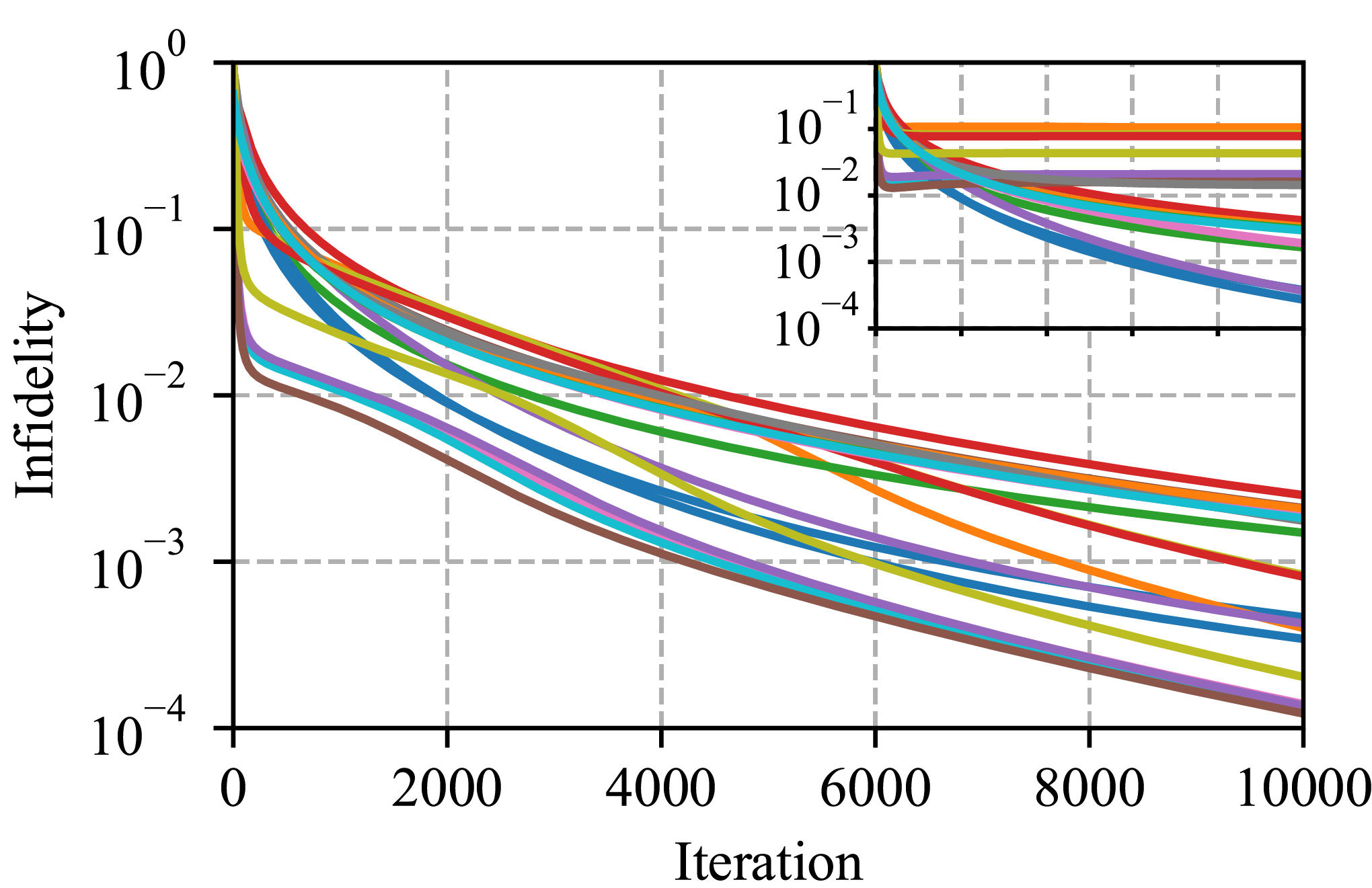}
		\hspace{5pt}
		\includegraphics[width=.49\linewidth]{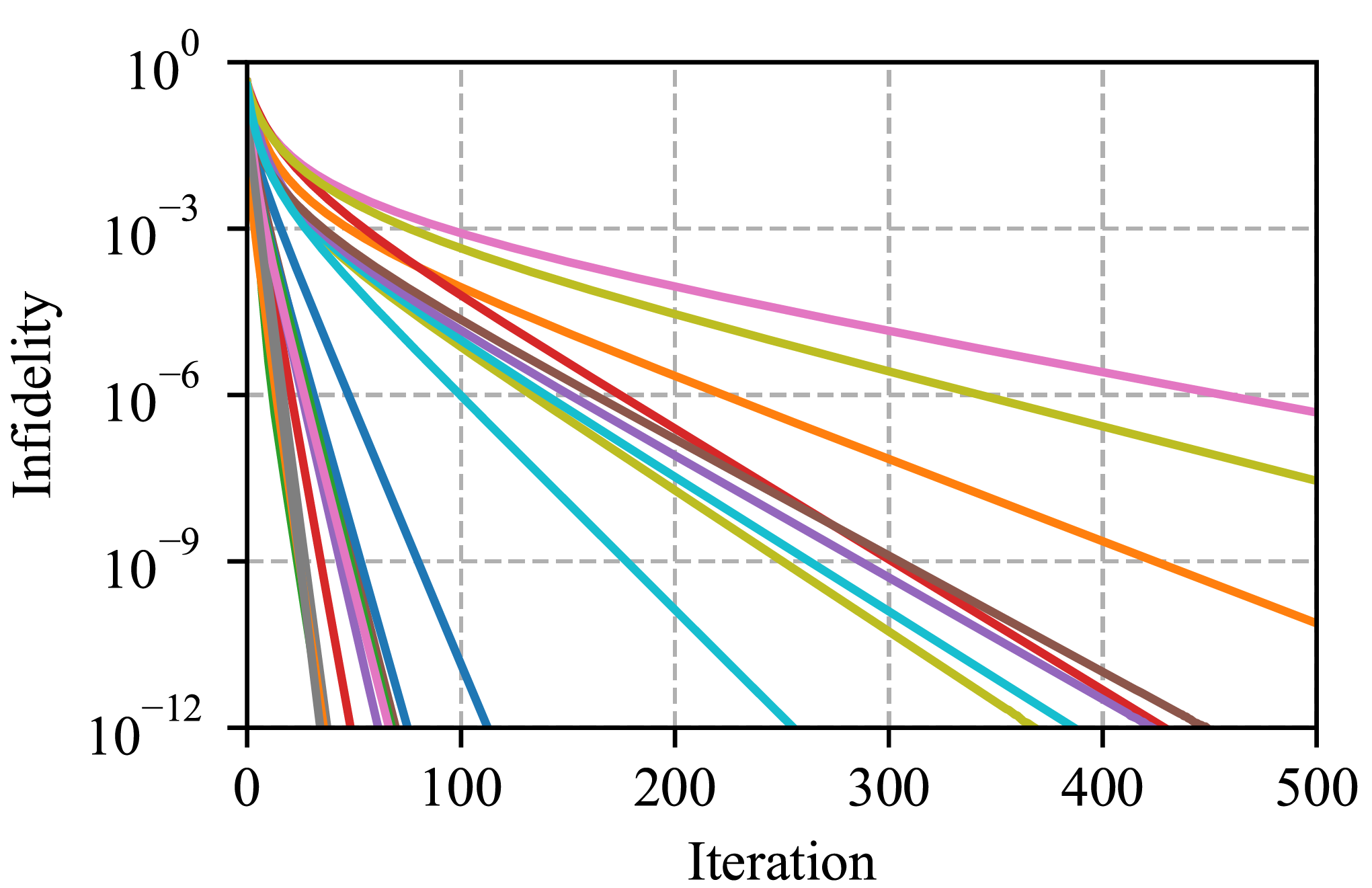}
		\begingroup
		\setlength{\abovecaptionskip}{-2mm}
		\setlength{\belowcaptionskip}{-3mm}
		\caption{Left: Maximum-likelihood quantum tomography based on iterative application of the weighted Petz map over Pauli bases, with the prior initialized as the maximally mixed state. Twenty states of six qubits of varying rank were randomly generated (including $20\%$ pure states), and regularized to make them full-rank, $\R=(1-\epsilon)\R+\epsilon I/d$, $\epsilon = 10^{-2}$. We show the stabilized iteration ($\delta=10^{-7}$) and the unstabilized case ($\delta=0$) in the inset. Right: 
			Tomography of a qubit via iterative unstabilized KM updates in a Stinespring setup with three overcomplete, incompatible channels defined by a random unitary U, using otherwise identical settings. The method converges faster for highly mixed states. The unstabilized iteration can exhibit premature numerical stagnation near the boundary, as shown in the inset.}
		\label{fig:petz50}
		\endgroup
	\end{figure*}	
	
	\section{Discussion and conclusions} \label{sec:discussion.and.conclusions}
	In this paper, we studied the exact gradient of the log-likelihood functional and found that it leads to an update rule that we call the KB update. Together with the Petz recovery map and the rotated Petz recovery map, the KM update may be viewed as a candidate quantum analogue of Bayes' and Jeffrey's rule. All three maps reduce to the classical Bayes update in the commuting case. Unlike the Petz map, which arises naturally in the characterization of equality in the data-processing inequality~\cite{Petz:CommunMathPhys:1986,Petz:RevMathPhys:2003}, or the rotated Petz map, which strengthens this connection through fidelity-based recovery bounds~\cite{Sutter.Tomamichel.Harrow:IEEETransInfTheory:2016,Sutter.Berta.Tomamichel:CommunMathPhys:2017}, the KM update arises from the maximum-likelihood inference problem, namely the task of finding the reference state whose predicted output best explains the observed data.
	
	More specifically, if closeness is quantified by the Umegaki quantum relative entropy (a choice supported by its uniqueness under four natural axioms~\cite{Wilming.Gallego.Eisert:Entropy:2017}), then the KM update is its gradient with respect to the reference state and therefore follows the direction of steepest descent. Conjecture~\ref{con:globally.monotone.KM.update} and Theorem~\ref{thm:Petzqc} further suggest that it monotonically decreases the relative entropy, extending Jacobs' monotonicity result for Jeffrey's rule and the Kullback--Leibler divergence~\cite{Jacobs:ElectronProcTheorComputSci.2021}. In contrast, the Petz recovery map violates the analogous monotonicity property.
	
	Since it is the unit natural-gradient step of the log-likelihood, the KM update has the structural features expected of a tomographic iteration, including the local monotonicity and convergence properties established in Theorems~\ref{minimum_increase} and~\ref{thm:KM.fixed.point.MLE}. In the commuting-output setting, the KM iteration reduces to the Petz iteration, for which we establish the stronger properties of full-step monotonicity and convergence to the maximum-likelihood estimator in Theorems~\ref{thm:Petzqc} and~\ref{thm:Petz.converg}.
	
	This allows us to design and implement a tomographic protocol that is a first-order gradient-ascent method, applicable both to standard quantum tomography via the quantum-to-classical channel and to a multi-view reconstruction using general quantum channels. This is realized by formulating the weighted KM update \eqref{def.weighted.KM.update}, which enables us to treat undercomplete, informationally complete, overcomplete, and imperfect data within a unified framework.
	
	An interesting comparison can be made with Bayesian quantum tomography, which applies the classical Bayes theorem to infer a posterior distribution over quantum states. A common point estimate is the Bayesian mean estimator,
	$
	\hat{\sigma}_{\text{BME}}
	= \int \sigma\, p(\sigma | \text{data})\, d\sigma,
	$
	obtained by averaging over the posterior distribution~\cite{Blume-Kohout:NewJPhys:2010, HuszarHoulsby:PhysRevA:2012, Kueng.Ferrie:NewJPhys:2015, Granade.Combes.Cory:NewJPhys:2016, Straupe:JetpLett:2016, Lukens.etal:NewJPhys:2020, Chapman.etal:OptExpress:2022}. In contrast to the present approach, this integral is generally not available in closed form and is typically approximated using methods such as sequential Monte Carlo or Metropolis--Hastings sampling~\cite{Struchalin.etal:PhysRevA:2016}. 
	
	Here, we find the KM update to generalize Bayes' and Jeffrey's rules, while being formulated entirely in terms of quantum states rather than probability distributions. Starting from the search for a fully quantum analogue of Bayesian inference, we therefore arrived at a somewhat surprising conclusion: the resulting update naturally leads to maximum-likelihood estimation. This raises the intriguing possibility that maximum-likelihood estimation is, in fact, the natural form of Bayesian inference in quantum theory.
	
	The remaining open problems are to establish global monotonicity of the undamped KM update, derive general convergence guarantees, characterize its limiting points even in undercomplete settings, and develop faster and higher-order tomographic protocols inspired by the present approach. 
	
	Finally, the results presented here suggest that Kubo--Mori geometry provides the natural structure underlying maximum-likelihood estimation, thereby connecting quantum tomography, recovery maps, and quantum Bayesian retrodiction within a unified framework.	
	
	\section*{Acknowledgments}
	This research was supported by the Czech Science Foundation through the Junior Star grant 25-17250M (SM, FM, and D\v{S}) and by Charles University in Prague through PRIMUS/25/SCI/027 (IT and D\v{S}). We thank Joe Schindler and Valerio Scarani for the initial discussions that sparked this project, and especially Francesco Buscemi for early insights into the possible convergence of the Petz recovery map iteration.
	
	\section*{Data availability}
	The source code used to generate Fig.~\ref{fig:petz50}, the numerical evidence supporting Conjecture~\ref{con:globally.monotone.KM.update}, and the search for the Petz monotonicity counterexample reported in the Appendix are available at \href{https://github.com/dominik-safranek/km-update-tomography}{https://github.com/dominik-safranek/km-update-tomography}.

	\bibliographystyle{bibstyle}
	\bibliography{KM_references}

@article{Aaronson:1711.01053,
	author = {S. Aaronson},
	title = {Shadow Tomography of Quantum States},
	journal = {},
	volume = {},
	pages = {},
	year = {2017},
	doi = {10.48550/arXiv.1711.01053},
	archivePrefix = {arXiv},
	eprint = {1711.01053},
	primaryClass = {quant-ph}
}

@article{Liu.Bai.Scarani:2510.08447,
	author = {M. Liu and G. Bai and V. Scarani},
	title = {Unifying Quantum Smoothing Theories with Extended Retrodiction},
	journal = {},
	volume = {},
	pages = {},
	year = {2025},
	doi = {10.48550/arXiv.2510.08447},
	archivePrefix = {arXiv},
	eprint = {2510.08447},
	primaryClass = {quant-ph}
}

@article{Song.Kwon.Scarani:2510.26895,
	author = {M. Song, H. Kwon, and V. Scarani},
	title = {Exact and approximate conditions of tabletop reversibility: when is Petz recovery cost-free?},
	journal = {},
	volume = {},
	pages = {},
	year = {2025},
	doi = {10.48550/arXiv.2510.26895},
	archivePrefix = {arXiv},
	eprint = {2510.26895},
	primaryClass = {quant-ph}
}

@article{Chen.etal:2511.05941,
	author = {J. Chen and M. Song and J. J. X. Chan and V. Scarani},
	title = {The Petz recovery map for optical losses},
	journal = {},
	volume = {},
	pages = {},
	year = {2025},
	doi = {10.48550/arXiv.2511.05941},
	archivePrefix = {arXiv},
	eprint = {2511.05941},
	primaryClass = {quant-ph}
}

@article{Li.etal:2606.12020,
	author = {H. Li and J. Chen and Y. Pan and L. Xu and M. Song and V. Scarani and L. Zhang},
	title = {Experimental Tabletop Petz recovery of a photonic qubit},
	journal = {},
	volume = {},
	pages = {},
	year = {2026},
	doi = {10.48550/arXiv.2606.12020},
	archivePrefix = {arXiv},
	eprint = {2606.12020},
	primaryClass = {quant-ph}
}

@article{Kullback.Leibler:AnnMathStat:1951,
	author = {S. Kullback and R. A. Leibler},
	title = {On Information and Sufficiency},
	journal = {Ann. Math. Stat.},
	volume = {22},
	number = {1},
	pages = {79},
	@comment = {pages={79--86}},
	year = {1951},
	doi = {10.1214/aoms/1177729694}
}

@article{Kubo.Tomita:JPhysSocJpn:1954,
	author = {R. Kubo and K. Tomita},
	title = {A General Theory of Magnetic Resonance Absorption},
	journal = {J. Phys. Soc. Jpn.},
	volume = {9},
	number = {6},
	pages = {888},
	@comment = {pages={888--919}},
	year = {1954},
	doi = {10.1143/JPSJ.9.888}
}

@article{Umegaki:KodaiMathSemRep:1962,
	author = {H. Umegaki},
	title = {Conditional expectation in an operator algebra. IV. Entropy and information},
	journal = {Kodai Math. Sem. Rep.},
	volume = {14},
	number = {2},
	pages = {59},
	@comment = {pages={59--85}},
	year = {1962},
	doi = {10.2996/kmj/1138844604}
}

@article{Mori:ProgTheorPhys:1965,
	author = {H. Mori},
	title = {Transport, Collective Motion, and Brownian Motion},
	journal = {Prog. Theor. Phys.},
	volume = {33},
	number = {3},
	pages = {423},
	@comment = {pages={423--455}},
	year = {1965},
	doi = {10.1143/PTP.33.423}
}

@article{Lieb:AdvMath:1973,
	author  = {Lieb, Elliott H.},
	title   = {Convex Trace Functions and the {Wigner--Yanase--Dyson} Conjecture},
	journal = {Adv. Math.},
	volume  = {11},
	number  = {3},
	pages   = {267},
	@comment = {pages={267--288}},
	year    = {1973},
	doi     = {10.1016/0001-8708(73)90011-X}
}

@article{Petz:CommunMathPhys:1986,
	author = {D. Petz},
	title = {Sufficient subalgebras and the relative entropy of states of a von Neumann algebra},
	journal = {Commun. Math. Phys.},
	volume = {105},
	pages = {123},
	@comment = {pages={123--131}},
	year = {1986},
	doi = {10.1007/BF01212345},
	archivePrefix = {},
	eprint = {},
	primaryClass = {}
}

@article{Petz:QuartJMathOxfordSer:1988,
	author = {D. Petz},
	title = {Sufficiency of channels over von Neumann algebras},
	journal = {Quart. J. Math. Oxford Ser.},
	volume = {39},
	number = {1},
	pages = {97},
	@comment = {pages={97--108}},
	year = {1988},
	doi = {10.1093/qmath/39.1.97}
}

@article{Vogel.Risken:PhysRevA:1989,
	author = {K. Vogel and H. Risken},
	title = {Determination of quasiprobability distributions in terms of probability distributions for the rotated quadrature phase},
	journal = {Phys. Rev. A},
	volume = {40},
	pages = {2847(R)},
	@comment = {pages={2847--2849}},
	year = {1989},
	doi = {10.1103/PhysRevA.40.2847}
}

@article{Hiai.Petz:CommunMathPhys:1991,
	author = {F. Hiai and D. Petz},
	title = {The proper formula for relative entropy and its asymptotics in quantum probability},
	journal = {Commun. Math. Phys.},
	volume = {143},
	pages = {99},
	@comment = {pages={99--114}},
	year = {1991},
	doi = {10.1007/BF02100287}
}

@article{Petz.Toth:LettMathPhys:1993,
	author = {D. Petz and G. Tóth},
	title = {The Bogoliubov inner product in quantum statistics},
	journal = {Lett. Math. Phys.},
	volume = {27},
	number = {3},
	pages = {205},
	@comment = {pages={205--216}},
	year = {1993},
	doi = {10.1007/BF00739578}
}

@article{Petz:JMathPhys:1994,
	author = {D. Petz},
	title = {Geometry of canonical correlation on the state space of a quantum system},
	journal = {J. Math. Phys.},
	volume = {35},
	pages = {780},
	@comment = {pages={780--795}},
	year = {1994},
	doi = {10.1063/1.530611},
	archivePrefix = {},
	eprint = {},
	primaryClass = {}
}

@article{Petz:LinearAlgebraAppl:1996,
	author = {D. Petz},
	title = {Monotone metrics on matrix spaces},
	journal = {Linear Algebra Appl.},
	volume = {244},
	pages = {81},
	@comment = {pages={81--96}},
	year = {1996},
	doi = {10.1016/0024-3795(94)00211-8},
	archivePrefix = {},
	eprint = {},
	primaryClass = {}
}

@article{Hradil:PhysRevA:1997,
	author = {Z. Hradil},
	title = {Quantum-state estimation},
	journal = {Phys. Rev. A},
	volume = {55},
	pages = {R1561(R)},
	year = {1997},
	doi = {10.1103/PhysRevA.55.R1561},
	archivePrefix = {arXiv},
	eprint = {quant-ph/9609012},
	primaryClass = {quant-ph},
}

@article{Chuang.Nielsen:JModOpt:1997,
	author = {I. L. Chuang and M. A. Nielsen},
	title = {Prescription for experimental determination of the dynamics of a quantum black box},
	journal = {J. Mod. Opt.},
	volume = {44},
	pages = {2455},
	@comment = {pages={2455--2467}},
	year = {1997},
	doi = {10.1080/09500349708231894},
	archivePrefix = {arXiv},
	eprint = {quant-ph/9610001},
	primaryClass = {quant-ph}
}

@article{Poyatos.Cirac.Zoller:PhysRevLett:1997,
	author = {J. F. Poyatos and J. I. Cirac and P. Zoller},
	title = {Complete Characterization of a Quantum Process: The Two-Bit Quantum Gate},
	journal = {Phys. Rev. Lett.},
	volume = {78},
	pages = {390},
	@comment = {pages={390--393}},
	year = {1997},
	doi = {10.1103/PhysRevLett.78.390},
	archivePrefix = {arXiv},
	eprint = {quant-ph/9611013},
	primaryClass = {quant-ph}
}

@article{Banaszek.etal:PhysRevA:1999,
	author = {K. Banaszek, G. M. D'Ariano, M. G. A. Paris, and M. F. Sacchi},
	title = {Maximum-likelihood estimation of the density matrix},
	journal = {Phys. Rev. A},
	volume = {61},
	pages = {010304(R)},
	year = {1999},
	doi = {10.1103/PhysRevA.61.010304},
	archivePrefix = {arXiv},
	eprint = {quant-ph/9909052},
	primaryClass = {quant-ph},
}

@article{Ogawa.Nagaoka:IEEETransInfTheory:2000,
	author = {T. Ogawa and H. Nagaoka},
	title = {Strong converse and Stein's lemma in quantum hypothesis testing},
	journal = {IEEE Trans. Inf. Theory},
	volume = {46},
	number = {7},
	pages = {2428},
	@comment = {pages={2428--2433}},
	year = {2000},
	doi = {10.1109/18.887855}
}

@article{James.etal:PhysRevA:2001,
	author = {D. F. V. James and P. G. Kwiat and W. J. Munro and A. G. White},
	title = {Measurement of qubits},
	journal = {Phys. Rev. A},
	volume = {64},
	pages = {052312},
	year = {2001},
	doi = {10.1103/PhysRevA.64.052312},
	archivePrefix = {arXiv},
	eprint = {quant-ph/0103121},
	primaryClass = {quant-ph}
}

@article{Barnum.Knill:JMathPhys:2002,
	author = {H. Barnum and E. Knill},
	title = {Reversing quantum dynamics with near-optimal quantum and classical fidelity},
	journal = {J. Math. Phys.},
	volume = {43},
	number = {5},
	pages = {2097},
	@comment = {pages={2097--2106}},
	year = {2002},
	doi = {10.1063/1.1459754},
	archivePrefix = {arXiv},
	eprint = {quant-ph/0004088},
	primaryClass = {quant-ph}
}

@article{Ruskai:JMathPhys:2002,
	author = {M. B. Ruskai},
	title = {Inequalities for Quantum Entropy: A Review with Conditions for Equality},
	journal = {J. Math. Phys.},
	volume = {43},
	number = {9},
	pages = {4358},
	@comment = {pages={4358--4375}},
	year = {2002},
	doi = {10.1063/1.1497701},
	archivePrefix = {arXiv},
	eprint = {quant-ph/0205064},
	primaryClass = {quant-ph},
	note = {Erratum: \href{https://doi.org/10.1063/1.1824214}{J. Math. Phys. \textbf{46}, 019901 (2005)}}
}

@article{Jezek.Fiurasek.Hradil:PhysRevA:2003,
	author = {M. Ježek, J. Fiurášek, and Z. Hradil},
	title = {Quantum inference of states and processes},
	journal = {Phys. Rev. A},
	volume = {68},
	pages = {012305},
	year = {2003},
	doi = {10.1103/PhysRevA.68.012305},
	archivePrefix = {arXiv},
	eprint = {quant-ph/0210146},
	primaryClass = {quant-ph}
}

@article{Petz:RevMathPhys:2003,
	author = {D. Petz},
	title = {Monotonicity of quantum relative entropy revisited},
	journal = {Rev. Math. Phys.},
	volume = {15},
	number = {1},
	pages = {79},
	@comment = {pages={79--91}},
	year = {2003},
	doi = {10.1142/S0129055X03001576},
	archivePrefix = {arXiv},
	eprint = {quant-ph/0209053},
	primaryClass = {quant-ph}
}

@article{DAriano.Paris.Sacchi:AdvImagingElectronPhys:2003,
	author = {G. M. D'Ariano and M. G. A. Paris and M. F. Sacchi},
	title = {Quantum Tomography},
	journal = {Adv. Imaging Electron Phys.},
	volume = {128},
	pages = {205},
	@comment = {pages={205--308}},
	year = {2003},
	doi = {10.1016/S1076-5670(03)80065-4},
	archivePrefix = {arXiv},
	eprint = {quant-ph/0302028},
	primaryClass = {quant-ph}
}

@article{Hayden.etal:CommunMathPhys:2004,
	author = {P. Hayden, R. Jozsa, D. Petz, and A. Winter},
	title = {Structure of States Which Satisfy Strong Subadditivity of Quantum Entropy with Equality},
	journal = {Commun. Math. Phys.},
	volume = {246},
	pages = {359},
	@comment = {pages={359--374}},
	year = {2004},
	doi = {10.1007/s00220-004-1049-z},
	archivePrefix = {arXiv},
	eprint = {quant-ph/0304007},
	primaryClass = {quant-ph}
}

@article{Chan.Darwiche:ArtifIntell:2005,
	author = {H. Chan and A. Darwiche},
	title = {On the revision of probabilistic beliefs using uncertain evidence},
	journal = {Artif. Intell.},
	volume = {163},
	number = {1},
	pages = {67},
	@comment = {pages={67-90}},
	year = {2005},
	doi = {10.1016/j.artint.2004.09.005}
}

@article{Bennett.etal:PhysRevA:2006,
	author = {C. H. Bennett and P. W. Shor and J. A. Smolin and A. V. Thapliyal},
	title = {Universal quantum data compression via nondestructive tomography},
	journal = {Phys. Rev. A},
	volume = {73},
	pages = {032336},
	year = {2006},
	doi = {10.1103/PhysRevA.73.032336},
	archivePrefix = {arXiv},
	eprint = {quant-ph/0403078},
	primaryClass = {quant-ph}
}

@article{Rehacek.etal:PhysRevA:2007,
	author = {J. Řeháček, Z. Hradil, E. Knill, A. I. Lvovsky},
	title = {Diluted maximum-likelihood algorithm for quantum tomography},
	journal = {Phys. Rev. A},
	volume = {75},
	pages = {042108},
	year = {2007},
	doi = {10.1103/PhysRevA.75.042108},
	archivePrefix = {arXiv},
	eprint = {quant-ph/0611244},
	primaryClass = {quant-ph}
}

@article{Blume-Kohout:NewJPhys:2010,
	author = {R. Blume-Kohout},
	title = {Optimal, reliable estimation of quantum states},
	journal = {New J. Phys.},
	volume = {12},
	number = {4},
	pages = {043034},
	year = {2010},
	doi = {10.1088/1367-2630/12/4/043034},
	archivePrefix = {arXiv},
	eprint = {quant-ph/0611080},
	primaryClass = {quant-ph}
}

@article{Gross.etal:PhysRevLett:2010,
	author = {D. Gross, Y.-K. Liu, S. T. Flammia, S. Becker, and J. Eisert},
	title = {Quantum State Tomography via Compressed Sensing},
	journal = {Phys. Rev. Lett.},
	volume = {105},
	pages = {150401},
	year = {2010},
	doi = {10.1103/PhysRevLett.105.150401},
	archivePrefix = {arXiv},
	eprint = {0909.3304},
	primaryClass = {quant-ph}
}

@article{Ng.Mandayam:PhysRevA:2010,
	author = {H. K. Ng and P. Mandayam},
	title = {Simple approach to approximate quantum error correction based on the transpose channel},
	journal = {Phys. Rev. A},
	volume = {81},
	pages = {062342},
	year = {2010},
	doi = {10.1103/PhysRevA.81.062342},
	archivePrefix = {arXiv},
	eprint = {0909.0931},
	primaryClass = {quant-ph}
}

@article{BlumeKohout:PhysRevLett:2010,
	author = {R. Blume-Kohout},
	title = {Hedged Maximum Likelihood Quantum State Estimation},
	journal = {Phys. Rev. Lett.},
	volume = {105},
	pages = {200504},
	year = {2010},
	doi = {10.1103/PhysRevLett.105.200504},
	archivePrefix = {arXiv},
	eprint = {1001.2029},
	primaryClass = {quant-ph}
}

@article{Cramer.etal:NatCommun:2010,
	author = {M. Cramer and M. B. Plenio and S. T. Flammia and D. Gross and S. D. Bartlett and R. Somma and O. Landon-Cardinal and Y.-K. Liu and D. Poulin},
	title = {Efficient quantum state tomography},
	journal = {Nat. Commun.},
	volume = {1},
	pages = {149},
	year = {2010},
	doi = {10.1038/ncomms1147},
	archivePrefix = {arXiv},
	eprint = {1101.4366},
	primaryClass = {quant-ph}
}

@article{HuszarHoulsby:PhysRevA:2012,
	author = {F. Huszár and N. M. T. Houlsby},
	title = {Adaptive Bayesian quantum tomography},
	journal = {Phys. Rev. A},
	volume = {85},
	issue = {5},
	pages = {052120},
	year = {2012},
	doi = {10.1103/PhysRevA.85.052120},
	archivePrefix = {arXiv},
	eprint = {1107.0895},
	primaryClass = {quant-ph}
}

@article{Flammia.etal:NewJPhys:2012,
	author = {S. T. Flammia and D. Gross and Y.-K. Liu and J. Eisert},
	title = {Quantum tomography via compressed sensing: error bounds, sample complexity and efficient estimators},
	journal = {New J. Phys.},
	volume = {14},
	pages = {095022},
	year = {2012},
	doi = {10.1088/1367-2630/14/9/095022},
	archivePrefix = {arXiv},
	eprint = {1205.2300},
	primaryClass = {quant-ph}
}

@article{Leifer.Spekkens:PhysRevA:2013,
	author = {M. S. Leifer and R. W. Spekkens},
	title = {Towards a formulation of quantum theory as a causally neutral theory of Bayesian inference},
	journal = {Phys. Rev. A},
	volume = {88},
	pages = {052130},
	year = {2013},
	doi = {10.1103/PhysRevA.88.052130},
	archivePrefix = {arXiv},
	eprint = {1107.5849},
	primaryClass = {quant-ph}
}

@article{Wu:SciRep:2013,
	author = {S. Wu},
	title = {State tomography via weak measurements},
	journal = {Sci. Rep.},
	volume = {3},
	pages = {1193},
	year = {2013},
	doi = {10.1038/srep01193},
	archivePrefix = {arXiv},
	eprint = {1212.3655},
	primaryClass = {quant-ph}
}

@article{Ferrie:PhysRevLett:2014,
	author = {C. Ferrie},
	title = {Self-Guided Quantum Tomography},
	journal = {Phys. Rev. Lett.},
	volume = {113},
	pages = {190404},
	year = {2014},
	doi = {10.1103/PhysRevLett.113.190404},
	archivePrefix = {arXiv},
	eprint = {1406.4101},
	primaryClass = {quant-ph}
}

@article{Fawzi.Renner:CommunMathPhys:2015,
	author = {O. Fawzi and R. Renner},
	title = {Quantum Conditional Mutual Information and Approximate Markov Chains},
	journal = {Commun. Math. Phys.},
	volume = {340},
	pages = {575},
	@comment = {pages={575--611}},
	year = {2015},
	doi = {10.1007/s00220-015-2466-x},
	archivePrefix = {arXiv},
	eprint = {1410.0664},
	primaryClass = {quant-ph}
}

@article{Kueng.Ferrie:NewJPhys:2015,
	author = {R. Kueng and C. Ferrie},
	title = {Near-optimal quantum tomography: estimators and bounds},
	journal = {New J. Phys.},
	volume = {17},
	number = {12},
	pages = {123013},
	year = {2015},
	doi = {10.1088/1367-2630/17/12/123013},
	archivePrefix = {arXiv},
	eprint = {1503.00677},
	primaryClass = {quant-ph}
}

@article{Wilde:ProcRSocA:2015,
	author = {M. M. Wilde},
	title = {Recoverability in quantum information theory},
	journal = {Proc. R. Soc. A},
	volume = {471},
	pages = {20150338},
	year = {2015},
	doi = {10.1098/rspa.2015.0338},
	archivePrefix = {arXiv},
	eprint = {1505.04661},
	primaryClass = {quant-ph}
}

@article{Beigi.Datta.Leditzky:JMathPhys:2016,
	author = {S. Beigi and N. Datta and F. Leditzky},
	title = {Decoding quantum information via the Petz recovery map},
	journal = {J. Math. Phys.},
	volume = {57},
	pages = {082203},
	year = {2016},
	doi = {10.1063/1.4961515},
	archivePrefix = {arXiv},
	eprint = {1504.04449},
	primaryClass = {quant-ph}
}

@article{Sutter.Fawzi.Renner:ProcRSocA:2016,
	author = {D. Sutter, O. Fawzi, and R. Renner},
	title = {Universal recovery map for approximate Markov chains},
	journal = {Proc. R. Soc. A},
	volume = {472},
	pages = {20150623},
	year = {2016},
	doi = {10.1098/rspa.2015.0623},
	archivePrefix = {arXiv},
	eprint = {1504.07251},
	primaryClass = {quant-ph}
}

@article{Sutter.Tomamichel.Harrow:IEEETransInfTheory:2016,
	author = {D. Sutter, M. Tomamichel, and A. W. Harrow},
	title = {Strengthened Monotonicity of Relative Entropy via Pinched Petz Recovery Map},
	journal = {IEEE Trans. Inf. Theory},
	volume = {62},
	number = {5},
	pages = {2907},
	@comment = {pages={2907--2913}},
	year = {2016},
	doi = {10.1109/TIT.2016.2545680},
	archivePrefix = {arXiv},
	eprint = {1507.00303},
	primaryClass = {quant-ph}
}

@article{Granade.Combes.Cory:NewJPhys:2016,
	author = {C. Granade and J. Combes and D. G. Cory},
	title = {Practical Bayesian tomography},
	journal = {New J. Phys.},
	volume = {18},
	pages = {033024},
	year = {2016},
	doi = {10.1088/1367-2630/18/3/033024},
	archivePrefix = {arXiv},
	eprint = {1509.03770},
	primaryClass = {quant-ph}
}

@article{Struchalin.etal:PhysRevA:2016,
	author = {G. I. Struchalin and I. A. Pogorelov and S. S. Straupe and K. S. Kravtsov and I. V. Radchenko and S. P. Kulik},
	title = {Experimental adaptive quantum tomography of two-qubit states},
	journal = {Phys. Rev. A},
	volume = {93},
	pages = {012103},
	year = {2016},
	doi = {10.1103/PhysRevA.93.012103},
	archivePrefix = {arXiv},
	eprint = {1510.05303},
	primaryClass = {quant-ph}
}

@article{Chapman.Ferrie.Peruzzo:PhysRevLett:2016,
	author = {R. J. Chapman and C. Ferrie and A. Peruzzo},
	title = {Experimental Demonstration of Self-Guided Quantum Tomography},
	journal = {Phys. Rev. Lett.},
	volume = {117},
	pages = {040402},
	year = {2016},
	doi = {10.1103/PhysRevLett.117.040402},
	archivePrefix = {arXiv},
	eprint = {1602.04194},
	primaryClass = {quant-ph}
}

@article{Straupe:JetpLett:2016,
	author = {S. S. Straupe},
	title = {Adaptive quantum tomography},
	journal = {JETP Lett.},
	volume = {104},
	pages = {510},
	@comment = {pages={510-522}},
	year = {2016},
	doi = {10.1134/S0021364016190024},
	archivePrefix = {arXiv},
	eprint = {1610.02840},
	primaryClass = {quant-ph}
}

@article{Sutter.Berta.Tomamichel:CommunMathPhys:2017,
	author = {D. Sutter and M. Berta and M. Tomamichel},
	title = {Multivariate Trace Inequalities},
	journal = {Commun. Math. Phys.},
	volume = {352},
	pages = {37},
	@comment = {pages={37--58}},
	year = {2017},
	doi = {10.1007/s00220-016-2778-5},
	archivePrefix = {arXiv},
	eprint = {1604.03023},
	primaryClass = {math-ph}
}

@article{Lanyon.etal:NatPhys:2017,
	author = {B. P. Lanyon and C. Maier and M. Holz{\"a}pfel and T. Baumgratz and C. Hempel and P. Jurcevic and I. Dhand and A. S. Buyskikh and A. J. Daley and M. Cramer and M. B. Plenio and R. Blatt and C. F. Roos},
	title = {Efficient tomography of a quantum many-body system},
	journal = {Nat. Phys.},
	volume = {13},
	pages = {1158},
	@comment = {pages={1158--1162}},
	year = {2017},
	doi = {10.1038/nphys4244},
	archivePrefix = {arXiv},
	eprint = {1612.08000},
	primaryClass = {quant-ph}
}

@article{Bolduc.etal:NPJQuantumInf:2017,
	author = {E. Bolduc and G. C. Knee and E. M. Gauger and J. Leach},
	title = {Projected gradient descent algorithms for quantum state tomography},
	journal = {npj Quantum Inf.},
	volume = {3},
	pages = {44},
	year = {2017},
	doi = {10.1038/s41534-017-0043-1},
	archivePrefix = {arXiv},
	eprint = {1612.09531},
	primaryClass = {quant-ph}
}

@article{Wilming.Gallego.Eisert:Entropy:2017,
    author = {H. Wilming and R. Gallego and J. Eisert},
    title = {Axiomatic Characterization of the Quantum Relative Entropy and Free Energy},
    journal = {Entropy},
    volume = {19},
    pages = {241},
    year = {2017},
    doi = {10.3390/e19060241},
    archivePrefix = {arxiv},
    eprint = {1702.08473},
    primaryClass = {quant-ph}
}

@article{Junge.etal:AnnHenriPoincare:2018,
	author = {M. Junge, R. Renner, D. Sutter, M. M. Wilde, and A. Winter},
	title = {Universal Recovery Maps and Approximate Sufficiency of Quantum Relative Entropy},
	journal = {Ann. Henri Poincaré},
	volume = {19},
	pages = {2955},
	@comment = {pages={2955--2978}},
	year = {2018},
	doi = {10.1007/s00023-018-0716-0},
	archivePrefix = {arXiv},
	eprint = {1509.07127},
	primaryClass = {quant-ph}
}

@article{Aberg:PhysRevX:2018,
	author = {J. {\AA}berg},
	title = {Fully Quantum Fluctuation Theorems},
	journal = {Phys. Rev. X},
	volume = {8},
	pages = {011019},
	year = {2018},
	doi = {10.1103/PhysRevX.8.011019},
	archivePrefix = {arXiv},
	eprint = {1601.01302},
	primaryClass = {quant-ph}
}

@article{Torlai.etal:NatPhys:2018,
	author = {G. Torlai and G. Mazzola and J. Carrasquilla and M. Troyer and R. Melko and G. Carleo},
	title = {Neural-network quantum state tomography},
	journal = {Nat. Phys.},
	volume = {14},
	pages = {447},
	@comment = {pages={447--450}},
	year = {2018},
	doi = {10.1038/s41567-018-0048-5},
	archivePrefix = {arXiv},
	eprint = {1703.05334},
	primaryClass = {cond-mat.dis-nn}
}

@article{Cotler.etal:PhysRevX:2019,
	author = {J. Cotler and P. Hayden and G. Penington and G. Salton and B. Swingle and M. Walter},
	title = {Entanglement Wedge Reconstruction via Universal Recovery Channels},
	journal = {Phys. Rev. X},
	volume = {9},
	pages = {031011},
	year = {2019},
	doi = {10.1103/PhysRevX.9.031011},
	archivePrefix = {arXiv},
	eprint = {1704.05839},
	primaryClass = {hep-th}
}

@article{Xin.etal:NPJQuantumInf:2019,
	author = {T. Xin and S. Lu and N. Cao and G. Anikeeva and D. Lu and J. Li and G. Long and B. Zeng},
	title = {Local-measurement-based quantum state tomography via neural networks},
	journal = {npj Quantum Inf.},
	volume = {5},
	pages = {109},
	year = {2019},
	doi = {10.1038/s41534-019-0222-3},
	archivePrefix = {arXiv},
	eprint = {1807.07445},
	primaryClass = {quant-ph}
}

@article{Kwon.Kim:PhysRevX:2019,
	author = {H. Kwon and M. S. Kim},
	title = {Fluctuation Theorems for a Quantum Channel},
	journal = {Phys. Rev. X},
	volume = {9},
	pages = {031029},
	year = {2019},
	doi = {10.1103/PhysRevX.9.031029},
	archivePrefix = {arXiv},
	eprint = {1810.03150},
	primaryClass = {quant-ph}
}

@article{Carrasquilla.etal:NatMachIntell:2019,
	author = {J. Carrasquilla and G. Torlai and R. G. Melko and L. Aolita},
	title = {Reconstructing quantum states with generative models},
	journal = {Nat. Mach. Intell.},
	volume = {1},
	pages = {155},
	@comment = {pages={155--161}},
	year = {2019},
	doi = {10.1038/s42256-019-0028-1},
	archivePrefix = {arXiv},
	eprint = {1810.10584},
	primaryClass = {quant-ph}
}

@article{Guta.etal:JPhysA:2020,
	author = {M. Guţă, J. Kahn, R. Kueng, and J. A. Tropp},
	title = {Fast state tomography with optimal error bounds},
	journal = {J. Phys. A: Math. Theor.},
	volume = {53},
	pages = {204001},
	year = {2020},
	doi = {10.1088/1751-8121/ab8111},
	archivePrefix = {arXiv},
	eprint = {1809.11162},
	primaryClass = {quant-ph}
}

@article{Chen.Penington.Salton:JHEP:2020,
	author = {C.-F. Chen and G. Penington and G. Salton},
	title = {Entanglement wedge reconstruction using the Petz map},
	journal = {J. High Energy Phys.},
	volume = {01},
	pages = {168},
	year = {2020},
	doi = {10.1007/JHEP01(2020)168},
	archivePrefix = {arXiv},
	eprint = {1902.02844},
	primaryClass = {hep-th}
}

@article{Palmieri.etal:NPJQuantumInf:2020,
	author = {A. M. Palmieri and E. Kovlakov and F. Bianchi and D. Yudin and S. Straupe and J. D. Biamonte and S. Kulik},
	title = {Experimental neural network enhanced quantum tomography},
	journal = {npj Quantum Inf.},
	volume = {6},
	pages = {20},
	year = {2020},
	doi = {10.1038/s41534-020-0248-6},
	archivePrefix = {arXiv},
	eprint = {1904.05902},
	primaryClass = {quant-ph}
}

@article{Huang.Kueng.Preskill:NatPhys:2020,
	author = {H.-Y. Huang and R. Kueng and J. Preskill},
	title = {Predicting many properties of a quantum system from very few measurements},
	journal = {Nat. Phys.},
	volume = {16},
	pages = {1050},
	@comment = {pages={1050--1057}},
	year = {2020},
	doi = {10.1038/s41567-020-0932-7},
	archivePrefix = {arXiv},
	eprint = {2002.08953},
	primaryClass = {quant-ph}
}

@article{Lukens.etal:NewJPhys:2020,
	author = {J. M. Lukens and K. J. H. Law and A. Jasra and P. Lougovski},
	title = {A practical and efficient approach for Bayesian quantum state estimation},
	journal = {New J. Phys.},
	volume = {22},
	pages = {063038},
	year = {2020},
	doi = {10.1088/1367-2630/ab8efa},
	archivePrefix = {arXiv},
	eprint = {2002.10354},
	primaryClass = {quant-ph}
}

@article{Lohani.etal:MachLearnSciTechnol:2020,
	author = {S. Lohani and B. T. Kirby and M. Brodsky and O. Danaci and R. T. Glasser},
	title = {Machine learning assisted quantum state estimation},
	journal = {Mach. Learn.: Sci. Technol.},
	volume = {1},
	pages = {035007},
	year = {2020},
	doi = {10.1088/2632-2153/ab9a21},
	archivePrefix = {arXiv},
	eprint = {2003.03441},
	primaryClass = {quant-ph}
}

@article{Buscemi.Scarani:PhysRevE:2021,
	author = {F. Buscemi and V. Scarani},
	title = {Fluctuation theorems from Bayesian retrodiction},
	journal = {Phys. Rev. E},
	volume = {103},
	pages = {052111},
	year = {2021},
	doi = {10.1103/PhysRevE.103.052111},
	archivePrefix = {arXiv},
	eprint = {2009.02849},
	primaryClass = {quant-ph}
}

@article{Aw.Buscemi.Scarani:AVSQuantumSci:2021,
	author = {C. C. Aw and F. Buscemi and V. Scarani},
	title = {Fluctuation theorems with retrodiction rather than reverse processes},
	journal = {AVS Quantum Sci.},
	volume = {3},
	pages = {045601},
	year = {2021},
	doi = {10.1116/5.0060893},
	archivePrefix = {arXiv},
	eprint = {2106.08589},
	primaryClass = {cond-mat.stat-mech}
}

@article{Jacobs:ElectronProcTheorComputSci.2021,
	author = {B. Jacobs},
	title = {Learning from What's Right and Learning from What's Wrong},
	journal = {Electron. Proc. Theor. Comput. Sci.},
	volume = {351},
	pages = {116},
	@comment = {pages={116-133}},
	year = {2021},
	doi = {10.4204/EPTCS.351.8},
	archivePrefix = {arXiv},
	eprint = {2112.14045},
	primaryClass = {cs.LO}
}

@article{Tsang:PhysRevA:2022,
	author = {M. Tsang},
	title = {Generalized conditional expectations for quantum retrodiction and smoothing},
	journal = {Phys. Rev. A},
	volume = {105},
	pages = {042213},
	year = {2022},
	doi = {10.1103/PhysRevA.105.042213},
	archivePrefix = {arXiv},
	eprint = {1912.02711},
	primaryClass = {quant-ph}
}

@article{Parzygnat.Russo:LinearAlgebrAppl:2022,
	author = {A. J. Parzygnat and B. P. Russo},
	title = {A non-commutative Bayes' theorem},
	journal = {Linear Algebr. Appl.},
	volume = {644},
	pages = {28},
	@comment = {pages={28-94}},
	year = {2022},
	doi = {10.1016/j.laa.2022.02.030},
	archivePrefix = {arXiv},
	eprint = {2005.03886},
	primaryClass = {quant-ph}
}

@article{Gilyen.etal:PhysRevLett:2022,
	author = {A. Gilyén, S. Lloyd, I. Marvian, Y. Quek, and M. M. Wilde},
	title = {Quantum Algorithm for Petz Recovery Channels and Pretty Good Measurements},
	journal = {Phys. Rev. Lett.},
	volume = {128},
	pages = {220502},
	year = {2022},
	doi = {10.1103/PhysRevLett.128.220502},
	archivePrefix = {arXiv},
	eprint = {2006.16924},
	primaryClass = {quant-ph}
}

@article{Chapman.etal:OptExpress:2022,
	author = {J. C. Chapman and J. M. Lukens and B. Qi and R. C. Pooser and N. A. Peters},
	title = {Bayesian homodyne and heterodyne tomography},
	journal = {Opt. Express},
	volume = {30},
	pages = {15184},
	@comment = {pages={15184--15200}},
	year = {2022},
	doi = {10.1364/OE.456597},
	archivePrefix = {arXiv},
	eprint = {2202.03499},
	primaryClass = {quant-ph}
}

@article{Stricker.etal:PRXQuantum:2022,
	author = {R. Stricker and M. Meth and L. Postler and C. Edmunds and C. Ferrie and R. Blatt and P. Schindler and T. Monz and R. Kueng and M. Ringbauer},
	title = {Experimental Single-Setting Quantum State Tomography},
	journal = {PRX Quantum},
	volume = {3},
	pages = {040310},
	year = {2022},
	doi = {10.1103/PRXQuantum.3.040310},
	archivePrefix = {arXiv},
	eprint = {2206.00019},
	primaryClass = {quant-ph}
}

@article{Surace.Scandi:Quantum:2023,
	author = {J. Surace and M. Scandi},
	title = {State retrieval beyond Bayes' retrodiction},
	journal = {Quantum},
	volume = {7},
	pages = {990},
	year = {2023},
	doi = {10.22331/q-2023-04-27-990},
	archivePrefix = {arXiv},
	eprint = {2201.09899},
	primaryClass = {quant-ph}
}

@article{Parzygnat.Buscemi:Quantum:2023,
	author = {A. J. Parzygnat and F. Buscemi},
	title = {Axioms for retrodiction: achieving time-reversal symmetry with a prior},
	journal = {Quantum},
	volume = {7},
	pages = {1013},
	year = {2023},
	doi = {10.22331/q-2023-05-23-1013},
	archivePrefix = {arXiv},
	eprint = {2210.13531},
	primaryClass = {quant-ph}
}

@article{Buscemi.Schindler.Safranek:NewJPhys:2023,
	author = {F. Buscemi, J. Schindler, and D. Šafránek},
	title = {Observational entropy, coarse-grained states, and the Petz recovery map: information-theoretic properties and bounds},
	journal = {New J. Phys.},
	volume = {25},
	pages = {053002},
	year = {2023},
	doi = {10.1088/1367-2630/accd11},
	archivePrefix = {arXiv},
	eprint = {2209.03803},
	primaryClass = {quant-ph}
}

@article{Parzygnat.Fullwood:PRXQuantum:2023,
	author = {A. J. Parzygnat and J. Fullwood},
	title = {From Time-Reversal Symmetry to Quantum Bayes' Rules},
	journal = {PRX Quantum},
	volume = {4},
	pages = {020334},
	year = {2023},
	doi = {10.1103/PRXQuantum.4.020334},
	archivePrefix = {arXiv},
	eprint = {2212.08088},
	primaryClass = {quant-ph}
}

@article{Aw.etal:PRXQuantum:2023,
	author = {C. C. Aw and K. Onggadinata and D. Kaszlikowski and V. Scarani},
	title = {Quantum Bayesian Inference in Quasiprobability Representations},
	journal = {PRX Quantum},
	volume = {4},
	pages = {020352},
	year = {2023},
	doi = {10.1103/PRXQuantum.4.020352},
	archivePrefix = {arXiv},
	eprint = {2301.01952},
	primaryClass = {quant-ph}
}

@article{Vernooij.Wirth:CommunMathPhys:2023,
	author = {M. Vernooij and M. Wirth},
	title = {Derivations and KMS-Symmetric Quantum Markov Semigroups},
	journal = {Commun. Math. Phys.},
	volume = {403},
	pages = {381},
	@comment = {pages={381-416}},
	year = {2023},
	doi = {10.1007/s00220-023-04795-6},
	archivePrefix = {arXiv},
	eprint = {2303.15949},
	primaryClass = {math.OA}
}

@article{Aw.etal:PRXQuantum:2024,
	author = {C. C. Aw, L. H. Zaw, M. Balanzó-Juandó, and V. Scarani},
	title = {Role of Dilations in Reversing Physical Processes: Tabletop Reversibility and Generalized Thermal Operations},
	journal = {PRX Quantum},
	volume = {5},
	pages = {010332},
	year = {2024},
	doi = {10.1103/PRXQuantum.5.010332},
	archivePrefix = {arXiv},
	eprint = {2308.13909},
	primaryClass = {quant-ph}
}

@article{Zheng.etal:PhysRevLett:2024,
	author = {G. Zheng and W. He and G. Lee and L. Jiang},
	title = {Near-Optimal Performance of Quantum Error Correction Codes},
	journal = {Phys. Rev. Lett.},
	volume = {132},
	pages = {250602},
	year = {2024},
	doi = {10.1103/PhysRevLett.132.250602},
	archivePrefix = {arXiv},
	eprint = {2401.02022},
	primaryClass = {quant-ph}
}

@article{Scandi.etal:RepProgPhys:2025,
	author = {M. Scandi and P. Abiuso and J. Surace and D. De Santis},
	title = {Quantum Fisher information and its dynamical nature},
	journal = {Rep. Prog. Phys.},
	volume = {88},
	pages = {076001},
	year = {2025},
	doi = {10.1088/1361-6633/ade453},
	archivePrefix = {arXiv},
	eprint = {2304.14984},
	primaryClass = {quant-ph}
}

@article{Qin.etal:NPJQuantumInf:2025,
	author = {Z. Qin and J. M. Lukens and B. T. Kirby and Z. Zhu},
	title = {Enhancing quantum state reconstruction with structured classical shadows},
	journal = {npj Quantum Inf.},
	volume = {11},
	pages = {147},
	year = {2025},
	doi = {10.1038/s41534-025-01101-1},
	archivePrefix = {arXiv},
	eprint = {2501.03144},
	primaryClass = {quant-ph}
}

@article{Liu.etal:PhysRevA:2025,
	author = {M. Liu and V. Scarani and A. Auff{\`e}ves and K. T. Laverick},
	title = {Retrodictive approach to quantum state smoothing},
	journal = {Phys. Rev. A},
	volume = {112},
	pages = {L030203},
	year = {2025},
	doi = {10.1103/8pc3-7pg5},
	archivePrefix = {arXiv},
	eprint = {2501.15986},
	primaryClass = {quant-ph}
}

@article{Png.Scarani:PhysRevA:2025,
	author = {W.-H. Png and V. Scarani},
	title = {Petz recovery maps of single-qubit decoherence channels in an ion trap quantum processor},
	journal = {Phys. Rev. A},
	volume = {112},
	pages = {022613},
	year = {2025},
	doi = {10.1103/7f8x-n2np},
	archivePrefix = {arXiv},
	eprint = {2504.20399},
	primaryClass = {quant-ph}
}

@article{Bai.Buscemi.Scarani:PhysRevLett:2025,
	author = {G. Bai, F. Buscemi, and V. Scarani},
	title = {Quantum Bayes’ Rule and Petz Transpose Map from the Minimum Change Principle},
	journal = {Phys. Rev. Lett.},
	volume = {135},
	pages = {090203},
	year = {2025},
	doi = {10.1103/5n4p-bxhm},
	archivePrefix = {arXiv},
	eprint = {2410.00319},
	primaryClass = {quant-ph}
}

@article{Gaikwad.etal:QuantumSciTechnol:2025,
	author = {A. Gaikwad, M. S. Torres, S. Ahmed, and A. F. Kockum},
	title = {Gradient-descent methods for fast quantum state tomography},
	journal = {Quantum Sci. Technol.},
	volume = {10},
	pages = {045055},
	year = {2025},
	doi = {10.1088/2058-9565/ae0baa},
	archivePrefix = {arXiv},
	eprint = {2503.04526},
	primaryClass = {quant-ph}
}

@article{Aditi.Becker:PhysRevA:2025,
	author = {K. Aditi and S. Becker},
	title = {Rigorous maximum-likelihood estimation for quantum states},
	journal = {Phys. Rev. A},
	volume = {112},
	pages = {052436},
	year = {2025},
	doi = {10.1103/j5gh-hmtw},
	archivePrefix = {arXiv},
	eprint = {2506.16646},
	primaryClass = {quant-ph}
}

@article{Liu.Scarani.Bai:PhysRevLett:2026,
	author = {M. Liu and V. Scarani and G. Bai},
	title = {Proper and Improper Mixed States Serve as Different Prior Beliefs for Quantum State Retrodiction},
	journal = {Phys. Rev. Lett.},
	volume = {136},
	pages = {060203},
	year = {2026},
	doi = {10.1103/xx43-p1py},
	archivePrefix = {arXiv},
	eprint = {2502.10030},
	primaryClass = {quant-ph}
}

@article{Singh.etal:PhysRevA:2026,
	author = {G. Singh and R. S. Sahani and V. Jagadish and L. Lautenbacher and N. K. Bernardes and K. Dorai},
	title = {Realizing the Petz recovery map on an NMR quantum processor},
	journal = {Phys. Rev. A},
	volume = {113},
	pages = {052415},
	year = {2026},
	doi = {10.1103/xd6k-swv7},
	archivePrefix = {arXiv},
	eprint = {2508.08998},
	primaryClass = {quant-ph}
}

@article{Lvovsky.Raymer:RevModPhys:2009,
	author = {A. I. Lvovsky and M. G. Raymer},
	title = {Continuous-variable optical quantum-state tomography},
	journal = {Rev. Mod. Phys.},
	volume = {81},
	pages = {299},
	@comment = {pages={299--332}},
	year = {2009},
	doi = {10.1103/RevModPhys.81.299},
	archivePrefix = {arXiv},
	eprint = {quant-ph/0511044},
	primaryClass = {quant-ph}
}

@article{Scarani.etal:RevModPhys:2009,
	author = {V. Scarani and H. Bechmann-Pasquinucci and N. J. Cerf and M. Du{\v{s}}ek and N. L{\"u}tkenhaus and M. Peev},
	title = {The security of practical quantum key distribution},
	journal = {Rev. Mod. Phys.},
	volume = {81},
	pages = {1301},
	@comment = {pages={1301--1350}},
	year = {2009},
	doi = {10.1103/RevModPhys.81.1301},
	archivePrefix = {arXiv},
	eprint = {0802.4155},
	primaryClass = {quant-ph}
}

@article{Degen.Reinhard.Cappellaro:RevModPhys:2017,
	author = {C. L. Degen and F. Reinhard and P. Cappellaro},
	title = {Quantum sensing},
	journal = {Rev. Mod. Phys.},
	volume = {89},
	pages = {035002},
	year = {2017},
	doi = {10.1103/RevModPhys.89.035002},
	archivePrefix = {arXiv},
	eprint = {1611.02427},
	primaryClass = {quant-ph}
}

@book{Pearl:Book:1988,
	author = {J. Pearl},
	title = {Probabilistic Reasoning in Intelligent Systems: Networks of Plausible Inference},
	publisher = {Morgan Kaufmann Publishers},
	address = {San Mateo, CA},
	year = {1988},
	doi = {10.1016/C2009-0-27609-4}
}

@book{Jeffrey:Book:1990,
	author = {R. C. Jeffrey},
	title = {The Logic of Decision},
	publisher = {The University of Chicago Press},
	address = {Chicago, IL},
	year = {1990},
	edition = {2},
	doi = {}
}

@book{Leonhardt:Book:1997,
	author = {U. Leonhardt},
	title = {Measuring the Quantum State of Light},
	series = {Cambridge Studies in Modern Optics},
	volume = {22},
	publisher = {Cambridge University Press},
	address = {Cambridge},
	year = {1997}
}

@book{Paris.Rehacek:Book:2004,
	author = {M. Paris and J. Řeháček (Eds.)},
	title = {Quantum State Estimation},
    series = {Lecture Notes in Physics},
	volume = {649},
	publisher = {Springer Berlin, Heidelberg},
    doi = {10.1007/b98673},
	year = {2004}
}

@book{Conway:Book:2007,
	author = {J. B. Conway},
	title = {A Course in Functional Analysis},
	series = {Graduate Texts in Mathematics},
	volume = {96},
	edition = {2},
	publisher = {Springer},
	address = {New York},
	year = {2007},
	doi = {10.1007/978-1-4757-4383-8}
}

@book{Higham:Book:2008,
	author = {N. J. Higham},
	title = {Functions of Matrices: Theory and Computation},
	publisher = {Society for Industrial and Applied Mathematics},
	address = {Philadelphia, PA},
	year = {2008},
	doi = {10.1137/1.9780898717778}
}

@book{Nielsen.Chuang:Book:2010,
	author = {M. A. Nielsen and I. L. Chuang},
	title = {Quantum Computation and Quantum Information},
	publisher = {Cambridge University Press},
	doi = {10.1017/cbo9780511976667},
	year = {2010}
}

	\appendix
	\renewcommand{\thesubsection}{\thesection.\arabic{subsection}}
	
	\section{Supplementary Material} \label{app:supplementary.material}
	
	\vspace*{-3mm}
	
	\appendixtoc
	\appendixtocentriestrue
	
	\subsection{Modular-frequency and integral forms \newline of the inverse KM action} \label{app:subsec:KM.modular.form}
	Let $Y \!=\! \sum_i y_i\ketbra{i}{i}$, $y_i \!>\! 0$. Then each matrix unit $\ketbra{i}{j}$ is an eigenoperator of the modular generator $K_Y \defeq [\ln Y,\cdot]$ 
	with eigenvalue
	\begin{equation}
		K_Y(\ketbra{i}{j}) = (\ln y_i-\ln y_j)\ketbra{i}{j} \eqdef \omega_{ij}\ketbra{i}{j}.
	\end{equation}
	In the same basis, the inverse KM action is diagonal,
	\begin{equation}
		\big(\Omega_Y^{-1}\!(X)\big)_{ij} = \frac{\ln y_i-\ln y_j}{y_i-y_j}\,X_{ij},
	\end{equation}
	with the diagonal case understood by continuity. Since
	\begin{equation}
		y_i - y_j = 2\, \sqrt{y_i y_j}\, \sinh(\omega_{ij}/2),
	\end{equation}
	we obtain
	\begin{equation}
		\big(\Omega_Y^{-1}\!(X)\big)_{ij} = k(\omega_{ij})\,\frac{X_{ij}}{\sqrt{y_i y_j}},
		\quad
		k(\omega) = \frac{\omega}{2\sinh(\omega/2)}.
	\end{equation}
	Equivalently,
	\begin{equation}
		\Omega_Y^{-1}\!(X) = k(K_Y)\!\big[Y^{-1/2}XY^{-1/2}\big].
	\end{equation}
	This proves the modular-frequency filter form in Theorem~\ref{thm:KM.update}.
	
	The same inverse KM action admits the integral representation
	\begin{equation}
		\Omega_Y^{-1}\!(X) = \int_{\mathbb{R}} \beta_0(t)\, Y^{-(1+it)/2}X\,Y^{-(1-it)/2}\,dt,
		\label{eq:KM.integral.representation}
	\end{equation}
	where $\beta_0(t) = \pi/[2(\cosh(\pi t)+1)]$
	is normalized such that
	\begin{equation}
		k(\omega) = \int_{\mathbb{R}} \beta_0(t)e^{-it\omega/2}\,dt.
	\end{equation}
	The same probability density $\beta_0$ appears in Ref.~\cite{Junge.etal:AnnHenriPoincare:2018} as the weight in the averaged rotated-Petz recovery map. Here, however, it represents the inverse KM operator itself rather than a CPTP average of recovery channels.
	
	\subsection{Properties of the KM update} \label{app:subsec:KM.update.properties}
	\begin{enumerate}[wide, labelwidth=!, labelindent=0pt, label=\textbf{\arabic*.}]
		\item \textbf{Linearity in $\bm{X}$}
		\begin{equation}
			\mathcal{R}^{\text{KM}}_{\sigma,\mathcal{E}}(aX_1+bX_2) 
			= a\,\mathcal{R}^{\text{KM}}_{\sigma,\mathcal{E}}(X_1) + b\,\mathcal{R}^{\text{KM}}_{\sigma,\mathcal{E}}(X_2).
		\end{equation}
		\item \textbf{Positivity} \mbox{} \\ 
		If $X \geqslant 0$, then
		\begin{equation}
			\mathcal{R}^{\text{KM}}_{\sigma,\mathcal{E}}(X) \geqslant 0,
		\end{equation}
		because $\Omega^{-1}_Y$ is positive, $\mathcal{E}^{\dagger}$ is positive, and conjugation by $\sqrt{\S}$ preserves positivity. 
		\item \textbf{Trace preservation in the argument}
		\begin{align}
			\Tr\!\big[\mathcal{R}^{\text{KM}}_{\sigma,\mathcal{E}}(X)\big]
			&= \Tr\!\big[\sigma\, \mathcal{E}^{\dagger}[\Omega_Y^{-1}\!(X)]\big] =\Tr\!\big[\mathcal{E}(\sigma)\, \Omega_Y^{-1}\!(X)\big] \nonumber \\
			&= \Tr\!\big[Y\, \Omega^{-1}_Y\!(X)\big] = \Tr[X].
			\label{eq:KM.trace.preservation}
		\end{align}
		In particular, if $X$ is a density operator, then $\mathcal{R}^{\text{KM}}_{\S,\mathcal{E}}(X)$ is again a density operator.
		\item \textbf{Exact recovery of the reference output}
		\begin{align}
			\mathcal{R}^{\text{KM}}_{\S,\mathcal{E}}\big(\mathcal{E}(\S)\big) = \S,
		\end{align}
		since $\Omega^{-1}_Y(Y)=\mathds{1}$ and $\mathcal{E}^{\dagger}(\mathds{1})=\mathds{1}$.
		\item \textbf{Reduction to the Petz term in the commuting limit} \mbox{} \\
		If $[X,Y]=0$, then
		\begin{align}
			\Omega^{-1}_Y\!(X)=Y^{-1/2} X Y^{-1/2},
		\end{align}
		and therefore
		\begin{align}
			\mathcal{R}^{\text{KM}}_{\S,\mathcal{E}}(X) = \sqrt{\sigma}\; \mathcal{E}^{\dagger}\!\big[ Y^{-1/2} X Y^{-1/2} \big]\, \sqrt{\sigma},
		\end{align}
		which is the ordinary Petz recovery form.
		
		\item \textbf{Sum representation} \mbox{} \\
		The KM update can be written as the sum
		\begin{align}
			\mathcal{R}^{\text{KM}}_{\S,\mathcal{E}}(X) = \S + \Delta^{\text{KM}}_{\S,\mathcal{E}}(X),
		\end{align}
		with $\Delta^{\text{KM}}_{\S,\mathcal{E}}(X) \equiv G_{\!X}^\sigma(\sigma) - \sigma$, cf.\ Eqs.~\eqref{eq:delta.gradient} and \eqref{def:KM.update}.
		
		\item \textbf{Channel representation} \mbox{} \\
		Introducing
		\begin{align}
			M_\sigma(A) &\defeq \mathcal{E} \big[\sqrt{\sigma}\,\mathcal{E}^\dagger(A)\,\sqrt{\sigma}\big], \\
			\Phi_Y(A) &\defeq Y^{-1/2}\, M_\sigma(A)\; Y^{-1/2},
		\end{align}
		the predicted output after the update $Y_+ \defeq \mathcal{E}[\mathcal{R}^{\text{KM}}_{\sigma,\mathcal{E}}(X)]$ factorizes as
		\begin{equation}
			Y_+ = M_\S[\Omega^{-1}_Y\!(X)] = \Phi_{Y,*}(\widehat{X}_Y),
		\end{equation}
		where $\Phi_{Y,*}$ denotes the Hilbert--Schmidt adjoint of $\Phi_{Y}$ and $\widehat{X}_Y \defeq \sqrt{Y}\,\Omega^{-1}_Y\!(X) \sqrt{Y}$. Since $\Phi_Y$ is unital CP, $\Phi_{Y,*}$ is CPTP. Moreover, it preserves $Y$, $\Phi_{Y,*}(Y)=Y$. This is not the physical measurement channel itself, but an auxiliary CPTP map determined by $Y$, $\S$, and $\mathcal{E}$. The data processing inequality can be written in CPTP channel form as
		\begin{equation}
			D (Y_+ \Vert Y) = D \big(\Phi_{Y,*}\big(\widehat{X}_Y\big) \big\Vert \Phi_{Y,*}(Y) \big) \leqslant D\big(\widehat{X}_Y \Vert Y\big).
		\end{equation}
	\end{enumerate}
	
	\subsection{Concavity of the log-likelihood \newline and uniqueness of the maximizer} \label{app:subsec:log.likelihood.concavity}
	For fixed $X\geqslant0$, the map
	\begin{equation}
		\sigma\mapsto \mathcal L_X(\sigma) = \Tr[X \ln \mathcal{E}(\sigma)]
	\end{equation}
	is concave on the domain where $\mathcal{E}(\sigma)>0$. Indeed, for $0\leqslant t\leqslant 1$,
	\begin{align}
		\mathcal{L}_X(t\sigma_1+(1-t)\sigma_0) 
		&= \Tr\!\big[ X \ln\!\big(t \mathcal{E}(\sigma_1) + (1-t) \mathcal {E}(\sigma_0)\big) \big] \nonumber\\
		& \geqslant t \mathcal{L}_X(\sigma_1) + (1-t) \mathcal{L}_X(\sigma_0),
	\end{align}
	by operator concavity of the logarithm and positivity of the trace pairing with $X$. Consequently, for any two admissible states,
	\begin{equation}
		\mathcal {L}_X(\rho) \leqslant \mathcal{L}_X(\sigma) + D\mathcal{L}_X(\sigma)[\rho-\sigma].
		\label{eq:derivative_inequality}
	\end{equation}
	
	In the quantum-to-classical case, write $p_i \defeq \hat{p}_i$ and set
	\begin{align}
		\mathcal{S} &\defeq \{\sigma\geqslant0:\Tr[\sigma]=1\}, \\
		\mathcal{D}_+ &\defeq \{\sigma\in\mathcal{S}:\Tr[\Pi_i\sigma]>0\ \;\forall\, i\}.
	\end{align}
	On states in $\mathcal{S} \!\setminus\! \mathcal{D}_+$, the likelihood $\mathcal{L}_{\text{qc}}$ is extended by setting $\mathcal{L}_{\text{qc}}(\sigma)=-\infty$.
	
	\begin{lemma} \label{lem:qc.unique.MLE}
		Suppose $\{\Pi_i\}$ is a finite informationally complete POVM with nonzero effects and $p_i \!>\! 0 \;\forall\, i$. Then $\mathcal{L}_{\text{qc}}$ is strongly concave on $\mathcal{D}_+$ and has a unique maximizer over the state space $\mathcal{S}$.
	\end{lemma}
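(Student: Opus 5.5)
The plan is to separate the argument into three parts: strong concavity on $\mathcal{D}_+$, existence of a maximizer on the compact set $\mathcal{S}$, and uniqueness.

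\textbf{Strong concavity.} Write $q_i(\sigma) \defeq \Tr[\Pi_i\sigma]$. Then $\mathcal{L}_{\text{qc}}(\sigma)=\sum_i p_i \ln q_i(\sigma)$ is a composition of the linear map $\sigma\mapsto q(\sigma)$ with the strictly concave function $q\mapsto\sum_i p_i\ln q_i$. For a traceless Hermitian direction $\Delta$, the second derivative along $\sigma+t\Delta$ is
\begin{equation}
	-\sum\nolimits_i p_i\,\frac{\Tr[\Pi_i\Delta]^2}{q_i(\sigma)^2}.
\end{equation}
Since each $\Pi_i\leqslant\mathds{1}$, we have $q_i(\sigma)\leqslant 1$, so this is bounded above by $-p_{\min}\sum_i\Tr[\Pi_i\Delta]^2$.

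Informational completeness means that the map $\Delta\mapsto(\Tr[\Pi_i\Delta])_i$ is injective on Hermitian operators. By finite dimensionality there is then a constant $c>0$ with $\sum_i\Tr[\Pi_i\Delta]^2\geqslant c\,\|\Delta\|_2^2$. Hence the Hessian is uniformly bounded by $-p_{\min}c\,\|\Delta\|_2^2$ on $\mathcal{D}_+$, which is strong concavity with modulus $p_{\min}c$. The set $\mathcal{D}_+$ is convex, being an intersection of $\mathcal{S}$ with open half-spaces, so the second-order criterion applies along segments inside it.

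\textbf{Existence.} I would show that $\mathcal{L}_{\text{qc}}$, extended by $-\infty$, is upper semicontinuous on $\mathcal{S}$. On $\mathcal{D}_+$ it is continuous. At a point of $\mathcal{S}\setminus\mathcal{D}_+$, some $q_i\to0$ with $p_i>0$, while every other term satisfies $p_j\ln q_j\leqslant 0$; hence the function tends to $-\infty$. Next, $\mathcal{D}_+$ is nonempty: it contains $\mathds{1}/d$, because the effects are nonzero, so $\Tr[\Pi_i]>0$. This gives a finite value. An upper semicontinuous function on the compact set $\mathcal{S}$ attains its supremum, and this supremum is finite, so the maximizer lies in $\mathcal{D}_+$.

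\textbf{Uniqueness.} Suppose $\sigma_1\neq\sigma_2$ are two maximizers. Both lie in $\mathcal{D}_+$, so by convexity their midpoint does too. Strict concavity then gives the midpoint a strictly larger value, which is a contradiction.

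The step needing the most care is the uniform constant in the strong-concavity bound. Informational completeness must be used precisely as injectivity on Hermitian operators, which amounts to the Gram matrix of the $\Pi_i$ spanning the operator space. The bound $q_i\leqslant1$ then makes the modulus uniform, with no need for compactness of $\mathcal{D}_+$, which is not closed.
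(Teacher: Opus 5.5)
Your proposal is correct and follows essentially the same route as the paper's proof. Both arguments bound the Hessian by $-p_{\min}c\,\|\Delta\|_2^2$ using $\Tr[\Pi_i\sigma]\leqslant 1$ and the injectivity constant $c>0$. Both get existence from upper semicontinuity on the compact set $\mathcal{S}$, with $\mathds{1}/d\in\mathcal{D}_+$ ensuring a finite maximum, and both get uniqueness from strict concavity at a convex combination of two putative maximizers.
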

	\begin{proof}
		Let $\mathbb{H}_0$ denote the space of traceless Hermitian matrices and set
		\begin{equation}
			p_{\min} \defeq \min_i p_i, \quad
			c \defeq \!\!\min\limits_{\substack{\Delta \in \mathbb{H}_0\\ \|\Delta\|_2=1}} \sum_i \;[\Tr(\Pi_i\Delta)]^2 .
		\end{equation}
		Since $\{\Pi_i\}$ is informationally complete, the map $\Delta \!\mapsto\! (\Tr[\Pi_i\Delta])_i$ is injective on $\mathbb{H}_0$. Hence $c \!>\! 0$, because the function being minimized is continuous and strictly positive on the compact unit sphere in $\mathbb{H}_0$. For $\sigma\in\mathcal{D}_+$ and $\Delta\in\mathbb{H}_0$,
		\begin{align}
			\begin{aligned}
				-D^2\mathcal{L}_{\text{qc}}(\sigma)[\Delta,\Delta] &= \sum_i \frac{p_i}{[\Tr(\Pi_i\sigma)]^2} [\Tr(\Pi_i\Delta)]^2 \\
				& \geqslant \sum_i \,p_i[\Tr(\Pi_i\Delta)]^2 \\
				& \geqslant p_{\min}\, c\, \|\Delta\|_2^2,
			\end{aligned}
		\end{align}
		where we used $\Tr[\Pi_i\sigma] \leqslant 1$ due to $0 \leqslant \Pi_i \leqslant \mathds{1}$ and $\Tr[\sigma]=1$. Thus $\mathcal{L}_{\text{qc}}$ is strongly concave on $\mathcal{D}_+$.
		
		It remains to prove existence. Since each $\Pi_i$ is a nonzero POVM effect, the maximally mixed state $\mathds{1}/d$ belongs to $\mathcal{D}_+$, and thus $\mathcal{L}_{\text{qc}}(\mathds{1}/d) \!>\! -\infty$. With the boundary convention specified above, the map $\sigma \!\mapsto\! \mathcal{L}_{\text{qc}}(\sigma)$ is upper semicontinuous on $\mathcal{S}$, i.e.,
		\begin{equation}
			\limsup_{n\to\infty} \;\mathcal{L}_{\text{qc}}(\sigma_n) \leqslant \mathcal{L}_{\text{qc}}(\sigma)
		\end{equation}
		whenever $\sigma_n \!\in\! \mathcal{S}$ and $\sigma_n \!\to\! \sigma \!\in\! \mathcal{S}$. Since $\mathcal{S}$ is compact, $\mathcal{L}_{\text{qc}}$ attains a maximum on $\mathcal{S}$. Moreover, since the maximum value is at least $\mathcal{L}_{\text{qc}}(\mathds{1}/d) \!>\! -\infty$, any maximizer must lie in $\mathcal{D}_+$.
		
		Finally, if $\sigma_1$ and $\sigma_2$ were two distinct maximizers, then for $0<\theta<1$ strong concavity would give
		\begin{equation}
			\mathcal{L}_{\text{qc}} \big((1-\theta) \sigma_1 + \theta \sigma_2 \big)
			>
			(1-\theta) \mathcal{L}_{\text{qc}}(\sigma_1) + \theta \mathcal{L}_{\text{qc}}(\sigma_2),
		\end{equation}
		contradicting maximality. Hence the maximizer is unique.
	\end{proof}
	
	\subsection{Proof of Theorem~\ref{minimum_increase}}
	\paragraph*{Lower bound.}
	Assume that $Y>0$ and $Y_\eta>0$, such that the logarithms below are ordinary matrix logarithms. We have
	\begin{equation}
		\Tr[X \ln Y_\eta - X \ln Y]=D(X||Y)-D(X||Y_\eta),
	\end{equation}
	where $Y=\mathcal{E}(\sigma)$ and $Y_\eta=\mathcal{E}(\sigma_\eta)$. The quantum relative entropy, defined as~\cite{Umegaki:KodaiMathSemRep:1962,Nielsen.Chuang:Book:2010},
	\begin{equation}\label{eq:quantum_relative_entropy}
		D(X||Y)=\Tr[X(\ln X-\ln Y)],
	\end{equation}
	can be expressed through the following optimization identity~\cite{Ruskai:JMathPhys:2002}:
	\begin{equation}
		D(X||Y) = \sup_{H=H^\dagger} \big\lbrace \! \Tr[XH] - \ln \Tr\big[e^{H+\ln Y}\big] \big\rbrace.
	\end{equation}
	This implies
	\begin{align}
		\Tr[X \ln Y_\eta - X \ln Y] &= \sup_{H=H^\dagger} \big\lbrace \! \Tr[XH] - \ln \Tr[e^{H+ \ln Y}] \big\rbrace \nonumber \\
		& \hspace*{-5mm} - \sup_{\tilde{H} = \tilde{H}^\dagger} \big\lbrace \! \Tr\!\big[X \tilde{H}\big] - \ln \Tr\!\big[e^{\tilde{H} + \ln Y_\eta}\big] \big\rbrace \nonumber \\
		& \hspace*{-12mm} \geqslant -\ln \Tr[e^{\ln X - \ln Y_\eta + \ln Y}] + \ln \Tr[e^{\ln X}] \nonumber \\
		& \hspace*{-12mm} = - \ln \Tr[e^{\ln X - \ln Y_\eta + \ln Y}],
	\end{align}
	where we picked $H = \tilde{H} = \ln X-\ln Y_\eta$, which maximizes the second supremum and gives a lower bound on the first.
	
	We use Lieb's triple-matrix inequality~\cite{Lieb:AdvMath:1973,Ruskai:JMathPhys:2002}: for any positive definite operators $R,S,T>0$,
	\begin{equation}
		\Tr\!\left[e^{\ln R-\ln S+\ln T}\right]  \leqslant \Tr\!\left[ R\,\Omega_S^{-1}\!(T) \right],
	\end{equation}
	where
	\begin{equation}
		\Omega_S^{-1}\!(T) = \int_0^\infty (S+u\mathds{1})^{-1}\,T\,(S+u\mathds{1})^{-1} \,du .
	\end{equation}
	Using the theorem, we thus have
	\begin{equation}
		\Tr[X \ln Y_\eta - X \ln Y] \geqslant - \ln \Tr[ X\, \Omega_{Y_\eta}^{-1}\!(Y)].
	\end{equation}
	Next, we derive,
	\begin{align}
		& 1 - \Tr[X\, \Omega_{Y_\eta}^{-1}\!(Y)] \nonumber \\
		& = \Tr[\mathcal{E}^\dagger \big(\Omega_{Y_\eta}^{-1}\!(X)\big)(\sigma_\eta-\sigma)] \\
		& = \eta \Tr[\big[\mathcal{E}^\dagger \big(\Omega_{Y_\eta}^{-1}\!(X)\big) - \mathds{1}\big](\sigma_+ - \sigma)] \nonumber \\
		& = \eta \Tr[\big[ \mathcal{E}^\dagger\!\big(\Omega_{Y_\eta}^{-1}\!(X)\big)-\mathds{1}\big] \big(\sqrt{\sigma}\, \mathcal{E}^{\dagger} \!\big(\Omega_{Y}^{-1}\!(X)\big) \sqrt{\sigma} - \sigma \big)] \nonumber \\
		&=\eta \textsl{g}_{\sigma} \Big( \sqrt{\sigma}\big[ \mathcal{E}^\dagger\!\big(\Omega_{Y_\eta}^{-1}\!(X)\big) - \mathds{1} \big] \sqrt{\sigma} , \sqrt{\sigma} \big[ \mathcal{E}^\dagger\!\big(\Omega_{Y}^{-1}\!(X)\big)-\mathds{1} \big] \sqrt{\sigma} \Big) \nonumber \\
		&= \eta \textsl{g}_{\sigma}(\Delta_{\sigma_\eta}^\sigma,\Delta_{\sigma}^\sigma). \nonumber
	\end{align}
	This provides the desired inequality,
	\begin{equation}
		\hspace*{-0.75mm} \Tr[X \ln Y_\eta - X \ln Y]
		\geqslant
		-\ln\!\big[1-\eta\,\textsl{g}_{\sigma}(\Delta_{\sigma_\eta}^\sigma,\Delta)\big],
	\end{equation}
	where we have set $\Delta \equiv \Delta_{\sigma}^\sigma$.
	
	Using the inequality $-\ln(1-x)\geqslant x$, valid whenever $x<1$, gives the looser inequality
	\begin{equation}
		\Tr[X \ln Y_\eta \!-\! X \ln Y] \geqslant 1 \!-\! \Tr\big[X\, \Omega_{Y_\eta}^{-1}\!(Y)\big]\!=\!\eta \textsl{g}_{\sigma}(\Delta_{\sigma_\eta}^\sigma,\Delta).
	\end{equation}
	This can be independently derived using concavity of the log-likelihood \eqref{eq:derivative_inequality}, and extends the inequality to semidefinite $X$ for which the explicit optimizer $H=\ln X-\ln Y_\eta$ has to be interpreted by approximation.
	
	\paragraph*{Upper bound.}
	Using Eq.~\eqref{eq:derivative_inequality}, we have
	\begin{align}
		\mathcal{L}_X(\sigma_\eta) - \mathcal{L}_X(\sigma) & \leqslant D\mathcal{L}_X(\sigma)[\sigma_\eta-\sigma] \\
		&= \eta\,\Tr\!\left[\mathcal{E}^{\dagger}\!\big(\Omega_Y^{-1}\!(X)\big)\,\Delta\right] \nonumber \\
		&= \eta\,\Tr\!\left[\left(\mathcal{E}^{\dagger}\!\big(\Omega_Y^{-1}\!(X)\big) - \mathds{1}\right) \Delta\right] \nonumber \\
		&= \eta\,\textsl{g}_{\sigma}(\Delta,\Delta), \nonumber
	\end{align}
	where $Y=\mathcal{E}(\sigma)$ and $\Delta=\sigma_+-\sigma=(\sigma_\eta-\sigma)/\eta$.
	
	\subsection{Proof of Theorem~\ref{thm:KM.fixed.point.MLE}}
	We will prove that every full-rank fixed point $\sigma$ of the KM update is a maximizer of the log-likelihood. Since $\sigma$ is full rank, the fixed-point equation $\mathcal{R}^{\text{KM}}_{\sigma,\mathcal{E}}(X)=\sigma$ is equivalent to $\GX(\sigma)=\mathds{1}$. Conversely, stationarity on the trace-one manifold gives $\GX(\sigma)=\lambda\mathds{1}$, and taking the trace against $\sigma$ gives $\lambda=\Tr[X]=1$. Thus being a full-rank fixed point is equivalent to being a stationary point of the log-likelihood. In other words, as we have shown in the main text, the following statements are equivalent,
	\begin{align}
		\begin{aligned}
			D\mathcal{L}_X(\sigma)[\Delta']=0 
			\quad 
			\forall\; \Delta'=\Delta'^\dagger,\; \Tr[\Delta']=0\\
			\Longleftrightarrow\, \GX(\sigma)=\mathds{1} \,\Longleftrightarrow\, \Delta=0 \,\Longleftrightarrow\, \mathcal{R}^{\text{KM}}_{\sigma,\mathcal{E}}(X)=\sigma.
		\end{aligned}
	\end{align}
	This follows from Eqs.~\eqref{eq:delta.gradient} and \eqref{def:KM.update}. Using concavity of the log-likelihood \eqref{eq:derivative_inequality}, for any other state $\sigma'$, we have
	\begin{equation}
		\mathcal{L}_X(\sigma') \leqslant \mathcal{L}_X(\sigma) + D\mathcal{L}_X(\sigma)[\sigma'-\sigma] = \mathcal{L}_X(\sigma),
	\end{equation}
	where we selected $\Delta'=\sigma'-\sigma$ for the first stationarity condition. Thus, the log-likelihood of any other state is upper-bounded by that at the stationary point; hence every full-rank stationary point is a maximizer of the log-likelihood.
	
	\subsection{Proof of Theorems~\ref{thm:Petzqc} and~\ref{thm:Petz.converg}} \label{app:subsec:proofs.Petzqc.Petz.iteration.convergence}
	In what follows, we use $p_i \!\defeq\! \hat{p}_i \!\geqslant\! 0$ with $\sum_i p_i \!=\! 1$, $q_i \!\defeq\! \Tr[\Pi_i\sigma]>0$, $q_i'\defeq\Tr[\Pi_i\sigma_+]$, and $\| \!\cdot\!\|$ to denote the Hilbert--Schmidt norm. Terms with $p_i=0$ are understood as omitted in expressions containing $(Cp)_i$ or $q_i'$ in a denominator, or containing $q_i'$ inside a logarithm or inside $f(q_i'/q_i)$. Let
	\begin{equation}
		p = \begin{pmatrix}
			p_1 \\
			\vdots \\
			p_m
		\end{pmatrix},\quad
		q = \begin{pmatrix}
			q_1 \\
			\vdots \\
			q_m
		\end{pmatrix},\quad
		Q = \begin{pmatrix}
			q_1 & & 0 \\
			& \ddots & \\
			0 & & q_m
		\end{pmatrix}.
	\end{equation}
	Next, define the $m \!\times\! m$ positive semidefinite matrix $B$ by
	\begin{equation}
		B_{ij} \defeq \frac{1}{q_i q_j}\Tr[\Pi_i\sqrt{\sigma}\, \Pi_j\sqrt{\sigma}].
		\label{def:Gram.matrix}
	\end{equation}
	The above is a Gram matrix: its entries are given by $B_{ij} = \langle \frac{1}{q_i}\Pi_i,\frac{1}{q_j}\Pi_j\rangle_\sigma$, where
	\begin{equation}
		\langle A,B\rangle_\sigma \defeq \Tr[A^\dagger \sqrt{\sigma}\, B \sqrt{\sigma}]
	\end{equation}
	is a positive semidefinite sesquilinear form on $\mathbb{C}^{d\times d}$. When $\sigma$ is full rank, it is an inner product, known as the KMS inner product~\cite{Vernooij.Wirth:CommunMathPhys:2023}. We also let $C=QB$. Thus, the entries of $C$ are given by
	\begin{equation}
		C_{ij} = \frac{1}{q_j} \Tr[\Pi_i\sqrt{\sigma}\, \Pi_j\sqrt{\sigma}].
	\end{equation}
	These entries are nonnegative, since
	\begin{equation}
		\Tr[\Pi_i \sqrt{\sigma}\,\Pi_j\sqrt{\sigma}] = \left\|\Pi_i^{1/2}\sqrt{\sigma}\; \Pi_j^{1/2}\right\|^2 \geqslant 0 .
	\end{equation}
	Note that $Cq=q$, since
	\begin{align}
		\begin{aligned}
			(Cq)_i & =\sum_j C_{ij}\, q_j =\sum_j \frac{1}{q_j}\Tr(\Pi_i\sqrt{\sigma}\, \Pi_j\sqrt{\sigma})q_j \\
			&=\Tr\Big[\Pi_i\sqrt{\sigma}\, \sum_j\Pi_j\sqrt{\sigma}\Big] =\Tr[\Pi_i\sigma] = q_i.
		\end{aligned}
	\end{align}
	Finally, the spectral radius $\rho(A)$ of a square matrix $A$ is defined as
	\begin{equation}
		\rho(A) = \max\{|\lambda|:\text{$\lambda$ is an eigenvalue of $A$}\}.
	\end{equation}
	
	\begin{lemma}\label{lem:points}
		If $D$ is a diagonal matrix with diagonal entries $d_1,\dots,d_m \geqslant 0$, then $\sum_i d_i q_i \leqslant \rho(DC)$.
	\end{lemma}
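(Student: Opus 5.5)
The idea is to write $\rho(DC)$ as the top eigenvalue of a symmetric positive semidefinite matrix, bound that eigenvalue from below with a Rayleigh quotient, and control the quotient by Cauchy--Schwarz in the semi-inner product defined by the Gram matrix $B$.

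\emph{Two identities for $B$.} Since $\sum_j \Pi_j = \mathds{1}$,
\begin{equation}
(Bq)_i=\frac{1}{q_i}\sum_j \Tr[\Pi_i\sqrt{\sigma}\,\Pi_j\sqrt{\sigma}]=\frac{1}{q_i}\Tr[\Pi_i\sigma]=1 .
\end{equation}
So $Bq=\mathbf 1$, the all-ones vector. This is the same computation that gave $Cq=q$. It follows that $q^{T}Bq=\sum_i q_i=\Tr[\sigma]=1$.

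\emph{Reduction to a symmetric matrix.} Write $C=QB$ and put $A\defeq (DQ)^{1/2}$, a diagonal matrix with nonnegative entries. Then $DC=DQB=A\,(AB)$. Since $XY$ and $YX$ have the same nonzero eigenvalues, $DC$ has the same nonzero spectrum as $(AB)A=ABA$. The matrix $ABA$ is symmetric positive semidefinite because $B$ is a Gram matrix. Hence $\rho(DC)=\lambda_{\max}(ABA)\geqslant 0$.

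\emph{Rayleigh quotient and Cauchy--Schwarz.} Let $s\defeq\sum_i d_iq_i$. If $s=0$ the claim holds because $\rho(DC)\geqslant 0$, so assume $s>0$. Take the test vector $v_i=\sqrt{d_iq_i}$. Then $Av=Dq$ and $v^{T}v=s$, so
\begin{equation}
\rho(DC)\geqslant \frac{v^{T}ABAv}{v^{T}v}=\frac{(Dq)^{T}B(Dq)}{s}.
\end{equation}
It therefore suffices to show $(Dq)^{T}B(Dq)\geqslant s^2$. Because $B$ is positive semidefinite, Cauchy--Schwarz holds for the form $\langle x,y\rangle_B=x^{T}By$:
\begin{equation}
(q^{T}BDq)^2\leqslant (q^{T}Bq)\,\big((Dq)^{T}B(Dq)\big)=(Dq)^{T}B(Dq).
\end{equation}
Using $Bq=\mathbf 1$ and the symmetry of $B$, the left side is $(\mathbf 1^{T}Dq)^2=s^2$. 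This gives the bound.

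The only step that needs care is the spectral identification $\rho(DC)=\lambda_{\max}(ABA)$ when $D$ or $Q$ has zero entries, since then $A$ is singular. Using the equality of nonzero spectra of $XY$ and $YX$, rather than a similarity transformation, avoids inverting $A$. The key observation that makes the argument work is $Bq=\mathbf 1$, which turns the Cauchy--Schwarz bound into exactly $s^2$.
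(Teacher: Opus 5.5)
Your proof is correct. It shares the paper's overall strategy: symmetrize $DC$ using the fact that $XY$ and $YX$ have the same nonzero spectrum, bound the top eigenvalue from below by a Rayleigh quotient, and use $Bq=\mathbf 1$ and $q^{T}Bq=1$.

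The difference is the choice of symmetrization. Write $M=(DQ)^{1/2}\sqrt{B}$. The paper uses $M^{T}M=\sqrt{B}\,DQ\sqrt{B}$ with test vector $z=\sqrt{B}\,q$. Its Rayleigh quotient is directly $\langle Dq,Bq\rangle/\langle q,Bq\rangle=\sum_i d_iq_i$, and since $\|z\|^2=1$ always, it needs neither a case split nor a Cauchy--Schwarz step.

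You use $MM^{T}=(DQ)^{1/2}B(DQ)^{1/2}$, which avoids forming $\sqrt{B}$. Your test vector is exactly the image of the paper's, $v=Mz=(DQ)^{1/2}\mathbf 1$. Setting $\mu_k=\langle z,(M^{T}M)^k z\rangle$, the paper's quotient is $\mu_1/\mu_0=s$ and yours is $\mu_2/\mu_1=(Dq)^{T}B(Dq)/s$. Your Cauchy--Schwarz inequality is precisely $\mu_1^2\leqslant\mu_0\mu_2$.

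So your route gives a slightly sharper intermediate bound, at the cost of the extra inequality and the separate $s=0$ case. The paper's choice of test vector makes the quotient reduce to the desired sum in one line.
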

	\begin{proof}
		The matrix $H\defeq \sqrt{B}DQ\sqrt{B}$ is Hermitian positive semidefinite. Hence, for every nonzero vector $z$,
		\begin{equation}
			\frac{\langle Hz,z\rangle}{\langle z,z\rangle}\leqslant \lambda_{\max}(H) = \rho(H).
		\end{equation}
		Moreover, $H$ and $DQB \!=\! DC \!=\! (DQ\sqrt{B})\sqrt{B}$ have the same nonzero eigenvalues, and therefore the same spectral radius. Thus $\rho(H)=\rho(DC)$. Setting $z=\sqrt{B}q$ and using $Cq=q$ and $Bq=\mathbf{1}$, where $\mathbf{1}$ is the vector with every entry equal to one, gives
		\begin{align}
			\begin{aligned}
				\rho(DC) &\geqslant \frac{\langle \sqrt{B}DCq,\sqrt{B}q\rangle}{\langle \sqrt{B}q,\sqrt{B}q\rangle}
				= \frac{\langle Dq,Bq\rangle}{\langle q,Bq\rangle} \\
				&= \frac{\langle Dq,\mathbf{1}\rangle}{\langle q,\mathbf{1}\rangle}
				= \frac{\sum_i d_iq_i}{\sum_i q_i}
				= \sum_i d_iq_i .
			\end{aligned}
		\end{align}
	\end{proof}
	
	\begin{lemma}\label{lem:specrad}
		Let $A$ be a square matrix such that $A_{ij} \geqslant 0$, and let $x$ be a vector such that $x_i>0$. If $Ax \leqslant \lambda x$ with $\lambda>0$, then $\rho(A) \leqslant \lambda$. In particular, if $Ax=\lambda x$, then $\rho(A)=\lambda$.
	\end{lemma}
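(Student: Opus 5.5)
The bound $\rho(A)\leqslant\lambda$ comes from a positive diagonal similarity followed by the row-sum bound on the spectral radius; the equality case then follows from an inverse-power argument.

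\textbf{Upper bound.} Set $S\defeq\operatorname{diag}(x_1,\dots,x_m)$, which is invertible because $x_i>0$, and consider $\tilde A\defeq S^{-1}AS$. Its entries are $\tilde A_{ij}=A_{ij}x_j/x_i\geqslant 0$. The hypothesis $Ax\leqslant\lambda x$ states that every row sum of $\tilde A$ satisfies
\begin{equation}
\sum_j \tilde A_{ij}=\frac{(Ax)_i}{x_i}\leqslant \lambda .
\end{equation}
Since $\tilde A$ is entrywise nonnegative, its induced $\infty$-norm equals its maximal row sum, so $\|\tilde A\|_\infty\leqslant\lambda$. The spectral radius is bounded by any induced matrix norm, so $\rho(\tilde A)\leqslant\lambda$. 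Similarity preserves the spectrum, hence $\rho(A)=\rho(\tilde A)\leqslant\lambda$.

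\textbf{Equality case.} Suppose now that $Ax=\lambda x$. Then $\lambda$ is an eigenvalue of $A$, because $x\neq 0$, so $\rho(A)\geqslant|\lambda|=\lambda$. The first part gives $\rho(A)\leqslant\lambda$, so $\rho(A)=\lambda$.

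An alternative route is Collatz--Wielandt / Perron--Frobenius theory. If instead one tries an inverse-power argument for the bound, applying $A^k$ iteratively to the vector inequality, one only obtains $A^k x\leqslant\lambda^k x$. This must then be converted into a norm bound, namely $\|A^k\|_\infty\leqslant \lambda^k\max_i x_i/\min_i x_i$, and finished with Gelfand's formula. The similarity argument above avoids this extra step.
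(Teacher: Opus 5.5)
Your proof is correct. The equality case is exactly the paper's argument: since $x\neq 0$, $\lambda$ is an eigenvalue, so $\rho(A)\geqslant\lambda$, and this combines with the upper bound.

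The difference is in the first statement. The paper does not prove it and only cites Theorem~4 of Jacobs' paper. You instead give a short self-contained argument. You conjugate by $S=\operatorname{diag}(x_1,\dots,x_m)$, which turns $Ax\leqslant\lambda x$ into a bound on the row sums of the nonnegative matrix $S^{-1}AS$. You then use $\rho(\cdot)\leqslant\|\cdot\|_\infty$ and the similarity invariance of the spectrum. This is the standard Collatz--Wielandt-type bound found in matrix-analysis texts. It makes the lemma independent of the external reference at the cost of a few lines, and it shows that the assumption $\lambda>0$ is never actually used.

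Your alternative via $A^k x\leqslant\lambda^k x$ and Gelfand's formula is also valid. Calling it an ``inverse-power'' argument is a misnomer, though, since it is a forward power iteration. Your opening sentence also says the equality case follows from that argument, when you actually prove it, more simply, by the eigenvalue observation. Tidy that sentence so the summary matches the proof.
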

	\begin{proof}
		See Theorem 4 in Ref.~\cite{Jacobs:ElectronProcTheorComputSci.2021} for a proof of the first statement. If $Ax=\lambda x$, then $\lambda$ is an eigenvalue of $A$, so $\rho(A) \geqslant \lambda$ and $\rho(A) \leqslant \lambda$. Hence, $\rho(A)=\lambda$.
	\end{proof}
	
	\begin{lemma}[Jacobs-type inequality for the quantum-to-classical Petz update]\label{lem:Jacobs}
		$\Tr[\GX(\sigma_+)\, \sigma] \leqslant 1$.
	\end{lemma}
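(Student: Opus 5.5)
The plan is to reduce the statement to a spectral-radius inequality for the nonnegative matrix $C$, and then apply Lemmas~\ref{lem:points} and~\ref{lem:specrad}.

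\textbf{Rewriting the quantity.} For the quantum-to-classical channel, $\E_{\text{qc}}(\sigma_+)=\sum_i q_i'\ketbra{i}{i}$ is diagonal and commutes with $X=\sum_i p_i\ketbra{i}{i}$. Hence $\Omega^{-1}_{\E_{\text{qc}}(\sigma_+)}(X)=\sum_i (p_i/q_i')\ketbra{i}{i}$, and so $\GX(\sigma_+)=\sum_i (p_i/q_i')\Pi_i$. This gives
\begin{equation}
\Tr[\GX(\sigma_+)\,\sigma]=\sum_i \frac{p_i\,q_i}{q_i'} .
\end{equation}
Next I express $q_i'$ through $C$. Inserting the Petz update of Theorem~\ref{thm:Petzqc} gives
\begin{equation}
q_i'=\sum_j \frac{p_j}{q_j}\Tr[\Pi_i\sqrt{\sigma}\,\Pi_j\sqrt{\sigma}]=(Cp)_i .
\end{equation}
The claim is therefore equivalent to $\sum_i p_i q_i/(Cp)_i\leqslant 1$.

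\textbf{Positivity of the denominators.} For every $i$ with $p_i>0$, I check $(Cp)_i>0$. Since all entries of $C$ are nonnegative,
\begin{equation}
(Cp)_i\geqslant C_{ii}\,p_i=\frac{p_i}{q_i}\big\|\sigma^{1/4}\Pi_i\sigma^{1/4}\big\|^2 .
\end{equation}
This vanishes only if $\sigma^{1/4}\Pi_i\sigma^{1/4}=0$, which would force $q_i=\Tr[\Pi_i\sigma]=0$, a contradiction.

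\textbf{Applying the two lemmas.} Let $D=\mathrm{diag}(d_i)$ with $d_i=p_i/(Cp)_i$ for $p_i>0$ and $d_i=0$ for $p_i=0$. This matches the convention that terms with $p_i=0$ are omitted. Lemma~\ref{lem:points} then gives $\sum_i p_iq_i/(Cp)_i=\sum_i d_iq_i\leqslant\rho(DC)$, so it remains to show $\rho(DC)\leqslant 1$.

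By construction, $(DCp)_i=p_i$ for every $i$, i.e.\ $DCp=p$. If all $p_i>0$, then $p$ is a strictly positive eigenvector of the entrywise nonnegative matrix $DC$ with eigenvalue $1$, and Lemma~\ref{lem:specrad} gives $\rho(DC)=1$.

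\textbf{Main obstacle: zero frequencies.} The delicate case is when some $p_i=0$, because then $p$ is not strictly positive and Lemma~\ref{lem:specrad} does not apply directly. Let $S=\{i:p_i>0\}$. The rows of $DC$ indexed outside $S$ vanish. Ordering the indices so that $S$ comes first, $DC$ is block upper triangular:
\begin{equation}
DC=\begin{pmatrix} (DC)_{SS} & (DC)_{S S^c}\\ 0 & 0\end{pmatrix}.
\end{equation}
Consequently $\rho(DC)=\rho\big((DC)_{SS}\big)$. Because $p_j=0$ for $j\notin S$, we have $(Cp)_i=\sum_{j\in S}C_{ij}p_j$, and therefore $(DC)_{SS}\,p_S=p_S$ with $p_S$ strictly positive. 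Lemma~\ref{lem:specrad} applied to this principal submatrix yields $\rho(DC)=1$, which completes the argument.
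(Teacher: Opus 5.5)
Your proof is correct and follows the paper's argument essentially step for step. You use the same rewriting $\Tr[\GX(\sigma_+)\,\sigma]=\sum_i p_iq_i/(Cp)_i$, the same diagonal matrix $D$ with $d_i=p_i/(Cp)_i$, Lemma~\ref{lem:points} to bound the sum by $\rho(DC)$, and Lemma~\ref{lem:specrad} on the block indexed by $\{i:p_i>0\}$ plus the block upper-triangular structure to get $\rho(DC)=1$; your explicit check that $C_{ii}>0$ follows from $q_i>0$ fills in a step the paper only asserts.
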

	\begin{proof}
		The proof adapts the matrix argument behind Jacobs' Proposition~2 on the classical Jeffrey-rule contraction~\cite{Jacobs:ElectronProcTheorComputSci.2021} to the quantum-to-classical Petz update. The new ingredient is the Gram matrix \eqref{def:Gram.matrix} built from the noncommuting POVM elements and the current state $\sigma$.
		
		Let $D$ be the diagonal matrix defined by the entries
		\begin{equation}
			d_i \defeq
			\begin{cases}
				p_i/[(Cp)_i], & p_i>0,\\
				0, & p_i=0.
			\end{cases}
		\end{equation}
		For $p_i>0$, the denominator is positive because
		\begin{equation}
			(Cp)_i \geqslant C_{ii}p_i = \frac{p_i}{q_i} \left\|\Pi_i^{1/2}\sqrt{\sigma}\,\Pi_i^{1/2}\right\|^2 >0 ,
		\end{equation}
		and we have
		\begin{equation}
			\sum_{j:p_j>0} d_i C_{ij}p_j = d_i\sum_j C_{ij}p_j = d_i(Cp)_i = p_i .
		\end{equation}
		Equivalently, the submatrix of $DC$ obtained by retaining the rows and columns with $p_i \!>\! 0$ satisfies
		\begin{equation}
			(DC)_{p_i>0,p_j>0}\,(p_j)_{p_j>0}=(p_i)_{p_i>0}.
		\end{equation}
		Thus this submatrix is entrywise nonnegative and has the strictly positive eigenvector $(p_i)_{p_i>0}$ with eigenvalue one. By Lemma~\ref{lem:specrad}, its spectral radius is one. It remains only to compare this retained submatrix with the full matrix $DC$. If the indices with $p_i \!>\! 0$ are ordered first, then
		\begin{equation}
			DC =
			\begin{pmatrix}
				(DC)_{p_i>0,p_j>0} & (DC)_{p_i>0,p_j=0} \\
				0 & 0
			\end{pmatrix},
		\end{equation}
		because $d_i \!=\! 0$ whenever $p_i \!=\! 0$. Hence the full matrix has the same nonzero eigenvalues as the retained submatrix, together with possible additional zero eigenvalues. Therefore $\rho(DC) \!=\! 1$.
		
		Thus, by Lemma~\ref{lem:points},
		\begin{align}
			\Tr[\GX(\sigma_+)\, \sigma]
			&= \sum_{i:p_i>0} \frac{p_iq_i}{\Tr(\Pi_i \sigma_+)} \nonumber \\
			&= \sum_{i:p_i>0} \frac{p_iq_i}{\sum_j \frac{p_j}{q_j} \Tr[\Pi_i\sqrt{\sigma}\, \Pi_j\sqrt{\sigma}]} \nonumber \\
			&= \sum_{i:p_i>0} \frac{p_i q_i}{(Cp)_i}
			= \sum_i d_i q_i \\
			&\leqslant \rho(DC) = 1. \nonumber
		\end{align}
	\end{proof}
	
	\begin{lemma}\label{lem:control}
		For every $\epsilon>0$, there exists $\delta>0$ such that $\Tr[\GX(\sigma)\, \sigma_+]-1<\epsilon$ whenever $\mathcal{L}_\text{qc}(\sigma_+)-\mathcal{L}_\text{qc}(\sigma)<\delta$.
	\end{lemma}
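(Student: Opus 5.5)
The plan is to reduce everything to the scalar ratios $r_i \defeq q_i'/q_i$ for the indices with $p_i>0$; terms with $p_i=0$ are omitted, as agreed above. In the quantum-to-classical case, $\GX(\sigma)=\sum_i \frac{p_i}{q_i}\Pi_i$, so
\begin{equation}
\Tr[\GX(\sigma)\,\sigma_+]-1=\sum_{i:p_i>0}p_i(r_i-1).
\end{equation}
The likelihood increment is
\begin{equation}
\mathcal{L}_{\text{qc}}(\sigma_+)-\mathcal{L}_{\text{qc}}(\sigma)=\sum_{i:p_i>0}p_i\ln r_i .
\end{equation}
Lemma~\ref{lem:Jacobs} reads $\sum_{i:p_i>0}p_i/r_i\leqslant 1$. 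The task is therefore to show that a small value of $\sum_i p_i\ln r_i$ forces every $r_i$ close to $1$, uniformly in $\sigma$.

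\textbf{Step 1 (split).} Introduce the nonnegative scalar function $f(r)\defeq \ln r-1+1/r$ for $r>0$. It satisfies $f(r)\geqslant 0$, with equality iff $r=1$, and $f(r)\to\infty$ both as $r\to 0^+$ and as $r\to\infty$. Decompose the increment as
\begin{equation}
\sum_i p_i\ln r_i=\sum_i p_i f(r_i)+\Big(1-\sum_i \frac{p_i}{r_i}\Big).
\end{equation}
The bracket is nonnegative by Lemma~\ref{lem:Jacobs}, and each summand $p_if(r_i)$ is nonnegative. Hence, if the increment is less than $\delta$, then $f(r_i)<\delta/p_{\min}$ for every $i$ with $p_i>0$, where $p_{\min}$ is the smallest positive $p_i$. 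This step also reproves Theorem~\ref{thm:Petzqc}'s inequality as a by-product.

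\textbf{Step 2 (sublevel sets shrink).} Since $f$ is continuous, vanishes only at $r=1$, is decreasing on $(0,1]$ and increasing on $[1,\infty)$, the sublevel sets $\{r: f(r)\leqslant t\}$ are intervals $[a(t),b(t)]$ with $a(t)\to 1$ and $b(t)\to 1$ as $t\to 0^+$. Given $\epsilon>0$, choose $t>0$ such that $b(t)-1<\epsilon$, and set $\delta\defeq p_{\min}t$. Then every $r_i<1+\epsilon$, so
\begin{equation}
\sum_i p_i(r_i-1)<\epsilon\sum_i p_i=\epsilon .
\end{equation}
Since $\delta$ depends only on $\epsilon$ and on the data $p$, and not on $\sigma$, the bound is uniform.

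\textbf{Main obstacle.} The only genuine difficulty is obtaining a decomposition in which the leftover term has a definite sign. Without Lemma~\ref{lem:Jacobs}, a small value of $\sum_i p_i\ln r_i$ could arise from cancellations between large and small $r_i$. The identity $\ln r=f(r)+(1-1/r)$ pairs exactly with the Jacobs-type bound $\sum_i p_i/r_i\leqslant 1$ to rule this out. The remaining steps are elementary one-variable analysis. One should also check that $q_i'>0$ whenever $p_i>0$, so that $\ln r_i$ is finite; this was established in the proof of Lemma~\ref{lem:Jacobs} via $(Cp)_i>0$.
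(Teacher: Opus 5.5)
Your proposal is correct and follows the paper's proof essentially step for step. It uses the same auxiliary function $f(r)=\ln r-1+1/r$, the same decomposition of the likelihood increment closed off by Lemma~\ref{lem:Jacobs}, and the same per-index bound $f(r_i)<\delta/p_{\min}$; your sublevel-set choice of $\delta$ amounts to the paper's explicit choice $\delta=p^*f(1+\epsilon)$.
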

	\begin{proof}
		Set $q_i=\Tr[\Pi_i\sigma]$ and $q_i^{\prime}=\Tr[\Pi_i\sigma_+]$. By Lemma~\ref{lem:Jacobs}, we have
		\begin{equation}\label{eq:bw}
			\sum\limits_i p_i \frac{q_i}{q_i^{\prime}} = \Tr[\GX(\sigma_+)\, \sigma] \leqslant 1.
		\end{equation}
		Define $f \!:\! (0,\infty)\to\mathbb{R}$ by $f(t)=t^{-1}-1+\ln t$. Since
		\begin{equation}
			f'(t)=\frac{t-1}{t^2}
		\end{equation}
		and $f(1)=0$, the function $f$ is nonnegative on $(0,\infty)$ and increasing on $[1,\infty)$. By \eqref{eq:bw},
		\begin{align}
			\sum\limits_i p_i f\!\left(\frac{q_i^{\prime}}{q_i}\right) &=
			\sum\limits_i p_i\frac{q_i}{q_i^{\prime}} - 1 + \sum\limits_i p_i \ln\!\left(\frac{q_i^{\prime}}{q_i}\right) \\
			&\leqslant \sum\limits_i p_i\ln\!\left(\frac{q_i^{\prime}}{q_i}\right) =
			\mathcal{L}_\text{qc}(\sigma_+)-\mathcal{L}_\text{qc}(\sigma). \nonumber
		\end{align}
		Given $\epsilon \!>\! 0$, choose $\delta \!=\! p^* f(1+\epsilon)$, where $p^* \!\defeq\! \min\{p_i:p_i>0\} \!>\! 0$. If $\mathcal{L}_\text{qc}(\sigma_+) \!-\! \mathcal{L}_\text{qc}(\sigma) \!<\! \delta$, then $\sum_i p_i f({q'_i}/q_i) \!<\! \delta.$ Thus, for every $i$ with $p_i>0$,
		\begin{equation}
			p_i f\!\left(\frac{q_i^{\prime}}{q_i}\right)
			\leqslant
			\sum\limits_j p_j f\!\left(\frac{q_j^{\prime}}{q_j}\right)
			<
			\delta
			\leqslant
			p_i f(1+\epsilon).
		\end{equation}
		Hence $f(q_i^{\prime}/q_i)<f(1+\epsilon)$. If $q_i^{\prime}/q_i<1$, then $q_i^{\prime}/q_i<1+\epsilon$ immediately. If $q_i^{\prime}/q_i\geqslant 1$, then monotonicity of $f$ on $[1,\infty)$ gives $q_i^{\prime}/q_i<1+\epsilon$. Therefore,
		\begin{equation}
			\Tr[\GX(\sigma)\, \sigma_+] = \sum\limits_i p_i\frac{q_i^{\prime}}{q_i} < 1 + \epsilon.
		\end{equation}
	\end{proof}
	
	\begin{lemma}\label{lem:norm}
		Let $\sigma$ be a $d\times d$ density matrix and let $A\in \mathbb{C}^{d\times d}$. Then $\|\sqrt{\sigma}A\sqrt{\sigma}\| \leqslant \|\sigma^{1/4}A\, \sigma^{1/4}\|$.
	\end{lemma}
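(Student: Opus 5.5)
Write $B\defeq\sigma^{1/4}A\,\sigma^{1/4}$, so that $\sqrt{\sigma}A\sqrt{\sigma}=\sigma^{1/4}B\,\sigma^{1/4}$. The claim then becomes $\|\sigma^{1/4}B\,\sigma^{1/4}\|\leqslant\|B\|$ for every $B\in\mathbb{C}^{d\times d}$. It therefore suffices to show that the superoperator $B\mapsto\sigma^{1/4}B\,\sigma^{1/4}$ is a contraction in the Hilbert--Schmidt norm.

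Next, diagonalize $\sigma=\sum_i s_i\ketbra{i}{i}$ with $0\leqslant s_i\leqslant 1$, since $\sigma\geqslant 0$ and $\Tr[\sigma]=1$. In this basis the map acts entrywise: $(\sigma^{1/4}B\,\sigma^{1/4})_{ij}=s_i^{1/4}s_j^{1/4}B_{ij}$. Hence
\begin{equation}
\|\sigma^{1/4}B\,\sigma^{1/4}\|^2=\sum_{i,j}(s_is_j)^{1/2}|B_{ij}|^2\leqslant\sum_{i,j}|B_{ij}|^2=\|B\|^2 ,
\end{equation}
where we used $(s_is_j)^{1/2}\leqslant 1$. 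Substituting back $B=\sigma^{1/4}A\,\sigma^{1/4}$ gives the lemma.

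An equivalent route avoids coordinates. Use submultiplicativity $\|XBY\|\leqslant\|X\|_\infty\|B\|\,\|Y\|_\infty$ together with $\|\sigma^{1/4}\|_\infty=\lambda_{\max}(\sigma)^{1/4}\leqslant 1$.

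No genuine obstacle arises. The only point needing care is to use that the eigenvalues of a density matrix are at most one; it is not necessary for $\sigma$ to be invertible. The argument works unchanged for rank-deficient $\sigma$, because zero eigenvalues only shrink the corresponding terms of the sum.
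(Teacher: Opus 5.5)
Your proof is correct and is essentially the paper's argument: the paper also works in the eigenbasis of $\sigma$ and uses $\lambda_i\lambda_j\leqslant\sqrt{\lambda_i\lambda_j}$ for eigenvalues in $[0,1]$, which is exactly your bound $(s_is_j)^{1/2}\leqslant 1$ after the substitution $B=\sigma^{1/4}A\,\sigma^{1/4}$. Your coordinate-free alternative via $\|XBY\|\leqslant\|X\|_\infty\|B\|\,\|Y\|_\infty$ with $\|\sigma^{1/4}\|_\infty\leqslant 1$ is also valid.
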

	\begin{proof}
		In the eigenbasis of $\sigma$, we have $\sigma=\sum\nolimits_i \lambda_i \ketbra{i}{i}$. Then
		\begin{align}
			\begin{aligned}
				\|\sqrt{\sigma} A\sqrt{\sigma}\|^2 &= \sum_{i,j} \lambda_i \lambda_j |A_{ij}|^2 \leqslant \sum_{i,j} \sqrt{\lambda_i \lambda_j} |A_{ij}|^2 \\
				&= \|\sigma^{1/4} A\, \sigma^{1/4}\|^2.
			\end{aligned}
		\end{align}
	\end{proof}
	
	\begin{lemma}\label{lem:strong}
		For every $\epsilon>0$, there exists $\delta>0$ such that $\mathcal{L}_\text{qc}(\sigma_+)-\mathcal{L}_\text{qc}(\sigma)<\delta$ implies $\|\sigma_+-\sigma\|<\epsilon$.
	\end{lemma}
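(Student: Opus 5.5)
The plan is to bound $\|\sigma_+-\sigma\|$ by a quantity controlled by $\Tr[\GX(\sigma)\,\sigma_+]-1$, and then invoke Lemma~\ref{lem:control}. In the quantum-to-classical case, write $G\defeq\GX(\sigma)=\sum_i (p_i/q_i)\Pi_i$. Then $\sigma_+=\sqrt{\sigma}\,G\sqrt{\sigma}$, and so
\begin{equation}
\sigma_+-\sigma=\sqrt{\sigma}\,(G-\mathds{1})\sqrt{\sigma}.
\end{equation}
By Lemma~\ref{lem:norm},
\begin{equation}
\|\sigma_+-\sigma\|^2\leqslant\|\sigma^{1/4}(G-\mathds{1})\sigma^{1/4}\|^2=\Tr\!\big[(G-\mathds{1})\sqrt{\sigma}\,(G-\mathds{1})\sqrt{\sigma}\big].
\end{equation}
The point is that the right-hand side is a KMS-type quadratic form, and it expands into traces we already control.

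Expanding the square gives
\begin{equation}
\Tr[G\sqrt{\sigma}G\sqrt{\sigma}]-2\Tr[G\sigma]+\Tr[\sigma].
\end{equation}
The three terms are evaluated as follows.
\begin{itemize}
\item Since $\Tr[G\sigma]=\sum_i p_i=1$ and $\Tr[\sigma]=1$, the last two terms together equal $-1$.
\item For the first term, $\Tr[G\sqrt{\sigma}G\sqrt{\sigma}]=\Tr[G\,\sigma_+]=\sum_i p_i q_i'/q_i$, which is exactly $\Tr[\GX(\sigma)\,\sigma_+]$.
\end{itemize}
This yields the identity
\begin{equation}
\|\sigma^{1/4}(G-\mathds{1})\sigma^{1/4}\|^2=\Tr[\GX(\sigma)\,\sigma_+]-1 .
\end{equation}
(This also reproves $\Tr[\GX(\sigma)\sigma_+]\geqslant 1$.) Terms with $p_i=0$ drop out of $G$ entirely, so no division issues arise. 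If $\sigma$ is not full rank, the computation still holds verbatim, because only $\sqrt{\sigma}$ and $\sigma^{1/4}$ appear and never their inverses; we only need $q_i>0$ for the $i$ with $p_i>0$, which is the standing assumption.

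Combining, $\|\sigma_+-\sigma\|^2\leqslant \Tr[\GX(\sigma)\sigma_+]-1$. Given $\epsilon>0$, apply Lemma~\ref{lem:control} with $\epsilon^2$ in place of $\epsilon$ to obtain $\delta>0$. Then $\mathcal{L}_{\text{qc}}(\sigma_+)-\mathcal{L}_{\text{qc}}(\sigma)<\delta$ implies $\|\sigma_+-\sigma\|^2<\epsilon^2$, as claimed.

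The only step requiring care is the expansion identity, specifically recognizing $\Tr[G\sqrt{\sigma}G\sqrt{\sigma}]$ as $\Tr[G\sigma_+]$. Everything else is direct from Lemmas~\ref{lem:norm} and~\ref{lem:control}, so I expect no real obstacle.
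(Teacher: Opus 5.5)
Your proposal is correct and matches the paper's proof essentially step for step. Both write $\sigma_+-\sigma=\sqrt{\sigma}\,(\GX(\sigma)-\mathds{1})\sqrt{\sigma}$, apply Lemma~\ref{lem:norm}, identify the resulting quadratic form with $\Tr[\GX(\sigma)\,\sigma_+]-1$, and then invoke Lemma~\ref{lem:control} with $\epsilon^2$. The paper gets that identification by rewriting the form as $\Tr[(\GX(\sigma)-\mathds{1})(\sigma_+-\sigma)]$ and using that $\sigma_+-\sigma$ is traceless, whereas you expand the square, which is the same computation.
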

	\begin{proof}
		Let $\epsilon>0$. By Lemmas \ref{lem:control} and \ref{lem:norm}, there exists $\delta>0$ such that
		\begin{align}
			\|\sigma_+-\sigma\|^2 &= \|\sqrt{\sigma}(\GX(\sigma)-\mathds{1})\sqrt{\sigma}\|^2 \nonumber \\
			&\leqslant \|\sigma^{1/4}(\GX(\sigma)-\mathds{1})\, \sigma^{1/4}\|^2 \nonumber \\
			&= \Tr[(\GX(\sigma)-\mathds{1}) \sqrt{\sigma}(\GX(\sigma)-\mathds{1})\sqrt{\sigma}] \nonumber \\
			&= \Tr[(\GX(\sigma)-\mathds{1})(\sigma_+-\sigma)] \nonumber \\
			&= \Tr[\GX(\sigma)(\sigma_+-\sigma)] \\
			&= \Tr[\GX(\sigma)\, \sigma_+] - 1 < \epsilon^2 \nonumber
		\end{align}
		whenever
		$\mathcal{L}_\text{qc}(\sigma_+)-\mathcal{L}_\text{qc}(\sigma)<\delta$. For the penultimate equality, we used the fact that $\sigma_+-\sigma$ is traceless.
	\end{proof}
	It remains to prove monotonicity and the equality condition; this is the content of Theorem~\ref{thm:equality} below, whose proof uses Lemma~\ref{lem:strong}.
	
	\begin{theorem}\label{thm:equality}
		$\mathcal{L}_\text{qc}(\sigma_+) \geqslant \mathcal{L}_\text{qc}(\sigma)$ with equality iff $\sigma_+=\sigma$.
	\end{theorem}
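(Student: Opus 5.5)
The plan is to derive monotonicity directly from the Jacobs-type inequality of Lemma~\ref{lem:Jacobs} via Jensen's inequality, and then to get the equality case from Lemma~\ref{lem:strong}.

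\textbf{Well-definedness.} First I check that every term in $\mathcal{L}_\text{qc}(\sigma_+)=\sum_i p_i\ln q_i'$ with $p_i>0$ is finite. This follows from the bound already used in the proof of Lemma~\ref{lem:Jacobs}:
\begin{equation}
q_i'=(Cp)_i\geqslant C_{ii}p_i>0 .
\end{equation}
Hence all logarithms appearing below are of positive numbers, and terms with $p_i=0$ are omitted as agreed.

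\textbf{Monotonicity.} I write the increment as
\begin{equation}
\mathcal{L}_\text{qc}(\sigma_+)-\mathcal{L}_\text{qc}(\sigma)=\sum_{i:p_i>0}p_i\ln\frac{q_i'}{q_i}=-\sum_{i:p_i>0}p_i\ln\frac{q_i}{q_i'} .
\end{equation}
Since $\{p_i\}$ is a probability vector and $\ln$ is concave, Jensen's inequality gives
\begin{equation}
\sum_i p_i\ln\frac{q_i}{q_i'}\leqslant \ln\sum_i p_i\frac{q_i}{q_i'} .
\end{equation}
By Lemma~\ref{lem:Jacobs}, $\sum_i p_i q_i/q_i'=\Tr[\GX(\sigma_+)\,\sigma]\leqslant 1$. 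The right-hand side above is therefore $\leqslant 0$, and so $\mathcal{L}_\text{qc}(\sigma_+)\geqslant\mathcal{L}_\text{qc}(\sigma)$.

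\textbf{Equality case.} If $\sigma_+=\sigma$, equality is trivial. Conversely, suppose $\mathcal{L}_\text{qc}(\sigma_+)-\mathcal{L}_\text{qc}(\sigma)=0$. Then this difference is smaller than every $\delta>0$. Lemma~\ref{lem:strong} therefore yields $\|\sigma_+-\sigma\|<\epsilon$ for every $\epsilon>0$, so $\sigma_+=\sigma$.

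One could alternatively run the Jensen equality analysis. It shows $q_i'=q_i$ for all $i$ with $p_i>0$, but that alone does not obviously force $\sigma_+=\sigma$. This is exactly why Lemma~\ref{lem:strong} is needed: it upgrades vanishing likelihood increase to vanishing $\|\sigma_+-\sigma\|$ through the identity $\|\sigma_+-\sigma\|^2\leqslant\Tr[\GX(\sigma)\sigma_+]-1$.

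\textbf{Main obstacle.} The substantive difficulty, the Perron--Frobenius-type bound $\Tr[\GX(\sigma_+)\sigma]\leqslant1$, is already handled in Lemma~\ref{lem:Jacobs}. What remains is checking the positivity of the $q_i'$ and the bookkeeping for indices with $p_i=0$, which I expect to be routine.
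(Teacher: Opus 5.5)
Your proof is correct. The equality case is handled exactly as in the paper, via Lemma~\ref{lem:strong}. The difference is in how you obtain the inequality itself.

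The paper never proves $\mathcal{L}_\text{qc}(\sigma_+)\geqslant\mathcal{L}_\text{qc}(\sigma)$ directly. It assumes the increment is $\leqslant 0$, notes that it is then below every $\delta>0$, and lets Lemma~\ref{lem:strong} force $\sigma_+=\sigma$. So a strict decrease is impossible, and equality forces a fixed point: one lemma settles both claims at once.

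You instead apply Jensen's inequality to Lemma~\ref{lem:Jacobs}, which gives the explicit quantitative bound
\begin{equation*}
\mathcal{L}_\text{qc}(\sigma_+)-\mathcal{L}_\text{qc}(\sigma)\geqslant-\ln\Tr[G_{\!X}(\sigma_+)\,\sigma]\geqslant 0 .
\end{equation*}
This is sharper than the bound $1-\Tr[G_{\!X}(\sigma_+)\,\sigma]$ implicit in the proof of Lemma~\ref{lem:control}, which follows from $f\geqslant 0$ and which the paper does not exploit for monotonicity. It is also exactly the $\eta=1$ lower bound of Theorem~\ref{minimum_increase} specialized to the quantum-to-classical channel, since there $\Tr[X\,\Omega^{-1}_{\mathcal{E}(\sigma_+)}(Y)]=\sum_i p_iq_i/q_i'$. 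Your route obtains that bound elementarily, without Lieb's inequality. It also makes clear that Lemma~\ref{lem:Jacobs} is precisely the statement that the argument of this logarithm is at most one.

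You still need Lemma~\ref{lem:strong} for the equality case, so you save no machinery overall. Still, your monotonicity step is more direct and more informative than the paper's contradiction argument. Your remark about the Jensen equality analysis is also right: it gives $q_i'=q_i$, hence $\Tr[G_{\!X}(\sigma)\,\sigma_+]=1$, and the identity inside Lemma~\ref{lem:strong} then finishes.
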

	\begin{proof}
		Suppose $\mathcal{L}_\text{qc}(\sigma_+)-\mathcal{L}_\text{qc}(\sigma) \leqslant 0$. Then $\mathcal{L}_\text{qc}(\sigma_+) - \mathcal{L}_\text{qc}(\sigma) < \delta$ for any $\delta>0$. By Lemma~\ref{lem:strong}, $\|\sigma_+-\sigma\|=0$ and $\sigma_+=\sigma$. Thus, it is not possible that $\mathcal{L}_\text{qc}(\sigma_+) < \mathcal{L}_\text{qc}(\sigma)$ and $\mathcal{L}_\text{qc}(\sigma_+) = \mathcal{L}_\text{qc}(\sigma)$ implies $\sigma_+=\sigma$.
	\end{proof}
	Together with the explicit formula for the Petz update established above, this proves Theorem~\ref{thm:Petzqc}.
	
	\begin{proof}[Proof of Theorem~\ref{thm:Petz.converg}]
		By Theorem~\ref{thm:equality}, $\mathcal{L}_{\text{qc}}(\sigma_k)$ is nondecreasing. By Lemma~\ref{lem:qc.unique.MLE}, $\mathcal{L}_{\text{qc}}$ has a maximizer, so $\mathcal{L}_{\text{qc}}(\sigma_k)$ is bounded above. Hence $\lim_{k\to\infty}\mathcal{L}_{\text{qc}}(\sigma_k) \eqdef \ell_\infty$ exists. Let $\mathcal{C}\defeq \overline{\{\sigma_k\}}$. Since
		$\mathcal{L}_{\text{qc}}(\sigma_k)\geqslant \mathcal{L}_{\text{qc}}(\sigma_0)>-\infty$, the probabilities $\Tr[\Pi_i\sigma_k]$ are bounded away from zero $\forall\, i$. Thus every point of $\mathcal{C}$ lies in the domain where the Petz update and $\mathcal{L}_{\text{qc}}$ are continuous. Define
		\begin{equation}
			F(\sigma) \defeq \|\sigma_+-\sigma\| + |\mathcal{L}_{\text{qc}}(\sigma)-\ell_\infty|
		\end{equation}
		on this domain, and set $\mathcal{A}\defeq\mathcal{C}\cap F^{-1}(0)$. By Lemma~\ref{lem:strong} and the definition of $\ell_\infty$, $F(\sigma_k)\to0$. Thus, by compactness of $\mathcal{C}$ and continuity of $F$, there exists $\sigma\in\mathcal{C}$ such that $F(\sigma)=0$, and $\mathcal{A}$ is nonempty.
		
		\textit{Case 1}. Suppose there exists an invertible state $\sigma \!\in\! \mathcal{A}$. Since $F(\sigma) \!=\! 0$, we have $\sigma_+ \!=\! \sigma$. As $\sigma$ is invertible, this fixed-point equation implies $\GX(\sigma) \!=\! \mathds{1}$. Hence
		\begin{equation}
			D\mathcal{L}_{\text{qc}}(\sigma)[\Delta]=0
		\end{equation}
		for every traceless Hermitian $\Delta$, and by concavity $\sigma$ maximizes $\mathcal{L}_{\text{qc}}$. By Lemma~\ref{lem:qc.unique.MLE}, $\sigma \!=\! \hat{\sigma}_{\text{MLE}}$. Since $\mathcal{L}_{\text{qc}}(\sigma) \!=\! \ell_\infty$, every $\omega \!\in\! \mathcal{A}$ also satisfies
		\begin{equation}
			\mathcal{L}_{\text{qc}}(\omega) = \ell_\infty = \mathcal{L}_{\text{qc}}(\hat{\sigma}_{\text{MLE}}),
		\end{equation}
		and is therefore a maximizer. Again by Lemma~\ref{lem:qc.unique.MLE}, $\omega=\hat{\sigma}_{\text{MLE}}$, so $\mathcal{A}=\{\hat{\sigma}_{\text{MLE}}\}$.
		
		It follows that $\sigma_k\to\hat{\sigma}_{\text{MLE}}$. If not, then there exists an open set $U$ containing $\hat{\sigma}_{\text{MLE}}$ and a subsequence such that $\sigma_{k_i} \!\notin\! U \;\forall\, i$. Then the closure $\mathcal{B}$ of $\{\sigma_{k_i}\}$ is a compact subset of $\mathcal{C}$ and does not contain $\hat{\sigma}_{\text{MLE}}$. Thus $\mathcal{B}\cap\mathcal{A}=\emptyset$, and by compactness of $\mathcal{B}$ and continuity of $F$ on $\mathcal{C}$,
		\begin{equation}
			F(\sigma_{k_i})\geqslant \min_{\sigma\in\mathcal{B}}F(\sigma)>0,
		\end{equation}
		contradicting $F(\sigma_k)\to0$.
		
		\textit{Case 2}. Suppose $\det(\sigma) \!=\! 0 \;\forall\,\sigma \!\in\! \mathcal{A}$. Then $\det(\sigma_k) \!\to\! 0$. If not, then there exists $\epsilon \!>\! 0$ and a subsequence $\sigma_{k_i}$ such that $\det(\sigma_{k_i}) \!>\! \epsilon \;\forall\, i$. Then the closure $\mathcal{B}$ of $\{\sigma_{k_i}\}$ is a compact subset of $\mathcal{C}$ and, by continuity of the determinant, does not intersect $\mathcal{A}$. Thus, by compactness of $\mathcal{B}$ and continuity of $F$ on $\mathcal{C}$,
		\begin{equation}
			F(\sigma_{k_i})\geqslant \min_{\sigma\in\mathcal{B}}F(\sigma)>0,
		\end{equation}
		contradicting $F(\sigma_k)\to0$.
		
		This proves the dichotomy. It remains to prove the level set assertion. Suppose that the upper level set
		\begin{equation}
			K = \{\sigma:\mathcal{L}_{\text{qc}}(\sigma) \geqslant \mathcal{L}_{\text{qc}}(\sigma_0)\}
		\end{equation}
		contains only invertible states. By monotonicity, $\sigma_k \!\in\! K \;\forall\, k$. Since $K$ is compact and $K\cap\det^{-1}(0)=\emptyset$,
		\begin{equation}
			\min_{\sigma\in K}\det(\sigma)>0.
		\end{equation}
		Thus the alternative $\det(\sigma_k)\to0$ is impossible. By the dichotomy just proved, $\sigma_k\to\hat{\sigma}_{\text{MLE}}$.
	\end{proof}
	
	Note that this theorem extends to any number of measurements by the construction from App.~\ref{app:subsec:block.channel}.
	
	We assumed that all observed probabilities $p_i^j$ are positive. In quantum state tomography, this can be ensured by additive smoothing
	\begin{equation}
		\hat{p}_i=\frac{N_i+\alpha}{N+\alpha m},
	\end{equation}
	where $N_i$ is the number of times outcome $i$ is observed, $N$ is the total number of measurements, $m$ is the number of possible measurement outcomes, and $\alpha>0$ is a smoothing parameter.
	
	\subsection{Weighted channel data as a single block channel} \label{app:subsec:block.channel}
	A weighted likelihood for several channels can be written as a single-channel likelihood on a direct-sum output space. Define
	\begin{equation}
		\mathcal{E}_{\oplus}(\sigma) = \bigoplus_j w_j\,\mathcal{E}_j(\sigma),
		\quad
		X_{\oplus} = \bigoplus_j w_j\,X_j,
		\quad
		\sum_j w_j = 1,
	\end{equation}
	and $w_j\geqslant 0$.
	Then, up to the constant $\sum_j w_j \Tr[X_j]\ln w_j$,
	\begin{equation}
		\Tr[X_{\oplus}\ln\mathcal{E}_{\oplus}(\sigma)]
		=
		\sum_j w_j \Tr[X_j\ln\mathcal{E}_j(\sigma)].
	\end{equation}
	Moreover, since $\Omega^{-1}_{w_jY_j}(w_jX_j)=\Omega^{-1}_{Y_j}(X_j)$, the KM update for $\mathcal{E}_{\oplus}$ is exactly
	\begin{equation}
		\mathcal{R}^{\text{KM}}_{\sigma,\mathcal{E}_{\oplus}}(X_{\oplus})
		=
		\sum_j w_j\,\mathcal{R}^{\text{KM}}_{\sigma,\mathcal{E}_j}(X_j).
	\end{equation}
	
	\paragraph*{Example: quantum-to-classical channel.} \mbox{} \\
	For $\mathcal{E}_j(\sigma) = \sum_i \Tr \big[\Pi_i^{(j)}\sigma\big] \ketbra{j,i}{j,i}$, the block channel $\mathcal{E}_{\oplus}$ is the quantum-to-classical channel associated with the single POVM $\{w_j\Pi_i^{(j)}\}_{i,j}$ and the block data
	\begin{equation}
		X_{\oplus} = \sum_{j,i} w_j\hat{p}_i^{(j)} \ketbra{j,i}{j,i}.
	\end{equation}
	Thus the weighted Petz update is the ordinary Petz update for this unified POVM, and the weighted KM identity above reduces to the same construction in the commuting case.
	
	\subsection{Classical and quantum Bayes' theorem and monotonicity violation by the Petz map}\label{app:subsec:quantum.Bayes.theorem}
	Let $X$ and $Y$ be random variables taking values from finite sets $\mathcal{X}$ and $\mathcal{Y}$ respectively. We view the random variable $X$ as representing the state of a classical system, while $Y$ is the result of an observation on the system. The observation can be modeled as a ``forward'' process $\mathcal{E}$ corresponding to the stochastic matrix with entries given by the conditional probabilities $p(y|x)$ for $x \!\in\! \mathcal{X}$ and $y \!\in\! \mathcal{Y}$. Given a prior distribution $\sigma(x)$ representing an agent's belief about $X$, we apply the forwards process to obtain the distribution
	\begin{equation}
		\mathcal{E}(\sigma)(y) = \sum_{x\in\mathcal{X}} p(y|x)\, \sigma(x).
	\end{equation}
	If an observation $y$ is made, Bayes' rule prescribes the following update to the prior: 
	\begin{equation}\label{eq:bayes}
		\sigma_y(x) = \frac{p(y|x)\, \sigma(x)}{\sum_{x\in\mathcal{X}}\, p(y|x)\, \sigma(x)}.
	\end{equation}
	Jeffrey's rule extends Bayes' to situations with ``soft evidence'', where the observation is a distribution $\tau(y)$. In this case, the updated state is
	\begin{equation}
		\sigma_+(x) = \sum_{y\in\mathcal{Y}} \sigma_y(x)\, \tau(y).
		\label{eq:jeffrey}
	\end{equation}
	The expression above has the form $\sigma_+=\mathcal{R}_{\sigma,\mathcal{E}}(\tau)$, where $\mathcal{R}_{\sigma,\mathcal{E}}$ is the Bayesian reverse process determined by the prior $\sigma$ and the forward channel $\mathcal{E}$. Bayes' rule \eqref{eq:bayes} is recovered when $\tau$ is a delta distribution. Jeffrey's rule follows from Jeffrey's probability kinematics \cite{Jeffrey:Book:1990}, is equivalent to Pearl's virtual-evidence update \cite{Pearl:Book:1988,Chan.Darwiche:ArtifIntell:2005}, and can also be derived from a minimum-change principle \cite{Bai.Buscemi.Scarani:PhysRevLett:2025}.
	
	Viewing quantum mechanics as the noncommutative generalization of classical probability theory, one naturally asks for a quantum Bayes rule: a CPTP map $\mathcal{R}$ extending \eqref{eq:jeffrey} to quantum states and channels. Several inequivalent proposals exist \cite{Parzygnat.Russo:LinearAlgebrAppl:2022,Surace.Scandi:Quantum:2023,Parzygnat.Buscemi:Quantum:2023,Bai.Buscemi.Scarani:PhysRevLett:2025}, reflecting the fact that no single quantum extension has the same universal status as classical Bayes' rule. In the relative-entropy setting, the canonical object is the \textit{Petz recovery map} \eqref{def:Petz.update}, which characterizes equality in the data-processing inequality \cite{Petz:CommunMathPhys:1986,Petz:QuartJMathOxfordSer:1988,Petz:RevMathPhys:2003}
	\begin{equation}
		D(\mathcal{E}(\rho)\|\mathcal{E}(\sigma)) \leqslant D(\rho\|\sigma).
		\label{eq:data.processing.inequality}
	\end{equation}
	For faithful $\sigma$ and compatible supports, equality holds precisely when $\rho$ is recovered from its channel output by the Petz map, $\mathcal{R}_{\sigma,\mathcal{E}}(\mathcal{E}(\rho))=\rho$.
	
	Jacobs \cite{Jacobs:ElectronProcTheorComputSci.2021} showed that Jeffrey's update moves the predicted observation toward the soft evidence in a precise relative-entropy sense: for the Kullback--Leibler divergence \cite{Kullback.Leibler:AnnMathStat:1951}
	\begin{equation}
		D_{\text{KL}}(p\|q) \defeq \sum_z p(z)[\ln p(z)-\ln q(z)],
	\end{equation}
	one has
	\begin{equation}
		D_{\text{KL}}(\tau\|\mathcal{E}(\sigma_+)) 
		\leqslant
		D_{\text{KL}}(\tau\|\mathcal{E}(\sigma)),
	\end{equation}
	where $\tau$, $\mathcal{E}(\sigma_+)$, and $\mathcal{E}(\sigma)$ are classical distributions.
	
	Theorem~\ref{thm:Petzqc} in the main text generalizes the Petz--Jeffrey monotonicity statement for quantum-to-classical channels~\eqref{eq:qc.channel}, whose outputs are mutually commuting, even though their quantum inputs need not be. This Theorem can be rewritten as,
	\begin{equation}			D\!\left(X\,\middle\|\,\mathcal{E}_{\text{qc}}(\sigma_+)\right)
		\leqslant
		D\!\left(X\,\middle\|\,\mathcal{E}_{\text{qc}}(\sigma)\right),
	\end{equation}
	with equality iff $\sigma_+=\sigma$, and where $\sigma_+
	=
	\mathcal{R}_{\sigma,\mathcal{E}_{\text{qc}}}(X)$.
	
	However, as mentioned in the main text, the Petz recovery map does not satisfy such a monotonicity theorem in general. The obstruction is the noncommutative output differential: when $[\mathcal{E}(\sigma),X]\neq0$, one generally has
	\begin{equation}
		\Omega_{\mathcal{E}(\sigma)}^{-1}\!(X) \neq \mathcal{E}(\sigma)^{-1/2}X\,\mathcal{E}(\sigma)^{-1/2}.
	\end{equation}
	A concrete example which exhibits violation of this inequality is
	\begin{align}
		\sigma &=
		\begin{pmatrix}
			\frac{2}{3} & 0 \\
			0 & \frac{1}{3}
		\end{pmatrix}\!,\ 
		X =
		\begin{pmatrix}
			0.99 & 0 \\
			0 & 0.01
		\end{pmatrix}\!,\ 
		\mathcal{E}(\sigma) = \sum_{j=1}^{2} K_j \sigma K_j^\dagger,\nonumber \\
		K_1 &\!=\!
		\begin{pmatrix}
			-0.251899 \!+\! 0.447484\,i &
			\;\;0.284823 \!-\! 0.031868\,i
			\\
			0.308714 \!-\! 0.635360\,i &
			-0.388139 \!+\! 0.067281\,i
		\end{pmatrix}\!,\nonumber \\
		K_2 &\!=\!
		\begin{pmatrix}
			-0.215046 \!+\! 0.189193\,i &
			-0.500736 \!-\! 0.114805\,i
			\\
			0.279643 \!-\! 0.277632\,i &
			0.696647 \!+\! 0.115962\,i
		\end{pmatrix}\!.
	\end{align}
	For the Petz recovery map $\mathcal{R}_{\sigma,\mathcal{E}}$, we obtain 
	\begin{equation}
		D\!\left(X\,\middle\|\,\mathcal{E}(\sigma)\right)-D\!\left(X\,\middle\|\,\mathcal{E}(\mathcal{R}_{\sigma,\mathcal{E}}(X))\right)=-2.022036,
	\end{equation}
	while for the KM update $\mathcal{R}^{\text{KM}}_{\sigma,\mathcal{E}}$, we have
	\begin{equation}
		D\!\left(X\,\middle\|\,\mathcal{E}(\sigma)\right)-D\!\left(X\,\middle\|\,\mathcal{E}(\mathcal{R}^{\text{KM}}_{\sigma,\mathcal{E}}(X))\right)=0.069089.
	\end{equation}  
	Notice that in this case $X$ and $Y=\mathcal{E}(\sigma)$ are highly non-commuting; the Frobenius norm is equal to $\norm{[X,Y]}_F=0.659206$, which is  close to the maximum given by $1/\sqrt{2}$. This example was found numerically using random instances further refined by simulated annealing.
	
	This is consistent with Theorem~1 of Ref.~\cite{Bai.Buscemi.Scarani:PhysRevLett:2025}, where the Petz map is recovered as the solution of the minimum-change problem in the output-commuting case $[\mathcal{E}(\sigma),X]=0$. The corresponding noncommutative likelihood-gradient correction is the inverse KM action derived in Sec.~\ref{sec:KM.update}.
	
	\subsection{Expressions for the quantum-to-classical channel}
	The quantum-to-classical channel is defined as
	\begin{equation}
		\E_{\text{qc}}(\sigma) = \sum_i \Tr[\Pi_i \sigma] \ketbra{i}{i}.
		\label{app:eq:qc.channel}
	\end{equation}
	Using $\Pi_i=K_i^\dag K_i$, we can rewrite it in terms of its Kraus decomposition as
	\begin{equation}
		\E_{\text{qc}}(\sigma) = \sum_{i,j} \ketbra{i}{j} K_i \sigma K_i^\dag\ketbra{j}{i},
	\end{equation}
	with adjoint
	\begin{equation}
		\E_{\text{qc}}^\dag(Z) = \sum_{i,j} K_i^\dag\ketbra{j}{i} Z \ketbra{i}{j} K_i.
	\end{equation}
	
	The inverse KM superoperator is 
	\begin{equation}
		\Omega_{\E_{\text{qc}}(\sigma)}^{-1}\!(X) = \int_{\mathbb{R}} \beta_0(t)\, \E_{\text{qc}}(\sigma)^{-(1+it)/2}X\,\E_{\text{qc}}(\sigma)^{-(1-it)/2}\,dt.
	\end{equation}
	If $X$ is diagonal in the $\{\ket{i}\}$ basis, e.g., because it has been produced by the quantum-to-classical channel, then $X=\sum_i \hat{p}_i\ketbra{i}{i}$, $\E_{\text{qc}}(\sigma)$ and $X$ commute, and
	\begin{equation}
		\Omega_{\E_{\text{qc}}(\sigma)}^{-1}\!(X) = \E_{\text{qc}}(\sigma)^{-1/2}X\,\E_{\text{qc}}(\sigma)^{-1/2}=\sum_i \frac{\hat{p}_i}{ \Tr[\Pi_i \sigma]}\ketbra{i}{i}.
	\end{equation}
	This means that the Hilbert-Schmidt gradient is
	\begin{align}
		\GX(\sigma) &= \mathcal{E}^{\dagger}\!\big[\Omega_{\E_{\text{qc}}(\sigma)}^{-1}\!(X)\big]
		= \sum_{i,j} K_i^\dag\ket{j} \frac{\hat{p}_i}{ \Tr[\Pi_i \sigma]} \bra{j} K_i \nonumber \\
		&= \sum_{i} \frac{\hat{p}_i}{ \Tr[\Pi_i \sigma]} \Pi_i,
	\end{align}
	and the gradient in the inverse square-root metric \eqref{def:inverse.square.root.metric} is
	\begin{equation}
		G_{\!X}^\sigma(\sigma)= \sqrt{\sigma}\, \GX(\sigma) \sqrt{\sigma}=\sum_{i} \frac{\hat{p}_i}{ \Tr[\Pi_i \sigma]} \sqrt{\sigma}\, \Pi_i \sqrt{\sigma}.
	\end{equation}
	This leads to the Riemmanian gradient,
	\begin{equation}
		\Delta = G_{\!X}^\sigma(\sigma)-\sigma
		= \sqrt{\sigma}\Big(\sum_{i} \frac{\hat{p}_i}{ \Tr[\Pi_i \sigma]} \Pi_i - I \Big) \sqrt{\sigma}.
	\end{equation}
	
	If $X=\E_{\text{qc}}(\rho)$ has been computed directly as an output of the channel for a state $\rho$, then $\hat{p}_i\equiv p_i=\Tr[\Pi_i\rho]$, and the KM update applied to $\E_{\text{qc}}(\rho)$ equals the Petz recovery map,
	\begin{equation}
		\mathcal{R}^{\text{KM}}_{\sigma,\E_{\text{qc}}}(\E_{\text{qc}}(\rho)) 
		= \mathcal{R}_{\sigma,\E_{\text{qc}}}(\E_{\text{qc}}(\rho)) 
		= \sum_{i} \frac{p_i}{\Tr[\Pi_i \sigma]} \sqrt{\sigma}\, \Pi_i \sqrt{\sigma}.
	\end{equation}
	
	Inserting the quantum-to-classical channel and $X$ into the generalized log-likelihood, we obtain
	\begin{equation}
		\begin{split}
			\mathcal{L}_X(\sigma)
			&= \Tr\!\big[X \ln \E_{\text{qc}}(\sigma)\big] \\
			&= \Tr\!\bigg[\sum_i \hat{p}_i\ketbra{i}{i} \ln \Big(\sum_i \Tr[\Pi_i \sigma] \ketbra{i}{i}\Big)\bigg] \\
			&=\sum_i \hat{p}_i \ln \Tr[\Pi_i \sigma].
		\end{split}
	\end{equation}
	This is the standard log-likelihood.
	
	For a line search update
	\begin{equation}
		\sigma_\eta = \sigma + \eta\, \Delta,
		\qquad 
		f(\eta) \defeq \mathcal{L}_X(\sigma_\eta),
	\end{equation}
	the quantum-to-classical channel gives
	\begin{align}
		\begin{aligned}
			f'(0) &= \textsl{g}_\sigma(\Delta,\Delta) = \|\Delta\|_{\textsl{g}_\sigma}^2\\ &=\Tr[\left(\bigg(\sum_{i} \frac{\hat{p}_i}{ \Tr[\Pi_i \sigma]} \Pi_i - I \bigg)\sqrt{\sigma}\right)^2] .
		\end{aligned}		
	\end{align}
	This means an approximate increase in the log-likelihood when performing a line search update,
	\begin{equation}    
		\mathcal{L}_X(\sigma_\eta)\approx \mathcal{L}_X(\sigma) + \eta\, f'(0).
	\end{equation}
	
\end{document}